\pgfplotsset{compat=1.10}
\newtheorem{example}{Example}
\newtheorem{theorem}{Theorem}
\newtheorem{proposition}{Proposition}
\newtheorem{lemma}{Lemma}
\newtheorem{assumption}{Assumption}
\def\EE{\mathbb{E}}
\def\PP{\mathbb{P}}
\def\RR{\mathbb{R}}
\def\Bcal{\mathcal B}
\def\Acal{\mathcal A}
\def\bb{\bm b}
\tikzstyle{block} = [rectangle, draw, fill=white!80!black, line width=2pt,
\tikzstyle{line} = [draw, -latex',line width=2pt]
\begin{document}
\title{Robust Differential Abundance Test \\in Compositional Data}
\author{Shulei Wang\\ University of Illinois at Urbana-Champaign}
\date{(\today)}

\maketitle

\footnotetext[1]{
	Address for Correspondence: Department of Statistics, University of Illinois at Urbana-Champaign, 725 South Wright Street, 
	Champaign, IL 61820 (Email: shuleiw@illinois.edu).}

\begin{abstract}
Differential abundance tests in compositional data are essential and fundamental tasks in various biomedical applications, such as single-cell, bulk RNA-seq, and microbiome data analysis. However, because of the compositional constraint and the prevalence of zero counts in the data, differential abundance analysis in compositional data remains a complicated and unsolved statistical problem. This study introduces a new differential abundance test, the robust differential abundance (RDB) test, to address these challenges. Compared with existing methods, the RDB test is simple and computationally efficient, is robust to prevalent zero counts in compositional datasets, can take the data's compositional nature into account, and has a theoretical guarantee of controlling false discoveries in a general setting. Furthermore,  in the presence of observed covariates, the RDB test can work with the covariate balancing techniques to remove the potential confounding effects and draw reliable conclusions. Finally, we apply the new test to several numerical examples using simulated and real datasets to demonstrate its practical merits.
\end{abstract}



\newpage
\section{Introduction}
\label{sc:intro}

Compositional data, where all components are non-negative, and their sum is one, naturally arises in a wide range of modern scientific applications, including human microbiome studies, nutritional science, genomics studies, and geochemistry. An essential and fundamental task in these scientific applications is differential abundance testing, aiming to identify a set of differential components across experimental conditions  \citep{robinson2010edger,law2014voom,kharchenko2014bayesian,fernandes2014unifying,love2014moderated,mandal2015analysis,risso2018general,morton2019establishing}. However, differential abundance analysis in compositional data is a complicated and challenging statistical problem because of the constant-sum constraint. Applying standard statistical analysis directly to compositional data ignores data's compositional nature, and hence can result in an ill-defined statistical hypothesis and false-positive scientific discovery \citep{vandeputte2017quantitative,weiss2017normalization,morton2019establishing,brill2019testing,lin2020analysis}.

To account for the constant-sum constraint and gain reliable scientific insights from compositional data, studies propose many differential abundance testing methods for different types of biomedical compositional datasets, including single-cell, bulk RNA-seq, and microbiome data \citep{robinson2010edger,paulson2013differential,law2014voom,kharchenko2014bayesian,fernandes2014unifying,love2014moderated,mandal2015analysis,le2016mixmc,butler2018integrating,risso2018general,morton2019establishing,lin2020analysis,martin2020modeling}. These methods have been very successful in many applications and helped make significant progress in various scientific fields. However, most of these existing methods require taking a ratio (or log-ratio) of two proportions, and thus, implicitly assume that each component's proportions are strictly larger than zero. Unfortunately, zero counts are prevalent in some biomedical compositional data sets, e.g., microbiome data \citep{cao2020multisample}, because of insufficient sequence depth (also called technical zero) or biological variation (also called structural zero). Using pseudo-counts, that is replacing zero counts with a small positive number, could alleviate this zero counts problem. However, one usually does not know how to choose a pseudo-count for a given data set, and this strategy could potentially lead to inflated false-positive discoveries, as noted by \cite{brill2019testing}. 

These challenges makes one wonder if a differential abundance test exists for compositional data with the following properties: (i) it is simple, easy to implement, and computationally efficient, (ii) is robust to prevalent zero counts in the compositional data set eliminating the need of an extra step to handle the zero counts problem, (iii) can take the compositional nature of data into account, and (iv) has a theoretical guarantee of controlling false discoveries in a general setting. This study shows this is feasible by developing a new robust differential abundance test: the RDB test.

Specifically, motivated by recent studies of \cite{morton2019establishing} and \cite{brill2019testing}, the differential abundance analysis is formulated as a reference-based hypothesis testing problem. That is, the hypothesis is defined with respect to a set of unknown reference components, and its identification conditions are studied. Our investigation shows that under such a reference-based hypothesis, the differential components can be recovered completely by a series of directional comparisons on renormalized proportions in the noiseless case. Inspired by this observation, we introduce a new iterative method to identify the differential components in compositional data. Unlike existing methods, the new method only needs to evaluate standard two-sample $t$ test statistics (also known as Welch's $t$ test) on renormalized proportions in each iteration, so that we do not need to worry about the zero counts problem anymore. Owing to the reference-based hypothesis, the new method follows a two-stage procedure to identify the differential components in each iteration: (i) integrate all components' $t$ test statistics to determine the testing direction; (ii) identify the differential components by a component-wise directional two-sample test. The idea of such a two-stage strategy is inspired by the perspective of empirical Bayesian analysis \citep{robbins1951asymptotically,efron2012large} and has been used in other large-scale simultaneous hypothesis testing problems. For example, \cite{efron2004large} proposes empirically estimating the mean and variance of null distribution before conducting large-scale hypothesis testing. Unlike these existing methods, we iteratively apply such a two-stage strategy. To demonstrate the merit of the newly proposed method, we study the new RDB test's theoretical properties. More concretely, we show that the family-wise error rate (FWER) can be controlled at the $\alpha$ level asymptotically in a general setting, and the differential components can be identified completely in a large probability when the signal-to-noise ratio is large enough. Note that the theoretical analysis itself might be of interest, because of the challenges in handling the dependency among test statistics caused by renormalization and iterative procedure.

As more observational studies collect compositional datasets, another difficulty in differential abundance analysis is reducing the bias introduced by the potential confounding covariates. As the RDB test comprises a series of standard two-sample $t$ tests, it can work with most covariate balancing techniques designed for the potential outcome framework \citep{imbens2015causal,rosenbaum1983central}. In particular, we combine the newly proposed RDB test with weighting methods \citep{rosenbaum1987model,robins2000marginal,imai2014covariate,chan2016globally} to remove the potential confounding effect of observed covariates in this study. Furthermore, the idea of this iterative method in the RDB test can also be extended to control the false discovery rate and test the linear association with a continuous outcome. An R package of RDB is available at https://github.com/lakerwsl/RDB.



\section{Model and Reference-based Hypothesis}
\label{sc:model}

\subsection{Model and Notation}

Suppose we wish to compare the abundance of $d$ different components, such as $d$ different taxa. The absolute abundance of these $d$ different components in a sample can be represented by a non-negative vector $A=(A_1,\ldots,A_d)$, where $A_i\ge 0$ is the absolute abundance of the $i$th component. Instead of directly observing the absolute abundance $A$, we only collect count data $N=(N_1,\ldots, N_d)$ in many real applications,  where $N_i$ is the number of the $i$th component we observe. In this study, given the absolute abundance $A$, we assume the count data are drawn from a multinomial model
\begin{equation}
	\label{eq:model}
	N|P, N^\ast\sim {\rm multinomial}(N^\ast, P),
\end{equation}
where $P=(P_1,\ldots, P_d)$ is the relative abundance of these $d$ components and $N^\ast=\sum_{i} N_i$ is the total counts we observe in a sample. The relative abundance $P$ is defined as $P_i={A_{i} /A^\ast}$, where $A^\ast=\sum_{i=1}^d A_{i}$. As the total counts $N^\ast$ is usually not proportional to absolute abundance, we normalize the count data as a compositional vector, that is, $\hat{P}=(\hat{P}_1,\ldots, \hat{P}_d)$, where $\hat{P}_i=N_i/N^\ast$. Clearly, the compositional data $\hat{P}$ we observe is an empirical version of relative abundance $P$, and thus, only reflects the information on relative abundance rather than absolute abundance. 

In the differential abundance test, we assume two populations of interest, for example, the treated and control group, which can be written as $\pi_1(A)$ and $\pi_2(A)$, respectively. As we want to compare the abundance of the two populations, we draw $m_1$ and $m_2$ samples from each population
$$
A_{k,1},A_{k,2},\ldots, A_{k,m_k}\sim \pi_k(A),\qquad k=1,2.
$$
For the $j$th sample in the $k$th population, we observe count data $N_{k,j}=(N_{k,j,1},\ldots,N_{k,j,d})$,  drawn from the model in \eqref{eq:model} given the absolute abundance $A_{k,j}$. For a fair comparison, these count data can be normalized as empirical relative abundance $\hat{P}_{k,j}=(\hat{P}_{k,j,1},\ldots,\hat{P}_{k,j,d})$. Therefore, the goal of the differential abundance test is to compare the two populations $\pi_1$ and $\pi_2$ based on the observed compositional data $\hat{P}_{k,1},\hat{P}_{k,2},\ldots, \hat{P}_{k,m_k}$, $k=1,2$. Hereafter, we always use $i$ as the index of the component, $j$ as the index of the sample and $k$ as the index of the population. After introducing the data generating model, we also need to define a hypothesis for the differential abundance test. It seems straightforward to define a rigorous hypothesis in such a two-sample case. However, we will see that the data's compositional nature makes it difficult to define a statistically identifiable and scientific meaningful hypothesis. 

The general goal of the differential abundance test is to identify a set of components with different abundances across two populations.  This task can be naturally formulated as a hypothesis testing problem on the absolute abundance
\begin{equation}
	\label{eq:abslt}
	H_{i,0}: \EE_{\pi_1}(A_{i})=\EE_{\pi_2}(A_{i})\qquad {\rm vs.}\qquad H_{i,1}: \EE_{\pi_1}(A_{i})\ne \EE_{\pi_2}(A_{i}),
\end{equation}
where $\EE_{\pi_k}(A_{i})$ represents the expected absolute abundance of the $i$th component in the $k$th population. However, the following example suggests that no methods can consistently test the hypothesis in \eqref{eq:abslt} based on the compositional data, if we do not make further assumptions.
\begin{example}
	\label{ex:abundance}
	Consider the distributions $\pi_1$ and $\pi_2$ in Figure~\ref{fg:toy}. If we examine the absolute abundance of $\pi_1$ and $\pi_2$, we know that the absolute abundances of all components are doubled in $\pi_2$, compared with $\pi_1$. However, only observing their relative abundance, we could conclude that there is no difference between $\pi_1$ and $\pi_2$. In other words, we cannot distinguish the two populations by only observing the relative abundance. 
\end{example}

Example~\ref{ex:abundance} suggests that the information provided by the compositional data is insufficient to answer the question such as \eqref{eq:abslt}. To make the hypothesis identifiable, one possible solution is to ignore the compositional nature of the data and directly test the relative abundance 
\begin{equation}
	\label{eq:relative}
	H_{i,0}: Q_{1,i}=Q_{2,i}\qquad {\rm vs.}\qquad H_{i,1}: Q_{1,i}\ne Q_{2,i},
\end{equation}
where $Q_{1,i}=\EE_{\pi_1}(P_{i})$ and $Q_{2,i}=\EE_{\pi_2}(P_{i})$ are the expected relative abundance of the $i$th component in the two populations. Although such a hypothesis in \eqref{eq:relative} is testable based on the compositional data, it might lead to a false discovery, as it is difficult to find a meaningful connection to absolute abundance. To illustrate this, consider the following example:
\begin{example}
	\label{ex:relative}
	Consider the distributions $\pi_1$ and $\pi_3$ in Figure~\ref{fg:toy}. In absolute abundance, only the red component's absolute abundance is larger in $\pi_3$ than $\pi_1$, and the absolute abundances of the rest components stay the same across the populations. However, examining their relative abundance, we observe that the red component's relative abundance increases and the relative abundances of the rest components decrease due to the compositional constraint. Simply, the changes in relative abundance are not equivalent to the changes in absolute abundance.
\end{example}

In the above example, only the red component is the driver component; however, all the components belong to alternative hypotheses in \eqref{eq:relative}. These two examples indicate that it is important to define an identifiable and interpretable statistical hypothesis carefully for the compositional data. 

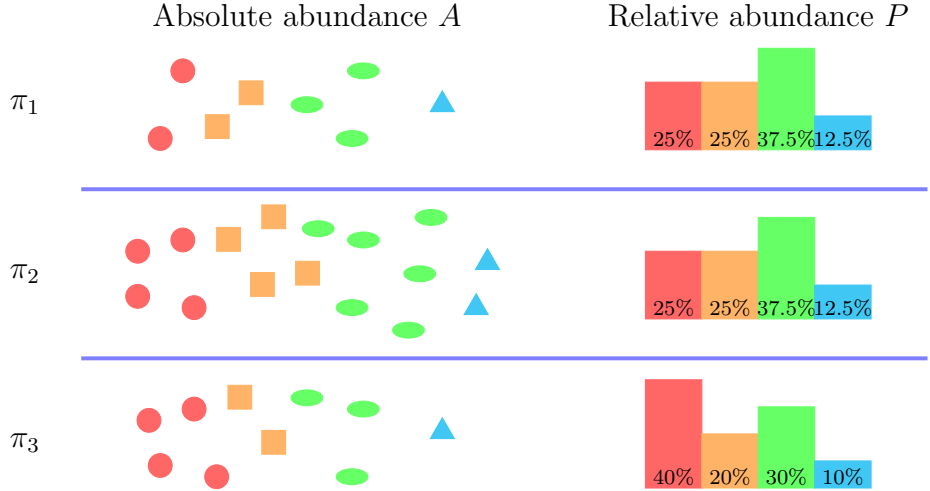
\begin{figure}[h!]
	\centering
	\begin{tikzpicture}[scale=1.5]
		\node[inner sep=0] at (0,1.8) {Absolute abundance $A$};
		\node[inner sep=0] at (4,1.8) {Relative abundance $P$};
		
		\node[inner sep=0] at (-2.5,1) {$\pi_1$};
		
		\filldraw[red!60!white] (-1.1,1.3) circle (3pt);
		\filldraw[red!60!white] (-1.3,0.7) circle (3pt);
		\filldraw[orange!60!white] (-0.6,1) rectangle ++(6pt,6pt);
		\filldraw[orange!60!white] (-0.9,0.7) rectangle ++(6pt,6pt);
		\filldraw[green!60!white] (0,1) ellipse (4pt and 2pt);
		\filldraw[green!60!white] (0.5,1.3) ellipse (4pt and 2pt);
		\filldraw[green!60!white] (0.4,0.7) ellipse (4pt and 2pt);
		\node[fill=cyan!60!white,regular polygon, regular polygon sides=3,inner sep=2pt] at (1.2,1) {};
		
		\filldraw[red!60!white] (3,0.6) rectangle (3.5,1.2);
		\filldraw[orange!60!white] (3.5,0.6) rectangle (4,1.2);
		\filldraw[green!60!white] (4,0.6) rectangle (4.5,1.5);
		\filldraw[cyan!60!white] (4.5,0.6) rectangle (5,0.9);
		
		\node[inner sep=0] at (3.25,0.7) {\scriptsize $25\%$};
		\node[inner sep=0] at (3.75,0.7) {\scriptsize $25\%$};
		\node[inner sep=0] at (4.25,0.7) {\scriptsize $37.5\%$};
		\node[inner sep=0] at (4.75,0.7) {\scriptsize $12.5\%$};
		
		\path[draw,line width=1.5pt,blue!50!white] (-2,0.25) -- (5.5,0.25);
		
		\node[inner sep=0] at (-2.5,-0.5) {$\pi_2$};
		
		\filldraw[red!60!white] (-1.1,-0.2) circle (3pt);
		\filldraw[red!60!white] (-1,-0.8) circle (3pt);
		\filldraw[red!60!white] (-1.5,-0.3) circle (3pt);
		\filldraw[red!60!white] (-1.5,-0.7) circle (3pt);
		\filldraw[orange!60!white] (-0.8,-0.3) rectangle ++(6pt,6pt);
		\filldraw[orange!60!white] (-0.5,-0.7) rectangle ++(6pt,6pt);
		\filldraw[orange!60!white] (-0.4,-0.1) rectangle ++(6pt,6pt);
		\filldraw[orange!60!white] (-0.1,-0.6) rectangle ++(6pt,6pt);
		\filldraw[green!60!white] (0.1,-0.1) ellipse (4pt and 2pt);
		\filldraw[green!60!white] (0.5,-0.2) ellipse (4pt and 2pt);
		\filldraw[green!60!white] (0.4,-0.8) ellipse (4pt and 2pt);
		\filldraw[green!60!white] (0.9,-1) ellipse (4pt and 2pt);
		\filldraw[green!60!white] (1,-0.5) ellipse (4pt and 2pt);
		\filldraw[green!60!white] (1.1,0) ellipse (4pt and 2pt);
		\node[fill=cyan!60!white,regular polygon, regular polygon sides=3,inner sep=2pt] at (1.6,-0.4) {};
		\node[fill=cyan!60!white,regular polygon, regular polygon sides=3,inner sep=2pt] at (1.5,-0.8) {};
		
		\filldraw[red!60!white] (3,-0.9) rectangle (3.5,-0.3);
		\filldraw[orange!60!white] (3.5,-0.9) rectangle (4,-0.3);
		\filldraw[green!60!white] (4,-0.9) rectangle (4.5,0);
		\filldraw[cyan!60!white] (4.5,-0.9) rectangle (5,-0.6);
		
		\node[inner sep=0] at (3.25,0.7-1.5) {\scriptsize $25\%$};
		\node[inner sep=0] at (3.75,0.7-1.5) {\scriptsize $25\%$};
		\node[inner sep=0] at (4.25,0.7-1.5) {\scriptsize $37.5\%$};
		\node[inner sep=0] at (4.75,0.7-1.5) {\scriptsize $12.5\%$};
		
		\path[draw,line width=1.5pt,blue!50!white] (-2,0.25-1.5) -- (5.5,0.25-1.5);
		
		\node[inner sep=0] at (-2.5,-2) {$\pi_3$};
		
		\filldraw[red!60!white] (-1,-1.7) circle (3pt);
		\filldraw[red!60!white] (-0.8,-2.3) circle (3pt);
		\filldraw[red!60!white] (-1.4,-1.8) circle (3pt);
		\filldraw[red!60!white] (-1.3,-2.2) circle (3pt);
		\filldraw[orange!60!white] (-0.7,-0.2-1.5) rectangle ++(6pt,6pt);
		\filldraw[orange!60!white] (-0.4,-0.6-1.5) rectangle ++(6pt,6pt);
		\filldraw[green!60!white] (0,-0.1-1.5) ellipse (4pt and 2pt);
		\filldraw[green!60!white] (0.5,-0.2-1.5) ellipse (4pt and 2pt);
		\filldraw[green!60!white] (0.4,-0.8-1.5) ellipse (4pt and 2pt);
		\node[fill=cyan!60!white,regular polygon, regular polygon sides=3,inner sep=2pt] at (1.2,-0.4-1.5) {};
		
		\filldraw[red!60!white] (3,-0.9-1.5) rectangle (3.5,0.06-1.5);
		\filldraw[orange!60!white] (3.5,-0.9-1.5) rectangle (4,-0.42-1.5);
		\filldraw[green!60!white] (4,-0.9-1.5) rectangle (4.5,-0.18-1.5);
		\filldraw[cyan!60!white] (4.5,-0.9-1.5) rectangle (5,-0.66-1.5);	
		
		\node[inner sep=0] at (3.25,0.7-3) {\scriptsize $40\%$};
		\node[inner sep=0] at (3.75,0.7-3) {\scriptsize $20\%$};
		\node[inner sep=0] at (4.25,0.7-3) {\scriptsize $30\%$};
		\node[inner sep=0] at (4.75,0.7-3) {\scriptsize $10\%$};			
		
	\end{tikzpicture}
	\caption{Comparison between absolute abundances and relative abundances.}\label{fg:toy}
\end{figure}

\subsection{Reference-based Hypothesis}

As every piece of information introduced by compositional data is relative, reference frames are necessary to analyze the compositional data, as noted in \cite{pawlowsky2011compositional} and \cite{morton2019establishing}. The idea of reference frames has also been applied in standard compositional data analysis. For example, the geometric mean of all components is seen as a reference component in the center log-ratio transformation. A change in compositional data can be interpreted as a change with respect to the reference components through the reference frames.

Specifically, motivated by recent work \citep{brill2019testing}, a subset of components $I_0$ is called a reference set if there exists a positive constant $b>0$ such that the relative abundance in $I_0$ changes in the same amplitude  
\begin{equation}
	\label{eq:reference}
	Q_{1,i}=bQ_{2,i},\qquad i\in I_0.
\end{equation}
The reference set's definition suggests that the relative relationship within the reference set $I_0$ does not change across the two populations, that is, $Q_{1,i}/Q_{1,i'}=Q_{2,i}/Q_{2,i'}$ for any $i,i'\in I_0$ so that it can be seen as a benchmark. For example, the orange, green, and blue components in Example~\ref{ex:relative} can be seen as a reference set, although their relative abundances differ between $\pi_1$ and $\pi_3$. Based on the reference set $I_0$, we compare everything with the components in the reference set, and thus, the problem of testing for differential components can be cast as the following reference-based hypothesis testing problem:
\begin{equation}
	\label{eq:reftest}
	H_{i,0}: {Q_{1,i}\over \sum_{i\in I_0}Q_{1,i}}={Q_{2,i} \over \sum_{i\in I_0}Q_{2,i}}\qquad {\rm vs.}\qquad H_{i,1}: {Q_{1,i}\over \sum_{i\in I_0}Q_{1,i}}\ne {Q_{2,i} \over \sum_{i\in I_0}Q_{2,i}}.
\end{equation}
Under the above hypothesis, a change in some components is interpreted as a change with respect to the reference set $I_0$. Unlike the conventional statistical hypothesis testing problem, the reference-based hypothesis at each component is defined by all components. If a component belongs to the reference set, then the null hypothesis is naturally true. The definition in \eqref{eq:reference} also defines a set of non-differential components. If we further assume that the absolute abundance within the reference set is unchanged across two populations, that is, 
\begin{equation}
	\label{eq:absequal}
	\EE_{\pi_1}(A_{i})=\EE_{\pi_2}(A_{i}),\qquad i\in I_0,
\end{equation}
then hypothesis \eqref{eq:abslt} is roughly equivalent to hypothesis \eqref{eq:reftest}. In other words, under the assumption \eqref{eq:absequal}, the compositional data can be used to infer the differential components defined based on absolute abundance. Note also that assumption \eqref{eq:absequal} cannot be diagnosed and verified based on the compositional data alone.

The definition of hypothesis \eqref{eq:reftest} depends on the choice of the reference set. If domain knowledge or extra information on the reference set is available in advance, \eqref{eq:reftest} is well-defined based on the known $I_0$ and ready to be tested by the compositional data. However, a more realistic situation in practice is that the reference set is unknown in advance. In this case, different reference set choices can lead to inconsistent conclusions of the null hypothesis, and thus, cause the problem of model identifiability. The following proposition shows that assumptions are required to make the hypothesis testing problem identifiable.
\begin{proposition}
	\label{prop:indetification}
	If we assume the reference set $I_0$ in \eqref{eq:reftest} satisfies $|I_0|>d/2$, then the null and alternative hypotheses in \eqref{eq:reftest} are well defined. In contrast, there exists an instance of two reference sets $I_{0,1}$ and $I_{0,2}$ defined in \eqref{eq:reference} with $|I_{0,1}|,|I_{0,2}|\le d/2$ and the null hypotheses defined by $I_{0,1}$ and $I_{0,2}$ contradict each other.
\end{proposition}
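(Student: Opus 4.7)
The plan has two parts matching the two claims in the proposition, and both are essentially pigeonhole arguments once the definition of a reference set is unpacked.

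For the identifiability half, I would start from two reference sets $I_{0,1}$ and $I_{0,2}$, each with cardinality strictly greater than $d/2$, and with associated constants $b_1,b_2>0$ as in \eqref{eq:reference}. A simple counting argument gives $|I_{0,1}\cap I_{0,2}|\ge |I_{0,1}|+|I_{0,2}|-d>0$, so the intersection contains at least one component $i_0$. On that component the reference condition yields $Q_{1,i_0}=b_1 Q_{2,i_0}=b_2 Q_{2,i_0}$, which under the mild non-degeneracy assumption $Q_{2,i_0}>0$ forces $b_1=b_2=:b$. Next I would simplify the null in \eqref{eq:reftest}: the reference condition implies $\sum_{i'\in I_0}Q_{1,i'}=b\sum_{i'\in I_0}Q_{2,i'}$, so the hypothesis collapses to the $I_0$-free statement $Q_{1,i}=b Q_{2,i}$. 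Hence the null/alternative at each component does not depend on which of the two admissible reference sets was chosen, which is what ``well defined'' means here.

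For the non-identifiability half, I would exhibit a small explicit instance rather than argue abstractly. Take $d=4$ and the disjoint candidate reference sets $I_{0,1}=\{1,2\}$, $I_{0,2}=\{3,4\}$, each of size $d/2=2$. Choose $Q_2=(0.2,0.3,0.2,0.3)$, set $b_1=1/2$, $b_2=3/2$, and define $Q_1=(0.1,0.15,0.3,0.45)$. Both distributions sum to one, and \eqref{eq:reference} holds on each $I_{0,\ell}$ with the stated constant. A direct check of the renormalized ratios shows that component $3$ satisfies the null in \eqref{eq:reftest} under $I_{0,2}$ (trivially, since $3\in I_{0,2}$) but violates it under $I_{0,1}$ because $Q_{1,3}/(Q_{1,1}+Q_{1,2})=1.2\neq 0.4=Q_{2,3}/(Q_{2,1}+Q_{2,2})$. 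This directly contradicts the null hypotheses generated by the two choices.

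The main obstacle, though technical rather than conceptual, is the boundary case in which the guaranteed intersection $I_{0,1}\cap I_{0,2}$ contains only components with $Q_{2,i}=0$, in which case the equation $Q_{1,i}=b_j Q_{2,i}$ is satisfied by any $b_j$ and the argument for $b_1=b_2$ breaks. I would handle this by imposing a standard positivity assumption $Q_{k,i}>0$ on the relative abundances (which is natural in any differential-abundance setting where relative abundances are to be compared), or equivalently by restricting attention to reference sets whose intersection contains at least one non-vanishing component; everything else is routine.
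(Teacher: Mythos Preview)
Your proposal is correct and follows essentially the same approach as the paper: the pigeonhole argument to force a nonempty intersection and hence $b_1=b_2$ for the first claim, and a construction of two disjoint reference sets with distinct ratios for the second. The only cosmetic differences are that you supply a concrete numerical instance (the paper argues the counterexample abstractly) and you explicitly flag the positivity assumption needed to deduce $b_1=b_2$ from $b_1 Q_{2,i_0}=b_2 Q_{2,i_0}$, which the paper leaves implicit.
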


This proposition suggests that when the reference set is unknown, it is necessary to assume the existence of a large enough reference set to ensure that the hypothesis for the differential abundance test in compositional data is well defined. Hereafter, we always assume the reference set $I_0$ in \eqref{eq:reftest} satisfies $|I_0|>d/2$. 

\subsection{Existing Methods for Reference-based Hypothesis}
\label{sc:existing}


To test hypothesis \eqref{eq:reftest}, several different methods have been proposed using standard compositional analysis techniques.  \cite{mandal2015analysis} recently proposed a method called analysis of composition of microbiome (ANCOM). To test if the $i$th component is a differential one, ANCOM compares the $i$th component with all other components by testing
$$
H_{i,i',0}:\EE_{\pi_1}\left\{\log(P_{i}/P_{i'})\right\}=\EE_{\pi_2}\left\{\log(P_{i}/P_{i'})\right\}\quad {\rm vs}\quad H_{i,i',1}:\EE_{\pi_1}\left\{\log(P_{i}/P_{i'})\right\}\ne \EE_{\pi_2}\left\{\log(P_{i}/P_{i'})\right\}
$$
for all $i'\ne i$. After $d(d-1)/2$ hypothesis testing, ANCOM summarizes all the decisions by the number of null hypothesis rejections, that is,
$$
W_i=\sum_{i'\ne i}I[\EE_{\pi_1}\left\{\log(P_{i}/P_{i'})\right\}\ne \EE_{\pi_2}\left\{\log(P_{i}/P_{i'})\right\}],
$$
where $I(\cdot)$ is an indicator function. The $i$th component is a differential one if $W_i>d/2$.  The main disadvantage of this method is that comparing $O(d^2)$ hypotheses is time-consuming in practice. With a similar idea, \cite{brill2019testing} proposes to first identify the reference set by evaluating
$$
S_i={\rm Median}\left[|\EE_{\pi_1}\left\{\log(P_{i}/P_{i'})\right\}-\EE_{\pi_2}\left\{\log(P_{i}/P_{i'})\right\}|:i'\ne i\right].
$$
The $i$th component belongs to the reference set if $S_i$ is small. After the reference set is estimated, the differential components can be identified by comparing each component with the estimated reference set. However, identifying the reference set effectively and the effect of a misspecified reference set on this method remain unknown. Unlike the above two methods, \cite{morton2019establishing} proposes to rank the components by the ratio between two populations. Specifically, the ratio of $i$th components is defined as
$$
T_i=\EE_{\pi_1}\{\log(P_{i})\}-\EE_{\pi_2}\{\log(P_{i})\}.
$$
Then, one can find the mode of $T_i$, which is defined as $T^\ast$ such that $|\{i:T_i=T^\ast\}|$ is the largest.  Then, the differential components are defined as the ones with $T_i\ne T^\ast$. Nevertheless, it is difficult to estimate the mode of $T_i$ when there is noise.

In all the above methods, the important tool to account for the compositional nature is the ratio or log-ratio of two proportions. This idea is also commonly used in the existing literature of compositional data analysis \citep{aitchison1983principal,pawlowsky2011compositional}. Its main advantage is that it can represent the relative information in a clean form and can be easily incorporated into various classical multivariate analyses when the proportions are strictly larger than 0. However, the ratio between the two proportions can be unstable and even ill-defined in the presence of zero counts and measurement error. Unfortunately, zero counts are prevalent in many compositional data sets, such as microbiome data sets. To overcome this problem, a popular practice is to replace zero counts with a small number (also called pseudo-count). However, choosing this small number for a given data set remains a challenge. In addition, as \cite{brill2019testing} show, mishandling zero counts can lead to inflated false discoveries.

\section{Robust Differential Abundance Test}
\label{sc:method}

\subsection{Infinite Sample Size}

The analysis in the previous section might raise questions if there is a method to test the differential components in compositional data, which is robust to prevalent zero counts in compositional data. To answer this question, we introduce a new robust framework for differential abundance test in compositional data, which only examines the absolute difference between two proportions. Without loss of generality, we always write the largest reference set as $I_0$, which means $H_{i,1}$ in \eqref{eq:reftest} (based on $I_0$) is true as long as $i\notin I_0$, and the set of differential components as $I_1=[d]\setminus I_0$, where $[d]=\{1,\ldots, d\}$. Based on this definition, $I_0$ is essentially the set of all non-differential components; therefore we might use these two concepts exchangeably. 


Compared with conventional hypothesis testing, the main difficulty in testing a reference-based hypothesis is that the null hypothesis of the $i$th component cannot be identified only based on data at the $i$th component, as \eqref{eq:reftest} is defined based on the relative relationship with respect to components in $I_0$. This section shows that testing such hypotheses can be achieved by a series of directional comparisons on renormalized proportions. To illustrate the idea, we consider the ideal case that the number of observed samples is infinite, that is, $Q_{1,i}$ and $Q_{2,i}$  are known, and the sum of the components between the two conditions is strictly larger than 0, that is, $Q_{1,i}+Q_{2,i}>0$ for $i\in[d]$ in this section. 

According to \eqref{eq:reference}, a common feature of all non-differential components (in $I_0$) is that $R_i:=Q_{1,i}-Q_{2,i}$ has the same sign for $i\in I_0$. More concretely, for all $i\in I_0$, $R_i>0$ if $b>1$; $R_i<0$ if $b<1$; and $R_i=0$ if $b=1$. This observation motivates us to use all components to infer the sign of $b-1$ and then identify the differential components by comparing the signs of $R_i$ and $b-1$. In particular, the assumption $|I_0|>d/2$ suggests that the sign of all $R_i$'s median, denoted by $M\{R_i:1,\ldots,d\}$, always equals to the sign of $R_i$ for $i\in I_0$. Therefore, $b-1$ has the same sign as $M\{R_i:1,\ldots,d\}$. As all components with a different sign of $b-1$ are differential components, we can identify differential components by comparing the sign of $R_i$ and $M\{R_i:1,\ldots,d\}$. Briefly, all components with ${\rm sign}(R_i)\ne {\rm sign}(M\{R_i:1,\ldots,d\})$ are differential components. The differential components identified by the above strategy are all correct ones, but it might ignore many other differential components. For example, if $b>1$, the above strategy cannot identify components with $Q_{1,i}>bQ_{2,i}$, although they too are also differential components based on the definition in \eqref{eq:reftest}. 

To overcome this problem, we generalize the above idea by repeatedly applying this strategy to renormalized proportions. To be specific, let $U_{(t)}$ be the set of identified differential components candidates and $V_{(t)}$ be the rest of the components at each iteration $t$. We set $U_{(0)}=\emptyset$ and $V_{(0)}=[d]$ to initialize the iterative procedure. Instead of assessing $R_i$ directly, we opt to examine the absolute difference after normalization with respect to a set $I\subset [d]$
$$
R_i(I)={Q_{1,i} \over \sum_{i\in I}Q_{1,i}}-{Q_{2,i} \over \sum_{i\in I}Q_{2,i}}, \qquad i\in I.
$$
If we define $b(I)=b\times (\sum_{i\in I}Q_{2,i}/\sum_{i\in I}Q_{1,i})$, we can know that $b(I)$ is the ratio of $Q_{1,i}/ \sum_{i\in I}Q_{1,i}$ and $Q_{2,i}/ \sum_{i\in I}Q_{2,i}$ for all $i\in I_0$, and $R_i(I)$'s median has the same sign with $b(I)-1$ when $I_0\subset I$. Here, we define the median of $R_i(I)$ as
$$
M(I)={\rm Median}\left\{R_i(I):i\in I\right\}.
$$
After determining $M(I)$'s sign, we can define the set of differential components $W^{+}(I)=\left\{i\in I:R_i(I)> 0\right\}$, $W^{-}(I)=\left\{i\in I: R_i(I)<0\right\}$ and $W^{o}(I)=\left\{i\in I:R_i(I)= 0\right\}$.

Following these notations, we now define the following iterative procedure to identify the differential components. For any $t=0,1,\ldots$, we repeat the following procedure:
\begin{enumerate}[label=(\alph*)]
	\item Find the median $M(V_{(t)})$ based on $R_i(V_{(t)})$, $i\in V_{(t)}$.
	\item Find all components with a different sign from $M(V_{(t)})$
	$$
	W_{(t)}=\begin{cases}
		W^{+}(V_{(t)})\cup W^{-}(V_{(t)}), \quad & {\rm if}\ M(V_{(t)})=0\\
		W^{-}(V_{(t)})\cup W^{o}(V_{(t)}), \quad & {\rm if}\ M(V_{(t)})>0\\
		W^{+}(V_{(t)})\cup W^{o}(V_{(t)}), \quad & {\rm if}\ M(V_{(t)})<0
	\end{cases}.
	$$
	\item Let $U_{(t+1)}=U_{(t)}\cup W_{(t)}$ and $V_{(t+1)}=V_{(t)}\setminus  W_{(t)}$. If $W_{(t)}=\emptyset$, stop the loop.
\end{enumerate}
After the loop stop at $t=T$, we set $\hat{I}_0=V_{(T)}$ and $\hat{I}_1=U_{(T)}$. $\hat{I}_0$ is an estimation of  the largest reference set $I_0$ and $\hat{I}_1$ estimates the set of differential components $I_1$. We now show that $I_0$ and $I_1$ can be recovered by $\hat{I}_0$ and $\hat{I}_1$ completely in at most $|I_1|+1$ iterations. 
\begin{theorem}
	\label{thm:nonoise}
	Consider estimating $I_0$ and $I_1$ by $\hat{I}_0$ and $\hat{I}_1$ defined above. If $|I_0|>d/2$, we can show that
	$$
	T\le |I_1|+1, \qquad \hat{I}_0=I_0\qquad {\rm and }\qquad \hat{I}_1=I_1.
	$$ 
\end{theorem}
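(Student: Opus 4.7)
\emph{Proof proposal.} The plan is to establish the theorem via two invariants maintained throughout the iterative procedure: (i) $I_0 \subseteq V_{(t)}$ for every $t$, and (ii) whenever $V_{(t)} \supsetneq I_0$, the set $W_{(t)}$ removed at iteration $t$ is nonempty and disjoint from $I_0$. Granting these invariants, $|V_{(t)}|$ strictly decreases by at least one until it equals $|I_0|$; at that point one further iteration produces an empty $W_{(t)}$ and terminates the loop, so $T \leq |I_1| + 1$ with $\hat{I}_0 = I_0$ and $\hat{I}_1 = I_1$.

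The first computation I would carry out exploits the reference-set identity $Q_{1,i} = b Q_{2,i}$ for $i \in I_0$. Introducing $b(I) = b \sum_{i \in I} Q_{2,i}/\sum_{i \in I} Q_{1,i}$ for any $I \supseteq I_0$, a short algebraic manipulation gives
\[
R_i(I) = \frac{Q_{2,i}}{\sum_{i' \in I} Q_{2,i'}} \bigl(b(I) - 1\bigr), \qquad i \in I_0,
\]
so all $R_i(I)$ with $i \in I_0$ share the common sign of $b(I)-1$. Since $|I_0| > d/2 \geq |V_{(t)}|/2$ (using $V_{(t)} \subseteq [d]$), this common sign is a strict majority among the values $\{R_i(V_{(t)}) : i \in V_{(t)}\}$, which forces $M(V_{(t)})$ to carry the same sign. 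Inspecting the three-case definition of $W_{(t)}$ then shows that a component whose $R_i$ value agrees in sign with $M(V_{(t)})$ is never selected, so $W_{(t)} \cap I_0 = \emptyset$ and invariant (i) follows by induction.

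For invariant (ii), I would use the normalization identity $\sum_{i \in V_{(t)}} R_i(V_{(t)}) = 0$ together with the explicit calculation $\sum_{i \in I_0} R_i(V_{(t)}) = A(bB_2 - B_1)/[(bA + B_1)(A + B_2)]$, where $A = \sum_{i \in I_0} Q_{2,i}$, $B_1 = \sum_{i \in I_1 \cap V_{(t)}} Q_{1,i}$, and $B_2 = \sum_{i \in I_1 \cap V_{(t)}} Q_{2,i}$. Subtracting, $\sum_{i \in I_1 \cap V_{(t)}} R_i(V_{(t)})$ carries the opposite sign, so whenever $b(V_{(t)}) \neq 1$ at least one element of $I_1 \cap V_{(t)}$ has $R_i$ with sign opposite to $M(V_{(t)})$ and is therefore removed. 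The subtle case, which I expect to be the main obstacle, is the degenerate $b(V_{(t)}) = 1$, where $M(V_{(t)}) = 0$ and the sign-matching argument collapses; here I would argue directly that the factorization above gives $R_i(V_{(t)}) = 0$ for every $i \in I_0$, while for $i \in I_1 \cap V_{(t)}$ the equality $R_i(V_{(t)}) = 0$ would force $Q_{1,i} = b Q_{2,i}$ and contradict the maximality of $I_0$, so the entire set $I_1 \cap V_{(t)}$ falls into $W^+ \cup W^-$ and is stripped off in one iteration. Finally, when $V_{(t)} = I_0$ the factorization gives $b(I_0) = 1$ and hence $R_i(I_0) = 0$ for all $i \in I_0$, which forces $W_{(t)} = \emptyset$ and terminates the loop with $\hat{I}_0 = I_0$ and $\hat{I}_1 = I_1$.
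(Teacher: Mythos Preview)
Your proposal is correct and follows essentially the same approach as the paper: both establish $I_0 \subseteq V_{(t)}$ via the majority-sign argument, and both use the normalization identity $\sum_{i\in V_{(t)}} R_i(V_{(t)}) = 0$ to guarantee $W_{(t)} \neq \emptyset$ whenever $V_{(t)} \supsetneq I_0$. Your treatment is slightly more explicit in places (the maximality-of-$I_0$ argument for the degenerate case $b(V_{(t)}) = 1$, and the explicit termination step when $V_{(t)} = I_0$), but the structure and key ideas coincide with the paper's proof.
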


This theorem suggests that one can identify the differential components accurately by a series of directional comparisons on renormalized proportion, when the sample size is infinite. Compared with existing methods, this iterative procedure does not evaluate the ratio between two proportions; therefore we need not worry about the prevalent zero counts problem anymore.

\subsection{Finite Sample Size}
\label{sc:finite}

We cannot apply the iterative procedure described in the previous section directly to the compositional data in practice, because we only observe finite samples. Owing to the uncertainty in data, we redefine the testing procedure in the previous section to account for the randomness in data. Given a subset $I\subset [d]$, we consider a standard two-sample $t$ test statistic after normalization with respect to $I$
$$
\hat{R}_i(I)={\bar{P}_{1,i}(I)-\bar{P}_{2,i}(I) \over [\hat{\sigma}^2_{1,i}(I)/m_1+\hat{\sigma}^2_{2,i}(I)/m_2]^{1/2}},
$$
where $\bar{P}_{k,i}(I)$ and $\hat{\sigma}^2_{k,i}(I)$ are defined as $\bar{P}_{k,i}(I)={\bar{P}_{k,i}/ \sum_{i\in I}\bar{P}_{k,i}}$, where $\bar{P}_{k,i}=\sum_{j=1}^{m_k}\hat{P}_{k,j,i}/m_k$,
and 
$$
\hat{\sigma}^2_{k,i}(I)={1\over m_k-1}\sum_{j=1}^{m_k}\left({\hat{P}_{k,j,i}\over \sum_{i\in I}\bar{P}_{k,i}}-\bar{P}_{k,i}(I)\right)^2.
$$
Similar to $M(I)$, we also define $\hat{R}_i(I)$'s median, that is,
$$
\hat{M}(I)={\rm Median}\left\{\hat{R}_i(I):i\in I\right\}.
$$

Suppose $\hat{M}(I)$ is very close to 0 in a sense that $-M<\hat{M}(I)<M$, we can expect that $b(I)$ is close to 1 and all components with $\hat{R}_i(I)$ far away from $0$ are differential components. This task is equivalent to conducting two-sided hypothesis testing, and we define the set $W^{\pm}(I,D)=\{i\in I: |\hat{R}_i(I)|>D \}$, where $D$ is a critical value corrected for the multiple testing. In contrast, if $\hat{M}(I)$ is large than $M$, we can expect $b(I)$ is larger than 1, and the differential components can be identified by one-sided hypothesis testing, that is, $W^{-}(I,D)=\{i\in I: \hat{R}_i(I)<-D \}$. Similarly, when $\hat{M}(I)<-M$, we collect the differential components in the set $W^{+}(I,D)=\{i\in I: \hat{R}_i(I)>D \}$.

Similar to the previous section, we define $U_{(t)}$, as the set of selected differential components and $V_{(t)}$, as the rest of the components at each iteration $t$, which are initially chosen as $U_{(0)}=\emptyset$ and $V_{(0)}=[d]$. Given a threshold $M$ and a series of critical values $(D^{\pm}_{(t)},D^{+}_{(t)},D^{-}_{(t)})$, $t=0,1,\ldots$, we update $U_{(t)}$ and $V_{(t)}$ in the following rule. 
\begin{enumerate}[label=(\alph*)]
	\item Find the median $\hat{M}(V_{(t)})$ based on $\hat{R}_i(V_{(t)})$, $i\in V_{(t)}$.
	\item Set
	$$
	W_{(t)}=\begin{cases}
		W^{\pm}(V_{(t)},D^{\pm}_{(t)}), \quad & {\rm if}\ -M<\hat{M}(V_{(t)})<M\\
		W^{-}(V_{(t)},D^-_{(t)}), & {\rm if}\ \hat{M}(V_{(t)})\ge M\\
		W^{+}(V_{(t)},D^+_{(t)}), & {\rm if}\ \hat{M}(V_{(t)})\le -M
	\end{cases}.
	$$
	\item Let $U_{(t+1)}=U_{(t)}\cup W_{(t)}$ and $V_{(t+1)}=V_{(t)}\setminus  W_{(t)}$. If $W_{(t)}=\emptyset$, stop the loop.
\end{enumerate}
After the loop stop at $t=T$, we set $\hat{I}_0=V_{(T)}$ and $\hat{I}_1=U_{(T)}$. To implement this procedure, we still need to choose a threshold $M$ and a sequence of positive critical values $(D^{\pm}_{(t)},D^{+}_{(t)},D^{-}_{(t)})$ for $t=0,1,\ldots$ to control false discovery.

\subsection{Family-wise Error Rate Control}
\label{sc:fwer}

Unlike the conventional false discovery setting, there are several unique challenges here. First, the test statistics $\hat{R}_i(I)$ are generally dependent for different $i\in I$ because of the compositional constraint after renormalization. In addition, there is sometimes some dependency structure between components in some applications, such as microbiome data \citep{hawinkel2019broken}. Second, the two-step procedure suggests that we need to consider the potential error in the first step when choosing the second step's critical values. Third, as an iterative procedure, the possible dependence after renormalization plays an important role in the false discovery control.

To choose the threshold $M$ for the median, we need to consider the potential dependency structure between $\hat{R}_i(I)$. In particular, let $P_i=\{\hat{P}_{k,j,i}, k=1,2,j=1,\ldots,m_k\}$ and $S_l$ be a subset of $[0,1]^{m_1+m_2}$ for all $1\le l\le |I_0|$. For any subset $S_l$, we assume $P_i, i\in I_0$ can be ordered as $P_{(1)},\ldots, P_{(|I_0|)}$  to satisfy
\begin{equation}
	\label{eq:dependency}
	\sum_{l=1}^{|I_0|}\|\EE((B_l-\mu_l)^2|B_1^{l-1})\|_\infty+2\sum_{l=1}^{|I_0|}\sum_{l'=1}^{l-1}\|\EE(B_l-\mu_l|B_1^{l'})\|_\infty\le C_p\sum_{l=1}^{|I_0|}\mu_l,
\end{equation}
where $C_p>0$ is some constant, $B_l=I(P_{(l)}\in S_l)$, $\mu_l=\EE(B_l)$, and $B_{1}^l=\{B_1,\ldots,B_l\}$. We can set $C_p=1$ when $P_i$ are independent of each other. Given condition \eqref{eq:dependency}, we choose the median's threshold $M$ as $M=\sqrt{{C_p\log d/d}}$. In doing so, we can ensure that $\PP(\hat{M}(I)<-M,b(I)\ge 1)\to 0$ and $\PP(\hat{M}(I)>M,b(I)\le 1)\to 0$.

When choosing the critical values $(D^{\pm}_{(t)},D^{+}_{(t)},D^{-}_{(t)})$, it is necessary to consider renormalization at each iteration. Specifically, let $q_\alpha$ be the upper $\alpha$-quantile of the following random variable $\max_{i\in I_0}(\sup_{r_i}r_iz_{1,i}+\sqrt{1-r_i^2}z_{2,i})=\max_{i\in I_0}\sqrt{z_{1,i}^2+z_{2,i}^2}$,
where $z_{1,i}$ and $z_{2,i}$ are independent standard Gaussian random variables. For simplicity, we can choose an upper bound of $q_\alpha$ in practice, that is, $\tilde{q}_\alpha=\sqrt{2\log d-2\log \alpha}$. In one-sided testing, we choose $D^{+}_{(t)}=D^{-}_{(t)}=q_\alpha$. When choosing a critical value in two-sided testing, it is also necessary to consider the effect of potential error in the first step. In particular, we choose critical value $D^{\pm}_{(t)}$ as
$$
D^{\pm}_{(t)}=q_\alpha+r_QM,
$$
where $r_Q={8C_rU_f/L_{1/4}}$ and $U_f$, $L_{1/4}$ and $C_r$ are defined in the Supplementary Material. When $r_Q\sqrt{C_p}\ll \sqrt{d}$, we have $D^{\pm}_{(t)}=q_\alpha(1+o(1))$. Based on the above thresholds and critical values, we show that the proposed procedure can asymptotically control the family-wise error rate (FWER) at the $\alpha$ level. 

\begin{theorem}
	\label{thm:fwer}
	Consider estimating $I_0$ and $I_1$ by $\hat{I}_0$ and $\hat{I}_1$ defined in Section~\ref{sc:finite}. If \eqref{eq:dependency} is satisfied and Assumptions~\ref{assmp:center}-\ref{assmp:signal} in the Supplementary Material hold, we can show that
	$$
	\PP(\hat{I}_1\cap I_0\ne \emptyset)\le \alpha+o(1).
	$$ 
\end{theorem}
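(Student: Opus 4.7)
The plan is to decompose the false-discovery event by two potential sources of error: an incorrect sign decision by the median step, and the Type I error of the directional two-sample tests given correct sign detection. Define the ``sign-consistency'' event $\Omega_1$ on which, for every admissible subset $V\subseteq[d]$ containing $I_0$, the empirical median satisfies $\hat{M}(V)>-M$ whenever $b(V)\ge 1$ and $\hat{M}(V)<M$ whenever $b(V)\le 1$. Condition \eqref{eq:dependency} is tailored precisely to yield $\PP(\Omega_1^c)=o(1)$ with the choice $M=\sqrt{C_p\log d/d}$: the weak-dependence bound on conditional variances and covariances of the indicators $B_l$ enables a Bernstein-style concentration inequality for $\sum_l B_l$, which after a union bound over the relevant subsets $V$ controls $\hat{M}(V)$ simultaneously. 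Restricting to subsets that contain $I_0$ is justified a posteriori by the stopping argument below.

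Next, I would introduce $\Omega_2=\{I_0\subseteq V_{(t)}\text{ for all }t\}$, the event that no reference component is ever removed. Since $\hat{I}_1=U_{(T)}$ accumulates only the removed components, $\{\hat{I}_1\cap I_0\ne\emptyset\}=\Omega_2^c$, and it suffices to show $\PP(\Omega_2^c)\le \alpha+o(1)$. On $\Omega_1\cap\Omega_2$, every $V_{(t)}$ contains $I_0$, so the sign-detection step at iteration $t$ never triggers the wrong one-sided test: if $\hat{M}(V_{(t)})\ge M$ then $b(V_{(t)})$ is not substantially below $1$, and hence $R_i(V_{(t)})\ge -O(M)$ for $i\in I_0$, so the event $\hat{R}_i(V_{(t)})<-q_\alpha$ reduces to a Gaussian tail bound. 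The two-sided case is analogous but additionally carries an $O(M)$ renormalization bias that the $r_Q M$ padding in $D^{\pm}_{(t)}$ is chosen to absorb, with the constants $U_f$, $L_{1/4}$, $C_r$ giving the precise quantitative size of this bias.

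To finish, I would apply a Chernozhukov--Chetverikov--Kato style Gaussian max-approximation to the $|I_0|$-dimensional vector $(\hat{R}_i(V): i\in I_0)$ uniformly over subsets $V\supseteq I_0$. The quantile $q_\alpha$ is defined as the upper $\alpha$-quantile of $\max_{i\in I_0}\sqrt{z_{1,i}^2+z_{2,i}^2}$, which asymptotically upper-bounds the worst-case two-sided deviation over all iterations because on $\Omega_1\cap\Omega_2$ the sequence $V_{(t)}$ is nested and sandwiched between $I_0$ and $[d]$. Collecting the three cases, the total probability of a false rejection of any $i\in I_0$ across all iterations is at most $\alpha+o(1)$, giving $\PP(\Omega_2^c\cap\Omega_1)\le\alpha+o(1)$; combining with $\PP(\Omega_1^c)=o(1)$ yields the theorem.

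The principal obstacle will be controlling the interaction between the iterative renormalization and the high-dimensional multiple-testing correction: the statistics $\hat{R}_i(V_{(t)})$ for $i\in I_0$ are not independent across iterations, and $V_{(t)}$ is adapted to earlier test outcomes. The nested-subset observation on $\Omega_1\cap\Omega_2$ is what lets us trade this time-ordered dependence for a deterministic family of statistics indexed by supersets of $I_0$, at the cost of needing the Gaussian max-approximation to hold uniformly over that family under only the weak dependence in \eqref{eq:dependency}. Verifying such a uniform approximation, together with a careful expansion of $R_i(V)$ as a function of $b(V)-1$ in order to quantify the renormalization bias in units of $M$, is where the technical heft of the supplementary proof will reside.
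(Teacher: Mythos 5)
Your architecture matches the paper's: a global ``sign-consistency'' event for the median step driven by condition \eqref{eq:dependency} with threshold $M=\sqrt{C_p\log d/d}$, a uniform bound at level $q_\alpha$ on the noise part of the $t$-statistics contributing the $\alpha$, absorption of the renormalization bias by the $r_QM$ padding via the constants $U_f,L_{1/4},C_r$ (this is exactly Lemma~\ref{lm:meanbound}), and an induction showing $I_0\subseteq V_{(t)}$ so that the adaptive sequence is replaced by a deterministic family. The paper streamlines your $\Omega_1/\Omega_2$ bookkeeping slightly: it conditions on three global events $\Acal_1\cap\Acal_2\cap\Acal_3$ of probability $\ge 1-\alpha-o(1)$ and shows that on this intersection no reference component is ever rejected at any step, which yields the nesting deterministically rather than ``a posteriori.''

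The genuine gap is in the one concrete device you propose for uniformity over the adaptive subsets: a union bound ``over the relevant subsets $V$'' containing $I_0$. There are up to $2^{|I_1|}$ such supersets, and Assumption~\ref{assmp:signal} only gives $|I_1|\le C_d\sqrt{d}$; a Bernstein bound for the median indicator sum that survives a union bound over $2^{C_d\sqrt{d}}$ sets forces a deviation of order $d^{-1/4}$ for the empirical quantile, which swamps the required threshold $M\asymp\sqrt{\log d/d}$. The paper's key observation, which your sketch does not supply, is that $\hat{R}_i(I)$ depends on $I$ only through the scalar ratio $\sum_{i'\in I}\bar{P}_{1,i'}\big/\sum_{i'\in I}\bar{P}_{2,i'}$, so that $\hat{R}_i(I)=\hat{R}_i(\alpha_I)$ for a one-dimensional parameter $\alpha_I\in(\delta,1-\delta)$. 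Uniformity is then needed only over this one-dimensional family: the median concentration is proved by a bracketing-entropy bound of order $d^2$ plus chaining (Lemmas~\ref{lm:medianbound}--\ref{lm:entropy}, costing only a $\log d$ factor), and the noise maximum is handled coordinatewise by the identity $\sup_{r}\lvert r z_{1}+\sqrt{1-r^2}z_{2}\rvert=\sqrt{z_{1}^2+z_{2}^2}$ together with Zaitsev's Gaussian approximation and a union bound over $i\in I_0$ (Lemma~\ref{lm:supguassian}), rather than a CCK-type joint max-approximation. Without this scalar reduction, the uniformity step you flag as the ``principal obstacle'' does not go through at the stated threshold $M$.
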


Theorem~\ref{thm:fwer} suggests that FWER can be controlled at the $\alpha$ level asymptotically by our newly proposed method. Section~\ref{sc:fdr} shows that the critical values $(D^{\pm}_{(t)},D^{+}_{(t)},D^{-}_{(t)})$ can also be chosen to control the false discovery rate (FDR). 

In the RDB test, we need to specify three parameters: the threshold for median $M$, the critical value for each component $T$, and the effect size of the first step $r_Q$ if we set $D^{\pm}_{(t)}=T+r_QM$ and $D^{+}_{(t)}=D^{-}_{(t)}=T$. $M$ and $r_Q$ mainly control the decision of testing direction in each iteration and its effect. Although larger $M$ and $r_Q$ increases robustness of the test in testing the wrong direction in the correlated case, they lead to smaller power in the second step. The choices of $M=\sqrt{2\log d/d}$ and $r_Q=0.2$ are supported by our experience and used in all numerical experiments of this paper. The simulation results suggest that the performance of RDB is not very sensitive to the choices of $M$ and $r_Q$, even in the correlated case. However, we need to carefully choose the critical value $T$, as it plays an important role in our test. The simulation experiments suggest that the theoretical choice $T=\sqrt{2\log d-2\log \alpha}$ works well; however, as we illustrate in Section~\ref{sc:fdr}, $T$ can also be chosen automatically, when we aim to control FDR.

\subsection{Power Analysis}

To conduct power analysis, we consider a three-component mixture model. Specifically, we assume the absolute abundances of each component $A_i$ are independent of each other, and $\{1,\ldots,d\}$ is split into three sets, $I_0$, $I_1^+$, and $I_1^-$. If the component is non-differential, that is, $i\in I_0$, we assume $A_i$ has the same distribution under $\pi_1$ and $\pi_2$. In contrast, if $i\in I_1^+$, we assume $A_{i}$ under $\pi_2$ has the same distribution with $A_{i}+\delta_+$ under $\pi_1$. Similarly, $A_{i}$ under $\pi_2$ has the same distribution with $A_{i}-\delta_-$ under $\pi_1$ if $i\in I_1^-$. Here, $\delta_+$ and $\delta_-$ are two constants. Under this model, $I_1=I_1^-\cup I_1^+$ is the set of differential components. We assume the sample size in the two groups is the same $m=m_1=m_2$. We also assume the variance of $\hat{P}_{k,j,i}$ to be the same for $k=1,2$, which is denoted by $\sigma_i^2$. We now characterize the sufficient conditions for reliably identifying differential components. 

\begin{theorem}
	\label{thm:power}
	Suppose $\delta_+$ and $\delta_-$ satisfy
	$$
	\EE_{\pi_1}\left({\delta_+-A_i|I_1^+|\delta_+/\sum_{i\in [d]}A_i \over \sum_{i\in [d]}A_i+\delta^\ast}\right)\ge (2+\epsilon)\sigma_{i} \sqrt{2\log d \over m},\qquad i\in I_1^+
	$$
	and
	$$
	\EE_{\pi_1}\left({A_i|I_1^-|\delta_-/\sum_{i\in [d]}A_i -\delta_-\over \sum_{i\in [d]}A_i+\delta^\ast}\right)\ge (2+\epsilon)\sigma_{i} \sqrt{2\log d \over m},\qquad i\in I_1^-,
	$$
	where $\delta^\ast=|I_1^+|\delta_+-|I_1^-|\delta_-$ and $\epsilon$ is some small constant. If we assume Assumption~\ref{assmp:samplesize}, \ref{assmp:signal} and $r_Q\sqrt{C_p}\ll \sqrt{d}$ hold, we have
	$$
	\PP(I_1\subset \hat{I}_1)\to 1.
	$$
\end{theorem}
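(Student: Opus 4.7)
The plan is to show that the signal-to-noise conditions in the hypothesis guarantee detection of every differential component over at most a few iterations of the RDB loop. The strategy combines uniform concentration of the two-sample $t$-statistics with an explicit calculation of the population signal $R_i(V_{(t)})$ and a careful tracking of the decision branches taken in step (b) of the procedure.

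First, I would establish a uniform concentration result: for every subset $I\subset[d]$ that can arise as some $V_{(t)}$, with probability tending to one, $\hat R_i(I)$ differs from its population $t$-ratio $R_i(I)/[\sigma^2_{1,i}(I)/m_1+\sigma^2_{2,i}(I)/m_2]^{1/2}$ by at most a fraction (of order $\epsilon/(2+\epsilon)$) of $\tilde q_\alpha=\sqrt{2\log d-2\log\alpha}$, uniformly over $i\in I$. Since the normalized counts are multinomial, Bernstein-type tails on the numerator combined with uniform consistency of the variance estimators $\hat\sigma^2_{k,i}(I)$ give sub-Gaussian control, and the random-set issue is handled by a union bound over the polynomially many $I$'s that can appear across all iterations. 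The assumption $r_Q\sqrt{C_p}\ll\sqrt d$ ensures that the two-sided cutoff $D^{\pm}_{(t)}=q_\alpha+r_QM$ reduces to $q_\alpha(1+o(1))\le\tilde q_\alpha$, and the one-sided cutoffs $D^+_{(t)}=D^-_{(t)}=q_\alpha$ are likewise bounded by $\tilde q_\alpha$.

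Second, I would compute the population signal $R_i([d])$ explicitly. Using $\sum_j A_j^{(2)}\stackrel{d}{=}\sum_j A_j^{(1)}+\delta^\ast$ together with independence of the $A_i$'s, straightforward algebra yields closed-form expressions such as $R_i([d])=\EE_{\pi_1}[(A_i\delta^\ast/S-\delta_+)/(S+\delta^\ast)]$ for $i\in I_1^+$ and analogous formulas for $i\in I_1^-$ and $i\in I_0$, where $S=\sum_j A_j^{(1)}$. The hypothesis of the theorem is calibrated so that the absolute value of the standardized signal exceeds $(2+\epsilon)\sqrt{2\log d}$ for every $i\in I_1\cap V_{(t)}$, while for $i\in I_0$ the sign of $R_i(V_{(t)})$ matches that of $M(V_{(t)})$. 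Combining this with the concentration step, the empirical median in step (a) falls in the same branch of $\{(-\infty,-M),[-M,M],(M,\infty)\}$ as $M(V_{(t)})$, so step (b) selects the branch consistent with the noiseless iteration, and every $I_1$-component compatible with that direction enters $W_{(t)}$ with high probability. Induction on $t$, bookended by Theorem~\ref{thm:nonoise} (which shows that the noiseless iteration exhausts $I_1$ in at most $|I_1|+1$ rounds), then yields $I_1\subset\hat I_1$ with probability tending to one.

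The hardest part is handling the data-dependent renormalization. The set $V_{(t)}$ is random, the statistics $\hat R_i(V_{(t)})$ are algebraically coupled across $i$ through the shared denominator $\sum_{j\in V_{(t)}}\bar P_{k,j}$, and the median decision is itself random; all three must be controlled simultaneously. The cross-component dependence and the random choice of $I$ can be dealt with by the moment bound~\eqref{eq:dependency} and by the same Gaussian comparison machinery used to prove Theorem~\ref{thm:fwer}, so the main new ingredient is a lower bound on $|R_i(V_{(t)})|$ that survives renormalization along the entire (data-driven) iteration path, which is exactly what the theorem's signal hypothesis provides.
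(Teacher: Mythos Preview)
Your proposal has the right high-level shape but contains a genuine gap and misses the paper's main simplifying device.

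\textbf{The union-bound step does not work.} You write that ``the random-set issue is handled by a union bound over the polynomially many $I$'s that can appear across all iterations.'' The sets $V_{(t)}$ are data-dependent, so you cannot union-bound over only the realized ones; and the collection of sets that \emph{could} arise has size $2^{|I_1|}$ (any $I_0\cup S$ with $S\subset I_1$), which under Assumption~\ref{assmp:signal} is $2^{O(\sqrt d)}$---far too large for tails of order $d^{-c}$. The paper sidesteps this via the scalar reparametrization already used in the proof of Theorem~\ref{thm:fwer}: for every $I$ there is an $\alpha_I\in(0,1)$ with $\hat R_i(I)=\hat R_i(\alpha_I)$, so a single supremum over $\alpha\in(0,1)$ (Lemma~\ref{lm:supguassian}) controls the noise term $r_i(\alpha)\epsilon_{1,i}+\sqrt{1-r_i^2(\alpha)}\epsilon_{2,i}$ uniformly over all renormalizations at once. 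You later mention ``the same Gaussian comparison machinery,'' but you do not identify this $\alpha$-trick, and it is precisely the replacement for the union bound.

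\textbf{The branch-matching/induction route is both too strong and unnecessary.} You plan to show that $\hat M(V_{(t)})$ always lands in the same interval of $\{(-\infty,-M),[-M,M],(M,\infty)\}$ as the population median and then follow the noiseless path of Theorem~\ref{thm:nonoise} for up to $|I_1|+1$ rounds. Exact branch matching can fail when the population median sits near a boundary (e.g.\ equals $0$), so this claim is not available in general. The paper does something simpler and specific to the three-component model: it conditions on $\mathcal A_1\cap\mathcal A_2$ (uniform noise control via the $\alpha$-trick plus variance consistency) and does a case analysis directly on the \emph{empirical} median $\hat M(V_{(0)})$. If $|\hat M(V_{(0)})|<M$, the two-sided test captures all of $I_1$ in one step because $|Q_{1,i}-Q_{2,i}|\ge(2+\epsilon)\sigma_i\sqrt{2\log d/m}$. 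If $\hat M(V_{(0)})\ge M$, the one-sided test captures all of $I_1^-$ and none of $I_1^+$; then $V_{(1)}=I_0'\cup I_1^+$ forces the population-level median $M^o_{\alpha_{(1)}}\le 0$, ruling out only the wrong one-sided branch, and a second iteration captures $I_1^+$. The symmetric case is analogous. Thus at most two iterations are ever needed, and neither the induction through $|I_1|+1$ rounds nor a full branch-matching lemma is used.
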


Theorem~\ref{thm:power} suggests that if the difference between two groups is sufficiently large (signal-to-noise ratio is at rate $\sqrt{\log d/m}$), the proposed methods can identify differential components consistently.

\section{Extensions of Robust Differential Abundance Test}
\label{sc:cb}

\subsection{Covariate Balancing}

Besides the compositional data from the two populations, we usually observe several covariates for each sample in observational studies. Specifically, our observed data is $(\hat{P}_{k,1},X_{k,1}),\ldots,(\hat{P}_{k,m_k},X_{k,m_k})$ for $k=1,2$, where $X_{k,j}\in \RR^d$ are the observed covariates. The imbalance between distributions of $X_{1,j}$ and $X_{2,j}$ can bring bias when covariates are related to both treatment assignment and outcome of interest \citep{imbens2015causal,rosenbaum1983central}. To remove the potential confounding effect, we adopt covariate balancing techniques. As the proposed robust differential abundance test is composed of a series of two-sample $t$ tests, we adopt one of the most widely used covariate balancing methods, that is, a weighting methods, to reduce the bias introduced by the confounding effects \citep{rosenbaum1987model,robins2000marginal,chan2016globally,yu2020covariate}.  Specifically, we adopt a weighting method to estimate weights $w_{k,j}$ for each sample by the observed covariates $X_{k,j}$ for $k=1,2$ and $j=1,\ldots, m_k$. Here, we assume $\sum_{j=1}^{m_k}w_{k,j}=1$ for $k=1,2$. Instead of a standard two-sample $t$ test, we consider a weighted version of the two-sample $t$ test for a subset $I\subset [d]$
$$
\hat{R}_{i,w}(I)={\bar{P}_{1,i,w}(I)-\bar{P}_{2,i,w}(I) \over  \{\hat{V}_{i,w}(I)\}^{1/2}},
$$
where $\bar{P}_{k,i,w}(I)$ and $\hat{\sigma}^2_{k,i}(I)$ are defined as
$$
\bar{P}_{k,i,w}(I)={\bar{P}_{k,i,w} \over \sum_{i\in I}\bar{P}_{k,i,w}}\qquad {\rm where} \qquad \bar{P}_{k,i,w}=\sum_{j=1}^{m_k}w_{k,j}\hat{P}_{k,j,i},
$$
and $\hat{V}_{i,w}(I)$ is an estimator for the variance of $\bar{P}_{1,i,w}(I)-\bar{P}_{2,i,w}(I) $. The statistic $\hat{R}_{i,w}(I)$ can then replace $\hat{R}_i(I)$ in the proposed method in Section~\ref{sc:finite}. Note that we can use any weighting method as long as we can estimate the variance. In the next section's numerical experiments, we consider the empirical balancing calibration weighting method (CAL) proposed by \cite{chan2016globally} (implemented in R package \texttt{ATE}), as it provides an estimator of the variance.


\subsection{False Discovery Rate Control}
\label{sc:fdr}

The choices of $D^{+}_{(t)}$, $D^{-}_{(t)}$ and $D^{\pm}_{(t)}$ in Section~\ref{sc:fwer} are designed for controlling the family-wise error rate under a correlated case. However, one may also want to control the false discovery rate (FDR) when desiring large power in practice. We now provide a BH-like procedure for our iterative method to control FDR without proof \citep{benjamini1995controlling}. Specifically, if we choose $D^{+}_{(t)}=D^{-}_{(t)}=T$ and $D^{\pm}_{(t)}=T+r_QM$ for some constant $T$, a plug-in estimator for the upper bound of false discovery proportion is
$$
\widehat{{\rm FDP}}(T)\le {|I_0|\PP\left(Z>T\right)\over \max\{|\hat{I}_1(T)|,1\}},
$$
where $|\hat{I}_1(T)|$ is the total number of rejected null hypotheses by the iterative method and $Z=\sqrt{z_{1}^2+z_{2}^2}$, where $z_{1}$ and $z_{2}$ are independent standard Gaussian random variables. This estimator naturally leads to the following procedure to choose threshold $\hat{T}$
$$
\hat{T}=\inf\left\{0\le T\le q_\alpha: {d\PP\left(Z>T\right)\over \max\{|\hat{I}_1(T)|,1\}\le \alpha}\right\}.
$$
After choosing $\hat{T}$, we can set $D^{+}_{(t)}=D^{-}_{(t)}=\hat{T}$ and $D^{\pm}_{(t)}=\hat{T}+r_QM$ in our iterative method. Simialr to the BH procedure, the guarantee of this procedure may need different components to have independence or some special correlation structure, as a plug-in estimator is used to estimte the false discovery proportion. In practice, when the number of iterations is small, $Z$ can also be chosen as $|z|$, where $z$ is a standard Gaussian random variable. 

\subsection{Continuous Outcome}

Although the discussions in Sections~\ref{sc:model} and \ref{sc:method} focus on two-sample comparison, the idea of this iterative method can also be applied to testing the association between compositional data and a continuous outcome. More concretely, suppose we observe $(\hat{P}_1,Y_1),\ldots, (\hat{P}_m,Y_m)$ where $\hat{P}$ is the compositional data and $Y$ is a continuous variable of interest. If we are interested in testing the linear association between compositional data and outcome, we can replace the two-sample $t$ test with a correlation test 
$$
\hat{R}_i(I)=r_i(I)\sqrt{m-2\over 1-r_i^2(I)},
$$
where $r_i(I)$ is the Pearson correlation coefficient between renormalized compositional data $\hat{P}_{j,i}/\sum_{i\in I}\hat{P}_{j,i}$ and outcome $Y_j$. The theoretical analysis of testing linear association is similar to the two-sample comparison case. However, it remains unclear whether this framework can be extended to testing the nonlinear association between compositional data and continuous outcome.

\section{Numerical Experiments}
\label{sc:nb}


\subsection{Simulation Studies}
\label{sc:simulation}

In simulation experiments, we compare the new robust differential abundance (RDB) test with six other common-used differential abundance tests. The tests we consider here are as follows: analysis of composition of microbiomes (ANCOM) by \cite{mandal2015analysis}, analysis of composition of microbiomes with bias correction (ANCOM.BC) by \cite{lin2020analysis}, differential abundance testing with compositionality adjustment (DACOMP) by \cite{brill2019testing}, differential expression analysis for sequence count data 2 (DESeq2) by \cite{love2014moderated}, and Wilcoxon rank sum test without normalization (Wilcoxon) and with total sum scaling normalization (Wilcoxon.TSS). The simulation experiments consider four different ways to simulate observed data: Poisson-Gamma distributions, log-normal distributions, log-normal distributions with covariates, and real data shuffling. The Supplementary Material contains the detailed settings of simulation experiments.

\begin{table}[h!]
	\small
	\centering
	\begin{tabular}{cccccccc}
		\hline\hline
		&& \multicolumn{2}{c}{$\rho=0.4$} & \multicolumn{2}{c}{$\rho=0.6$}  & \multicolumn{2}{c}{$\rho=0.8$}  \\ 
		&& FWER & Power &  FWER & Power & FWER & Power \\ 
		\hline
		\multirow{7}{*}{S1}&RDB & 0.02 (0.01)&0.57 (0.01)&0.01 (0.01)&0.59 (0.02)&0.01 (0.01)&0.58 (0.01)\\
		&ANCOM.BC & 0.63 (0.05)&0.72 (0.01)&0.60 (0.05)&0.73 (0.01)&0.63 (0.05)&0.71 (0.01)\\
		&ANCOM & 0.17 (0.04)&0.61 (0.01)&0.19 (0.04)&0.63 (0.01)&0.22 (0.04)&0.60 (0.01)\\
		&DESeq2 & 0.98 (0.01)&0.73 (0.01)&0.95 (0.02)&0.74 (0.01)&0.92 (0.03)&0.72 (0.01)\\
		&Wilcoxon & 0.65 (0.05)&0.61 (0.01)&0.58 (0.05)&0.63 (0.01)&0.61 (0.05)&0.60 (0.01)\\
		&Wilcoxon.TSS & 0.70 (0.05)&0.63 (0.01)&0.70 (0.05)&0.65 (0.01)&0.66 (0.05)&0.63 (0.01)\\
		&DACOMP & 0.41 (0.05)&0.67 (0.01)&0.45 (0.05)&0.69 (0.01)&0.38 (0.05)&0.66 (0.01)\\
		\hline
		\multirow{7}{*}{S2}&RDB & 0.01 (0.01)&0.29 (0.01)&0.01 (0.01)&0.30 (0.01)&0.00 (0.00)&0.32 (0.01)\\
		&ANCOM.BC & 0.37 (0.05)&0.43 (0.01)&0.43 (0.05)&0.44 (0.01)&0.46 (0.05)&0.44 (0.01)\\
		&ANCOM & 0.07 (0.03)&0.33 (0.01)&0.08 (0.03)&0.32 (0.01)&0.07 (0.03)&0.34 (0.01)\\
		&DESeq2 & 0.75 (0.04)&0.46 (0.01)&0.73 (0.04)&0.46 (0.01)&0.74 (0.04)&0.46 (0.01)\\
		&Wilcoxon & 0.29 (0.05)&0.41 (0.01)&0.39 (0.05)&0.42 (0.01)&0.39 (0.05)&0.44 (0.01)\\
		&Wilcoxon.TSS & 0.43 (0.05)&0.43 (0.01)&0.44 (0.05)&0.44 (0.01)&0.47 (0.05)&0.47 (0.01)\\
		&DACOMP & 0.21 (0.04)&0.37 (0.01)&0.29 (0.05)&0.38 (0.01)&0.28 (0.05)&0.39 (0.01)\\
		\hline\hline
	\end{tabular}
	\caption{Comparison of differential abundance tests when components are correlated. }
	\label{tb:correlatedcase}
\end{table}

We first simulate the data from Poisson-Gamma distributions and compare these seven different methods when the number of components $d$, number of differentially components $s$, and sample size $m$ vary. The results are summarized in Tables~\ref{tb:numtaxta}, \ref{tb:sparsity}, and \ref{tb:samplesize} when we aim to control FWER at the 10\% level. These results suggest that both ANCOM, ANCOM.BC, DACOMP and RDB tests can control the family-wise error rate relatively well, while other tests suffer from inflated FWER under our simulation settings. An interesting observation is that FWER in RDB, ANCOM, and ANCOM.BC is slightly inflated when the sample size becomes small; however, the FWER in other methods is inflated when the sample size increases. A similar phenomenon is observed and explained in \cite{lin2020analysis}. To compare performances on correlated data, we next simulate the data from log-normal distributions with a correlated covariance matrix and summarize the results in Table~\ref{tb:correlatedcase}. From Table~\ref{tb:correlatedcase}, we can observe that the RDB test can control FWER very well, while other tests inflate the FWER. We also compare the computation time of these seven differential abundance tests when drawing data from Poisson-Gamma distributions. The computation time summarized in Table~\ref{tb:time} suggests that our new test is very computationally efficient. To better understand the RDB test, we also study its performance when the sequencing depth is unbalanced. The results in Table~\ref{tb:unbsdep} suggest that the FWER in RDB is slightly inflated in the unbalanced sequencing depth.

\begin{table}[h!]
	\centering
	\begin{tabular}{ccccc}
		\hline\hline
		&& $d=100$ & $d=200$ & $d=500$ \\ 
		\hline
		RDB && 0.0041 & 0.0081 & 0.0225 \\ 
		ANCOM.BC && 0.7156 & 1.2223 & 2.8303  \\ 
		ANCOM && 15.6485 & 62.5443 & 400.1344  \\ 
		DESeq2 && 1.31384 & 1.3924 & 1.6444  \\ 
		Wilcoxon && 0.0282& 0.0533 & 0.1258  \\ 
		Wilcoxon.TSS && 0.0280 & 0.0552 & 0.1386  \\ 
		DACOMP && 0.4300 & 1.6985 & 8.3371  \\ 
		\hline\hline
	\end{tabular}
	\caption{Comparison of differential abundance tests in computation time (in seconds).}
	\label{tb:time}
\end{table}

We now investigate the performance of the RDB test extensions. Similar to previous experiments, we simulate the data from both Poisson-Gamma and log-normal distributions and compare these seven different methods; however, we now aim to control FDR at the 10\% level. Tables~\ref{tb:samplesizefdr}, \ref{tb:unbsdepfdr}, \ref{tb:correlatedcasefdr}, and \ref{tb:smallnumtaxta} summarize the results. Compared with FWER control, FDR control can boost power for all methods. RDB, ANCOM.BC, ANCOM, and DACOMP can control FDR well in the Poisson-Gamma case. In Table~\ref{tb:correlatedcasefdr}, only RDB can control the FDR in the correlated case; however we remain cautious, as it does not have a theoretical guarantee of controling FDR under arbitrary correlated settings. Next, we compare the performance of different methods when observing some confounding variables. We simulate the data from log-normal distributions with covariates and summarize the results in Table~\ref{tb:cvb}. From Table~\ref{tb:cvb}, the RDB test can control both FWER and FDR after covariate balancing adjustment. Finally, we study the performance of different methods when the outcome is continuous. We draw the data from a Poisson-Gamma distribution with a continuous outcome. The results in Table~\ref{tb:contoutcome} suggest that the RDB test can control FWER well in such Poisson-Gamma distribution.

\subsection{Analysis of Gut Microbiome Data}
\label{sc:realdata}

To further demonstrate the newly proposed method's practical merits, we apply it to a gut microbiota dataset of 476 samples collected in \cite{yatsunenko2012human}. These samples are divided into three groups based on their countries: 83 samples from Malawi (MA), 83 samples from Venezuela (VE), and 310 samples from the USA (US). This data set consists of 11905 OTUs, and we aggregate the dataset at the genus level, which leads to 649 different genera. After aggregation, we have 82.5\% zero entries in the dataset. Besides the countries, each sample in this dataset has two observed covariates: gender (an indicator for female) and age. The goal here is to compare the microbiome communities between different countries and identify differential genera. Before conducting the differential abundance analysis, we first investigate the degree of covariance balancing in this dataset. The violin plot of age is shown in Figure~\ref{fg:realcvb}, and the proportions of the females in Malawi, USA, and Venezuela are 63.9\%, 58.1\%, and 56.6\%, respectively. They suggest that neither age nor gender is well balanced across counties, and thus, covariate adjustment is needed in differential abundance analysis. 


\begin{figure}[h!]
	\centering
	\begin{subfigure}[b]{0.4\textwidth}
		\centering
		\includegraphics[width=\textwidth]{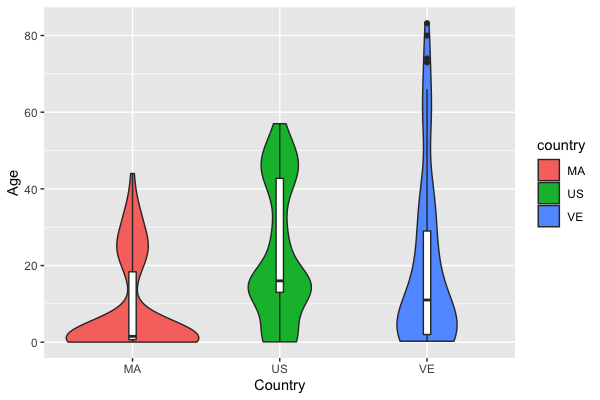}
		\caption{Violin plot of  age}\label{fg:realcvb}
	\end{subfigure}
	\begin{subfigure}[b]{0.52\textwidth}
		\centering
		\includegraphics[width=\textwidth]{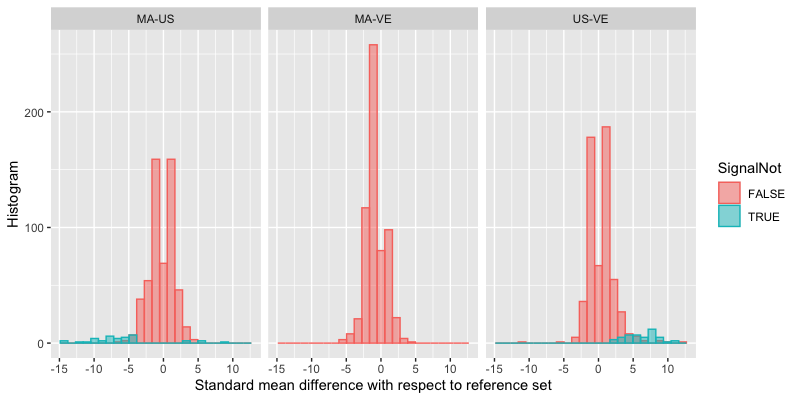}
		\caption{Histograms of standard mean difference}\label{fg:rdbhistogram}
	\end{subfigure}
	\caption{Left figure is a violin plot of age and right figure is the histograms of standard mean difference with respect to the reference set estimated by the RDB test.}\label{fg:covariate}
\end{figure}

We apply the new RDB test to each pair of countries to compare their microbiome communities. The RDB test identifies 37 differential genera between Malawi and USA, no differential genera between Malawi and Venezuela, and 47 between Venezuela and USA. We report the detailed genera in the Supplementary Material. There are 28 common genera reported between Malawi vs. USA and Venezuela vs. USA. In Figure~\ref{fg:rdbhistogram}, we consider all estimated non-differential genera as the reference set in each pairwise comparison, and summarize the standard mean difference with respect to this reference set. Figure~\ref{fg:rdbhistogram} shows that differential and non-differential genera are well separated using this reference set as a benchmark. The results suggest that there is no significant difference between Malawi and Venezuela.  All differential genera between USA and Venezuela have a larger abundance in the USA than Venezuela. Most of the differential genera between USA and Malawi have a larger abundance in the USA than Malawi. Besides, there is large overlap between differential genera in USA vs Venezuela and USA vs Malawi. Among the 28 common genera reported by these two comparisons, 12 belong to order {\it Clostridiales}  and phylum {\it Firmicutes}, which could be explained by a high-fat diet in the USA \citep{clarke2012gut}. Therefore, the pairwise analysis results are consistent with each other. We also apply the RDB test with FDR control, ANCOM, and ANCOM.BC to the same dataset. The results are summarized in Section~\ref{sc:compreal} of the Supplementary Material, suggesting some differences in the reported differential genera. This comparison can also conclude that the RDB test can better identify the differential components from an angle of reference-based hypothesis. 

\section{Concluding Remarks}
\label{sc:conclude}

This study only focuses on the RDB test with a standard two-sample $t$ test for the sake of clean form and theoretical analysis. However, as a general framework, the RDB test can also work with other popular two-sample location tests, such as the Mann-Whitney $U$ test. Using more complicated tests in the RDB test might improve practical performance; however, it will also involve more theoretical analyses. In addition, we mainly consider weighting methods to adjust the confounding effects of observed covariates. As noted in the previous sections, it is possible to apply other covariate balancing techniques too. For example, the RDB test can work with another popular covariate balancing technique, the matching method \citep{rosenbaum2010design,yu2019directional}. Application of RDB test to microbiome data usually requires choosing a suitable taxonomic rank for microbes. The RDB test can also be performed with multi-scale adaptive techniques to detect differential abundant microbes at a high resolution \citep{wang2021multi}. Finally, it could also be interesting to explore if the new framework can test other characteristics of abundances, such as the dispersion and skewness \citep{martin2020modeling}. 

\section*{Acknowledgment}
We thank the editor, associate editor and referees for valuable suggestions. This project is supported by the grant from the National Science Foundation (DMS-2113458).

\bibliographystyle{plainnat}
\bibliography{RDB}

\section*{Appendix A}

In this appendix, we provide some extra results on numerical experiments. 

\subsection{Simulation Settings}

The simulation experiments consider four ways to simulate observed data: Poisson-Gamma distributions, log-normal distributions, log-normal distributions with covariates and real data shuffling. We draw $m_1$ samples from the first population and $m_2$ samples from the second population for two-group comparisons. There are total $d$ components and $s$ differential components. The effect size of differential ones can be represented as a vector $a=(a_1,a_2,\ldots,a_d)$. $s$ components are randomly drawn, so $a_i=1$ if $i$ is a non-differential component. There are two settings for $s$ differential components: (i) we draw the effect size of differential components from a uniform distribution between 1 and 5; (ii) we draw half of the differential components' effect size from a uniform distribution between 1 and 5, and the other half from a uniform distribution between 0.2 and 1. If the sequencing depth is balanced between two groups, we draw the total number of counts $N^\ast_{k,j}$ randomly between 5000 and 50000 for each sample. If not, we draw $N^\ast_{k,j}$ randomly between 5000 and 50000 when $k=1$ and between $5000/\beta$ and $50000/\beta$ when $k=2$ for some $\beta\ge 1$. 

\begin{enumerate}[label=(\alph*)]
	\item {\it Poisson-Gamma Distributions.} In Poisson-Gamma distributions, we first randomly generate a Gamma distribution parameter vector $\gamma=(\gamma_1,\gamma_2,\ldots,\gamma_d)$ so that 60\% of them are 50, 30\% are 200, and 10\% are 10000. Then, given $\gamma$ and effect size $a$, the absolute abundance $A_{k,j,i}\sim {\rm Pois}(\gamma_i)$ if $k=1$ and $A_{k,j,i}\sim {\rm Pois}(a_i\gamma_i)$ if $k=2$ in a two-sample comparison case. In the continuous outcome case, we first draw each $Y_j$ from a uniform distribution between 0 and 1 and then draw the absolute abundance $A_{j,i}\sim {\rm Pois}[\{1+(a_i-1)Y_j\}\gamma_i]$. Finally, given the absolute abundance, we draw the count data from the model in Section~\ref{sc:model}.  
	\item {\it Log-Normal Distributions.} In log-normal distributions, we generate a mean vector $\mu=(\mu_1,\mu_2,\ldots,\mu_d)$ so that 60\% of them are 3, 30\% are 5, and 10\% are 10. Then, we set the covariance matrix $\Sigma$ such that $\Sigma_{l_1,l_2}=\rho^{|l_1-l_2|}$ for some given $\rho<1$. Given $\mu$, $\Sigma$, and effect size $a$, we draw the absolute abundance in the following way: $\log(A_{k,j})\sim N(\mu,\Sigma)$ when $k=1$ and $\log(A_{k,j})\sim N(\log(a)+\mu,\Sigma)$ when $k=2$. Here, $\log$ is an entry-wise operation for a vector. Finally, given the absolute abundance, we draw the count data from the model in Section~\ref{sc:model}. 
	\item {\it Log-Normal Distributions with Covariates.} In log-normal distributions with covariates, the latent covariates for each sample are generated in the following way: $W_{k,j}\sim N(\eta, I_{5\times 5})$ if $k=1$ and $W_{k,j}\sim N(-\eta,I_{5\times 5})$ if $k=2$, where $N$ represents the normal distribution, $\eta=(0.25,\ldots,0.25)\in \RR^5$ and $I_{5\times 5}$ is the $5\times 5$ identity matrix. Instead of observing $W_{k,j}$, we observe $X_{k,j}$ such that $X_{k,j}=\exp(W_{k,j})+W_{k,j}$, where $\exp$ is an entry-wise operation for a vector. Then, we generate an amplitude vector $\lambda=(\lambda_1,\lambda_2,\ldots,\lambda_d)$ so that 60\% of them are 1, 30\% are 2, and 10\% are 3. For the $i$th component, we draw the coefficient $\bb_i\in \RR^5$ in the following way: $\PP(b_i=\lambda_iu_i)=\PP(b_i=-\lambda_iu_i)=1/2$ where each component of $u_i$ follows a uniform distribution between $0$ and $1$. Given $B=(b_1,b_2,\ldots,b_d)$ and $W_{k,j}$, we draw the absolute abundance in the following way: $A_{k,j}=\exp(W_{k,j}^TB)+Z_{k,j}$ when $k=1$ and $A_{k,j}=2a\times\{\exp(W_{k,j}^TB)+Z_{k,j}\}$ when $k=2$. Here, each component of $Z_{k,j}$ is drawn from an exponential distribution. Finally, given the absolute abundance, we draw count data from the model in Section~\ref{sc:model}. 
	\item {\it Real Data Shuffling.} In real data shuffling, we need a count dataset of $m'$ samples $(N'_{1},\ldots,N'_{m'})$ from some real studies. This paper uses the gut microbiome of 476 samples collected in \cite{yatsunenko2012human}. We randomly draw $m_1+m_2$ samples from this dataset without replacement when $m_1+m_2\le 476$. Then, we randomly split these samples into two groups so that there are $m_1$ samples in the first group and $m_2$ samples in the second group. We can think the data are generated from the same distribution by doing this. 
\end{enumerate}

\subsection{Performance Measures}

To evaluate different methods' performance, we consider three different measures: 
\begin{enumerate}[label=(\alph*)]
	\item FWER, which is defined as the probability of having a false discovery, that is, $\PP(\hat{I}_1\cap I_0\ne \emptyset)$.
	\item FDR, which is defined as the expectation of false discovery proportion, that is, $\EE\{|\hat{I}_1\cap I_0|/\max(|\hat{I}_1|,1)\}$.
	\item Power, which is defined as the expected true discovery proportion, that is, $\EE(|\hat{I}_1\cap I_1|/|I_1|)$.
\end{enumerate}

In all simulation experiments, we also report standard error in parenthesis when we report the estimated measures. 

\subsection{Simulation Results}

In this section, we carry out simulation experiments. 

{\it Number of Components.} In this simulation experiment, we aim to compare the performance of different methods when the number of components is different. In particular, we vary $d=200, 500$ and choose Poisson-Gamma distributions with $m_1=m_2=50$, $s=20$ and $\beta=1$. We consider both setting 1 and setting 2 in effect size. This simulation experiment is designed to control FWER at the $10\%$ level. The results based on 100 times simulation experiments are summarized in Table~\ref{tb:numtaxta}. The results in Table~\ref{tb:numtaxta} show that both ANCOM and RDB tests can control FWER very well and ANCOM.BC and DACOMP lead to a little inflated FWER, while other tests suffer highly inflated FWER under this simulation setting. All tests achieve similar power, and there is a bit of power loss for ANCOM, Wilcoxon, and DACOMP.

{\it Number of Differential Components.} In this simulation experiment, we investigate the tests' performance when the number of differential components $s$ is different. We still choose Poisson-Gamma distributions with setting 1 and setting 2 in effect size and set $s=10, 40$, $d=200$, $m_1=m_2=50$, and $\beta=1$. This simulation experiment is designed to control FWER at the $10\%$ level and repeated 100 times. Table~\ref{tb:sparsity} summarizes the results of this simulation experiment. Table~\ref{tb:sparsity} can also be compared with the case of $s=20, d=200$ in Table~\ref{tb:numtaxta}. All differential abundance test performs similarly when the number of differential components is different.

\begin{table}[h!]
	\centering
	\begin{tabular}{cccccc}
		\hline\hline
		&& \multicolumn{2}{c}{$d=200$} & \multicolumn{2}{c}{$d=500$}  \\ 
		&& FWER & Power &  FWER & Power\\ 
		\hline
		\multirow{7}{*}{Setting 1}&RDB &0.04 (0.02)&0.91 (0.01)&0.09 (0.03)&0.85 (0.01)\\
		&ANCOM.BC & 0.19 (0.04)&0.93 (0.01)&0.24 (0.04)&0.88 (0.01)\\
		&ANCOM & 0.00 (0.00)&0.86 (0.01)&0.00 (0.00)&0.78 (0.01)\\
		&DESeq2 & 1.00 (0.00)&0.92 (0.01)&0.19 (0.04)&0.87 (0.01)\\
		&Wilcoxon & 0.27 (0.04)&0.77 (0.01)&0.10 (0.03)&0.77 (0.01)\\
		&Wilcoxon.TSS & 0.99 (0.01)&0.87 (0.01)&0.80 (0.04)&0.84 (0.01)\\
		&DACOMP & 0.13 (0.03)&0.83 (0.01)&0.12 (0.03)&0.70 (0.01)\\
		\hline
		\multirow{7}{*}{Setting 2}&RDB &0.08 (0.03)&0.79 (0.01)&0.04 (0.02)&0.67 (0.01)\\
		&ANCOM.BC & 0.14 (0.03)&0.81 (0.01)&0.17 (0.04)&0.71 (0.01)\\
		&ANCOM &0.00 (0.00)&0.70 (0.01)&0.00 (0.00)&0.55 (0.01)\\
		&DESeq2 &0.13 (0.03)&0.83 (0.01)&0.01 (0.01)&0.69 (0.01)\\
		&Wilcoxon & 0.10 (0.03)&0.71 (0.01)&0.03 (0.02)&0.61 (0.01)\\
		&Wilcoxon.TSS & 0.83 (0.04)&0.84 (0.01)&0.65 (0.05)&0.70 (0.01)\\
		&DACOMP & 0.09 (0.03)&0.60 (0.01)&0.08 (0.03)&0.46 (0.01)\\
		\hline\hline
	\end{tabular}
	\caption{Comparisons of differential abundance tests for different number of components $d$. Standard error is reported in parenthesis.}
	\label{tb:numtaxta}
\end{table}

\begin{table}[h!]
	\centering
	\begin{tabular}{cccccc}
		\hline\hline
		&& \multicolumn{2}{c}{$s=10$} & \multicolumn{2}{c}{$s=40$}  \\ 
		&& FWER & Power &  FWER & Power\\ 
		\hline
		\multirow{7}{*}{Setting 1}&RDB &0.05 (0.02)&0.92 (0.01)&0.02 (0.01)&0.91 (0.00)\\
		&ANCOM.BC & 0.18 (0.04)&0.93 (0.01)&0.48 (0.05)&0.93 (0.00)\\
		&ANCOM & 0.00 (0.00)&0.86 (0.01)&0.00 (0.00)&0.85 (0.01)\\
		&DESeq2 & 0.61 (0.05)&0.91 (0.01)&1.00 (0.00)&0.90 (0.00)\\
		&Wilcoxon & 0.03 (0.02)&0.81 (0.01)&0.68 (0.05)&0.67 (0.01)\\
		&Wilcoxon.TSS & 0.71 (0.05)&0.90 (0.01)&1.00 (0.00)&0.83 (0.01)\\
		&DACOMP & 0.14 (0.03)&0.82 (0.01)&0.30 (0.05)&0.81 (0.01)\\
		\hline
		\multirow{7}{*}{Setting 2}&RDB &0.06 (0.02)&0.78 (0.01)&0.10 (0.03)&0.79 (0.01)\\
		&ANCOM.BC & 0.18 (0.04)&0.80 (0.01)&0.17 (0.04)&0.80 (0.01)\\
		&ANCOM &0.00 (0.00)&0.70 (0.01)&0.00 (0.00)&0.69 (0.01)\\
		&DESeq2 &0.09 (0.03)&0.81 (0.01)&0.55 (0.05)&0.82 (0.01)\\
		&Wilcoxon & 0.03 (0.02)&0.68 (0.01)&0.23 (0.04)&0.70 (0.01)\\
		&Wilcoxon.TSS & 0.53 (0.05)&0.81 (0.01)&0.92 (0.03)&0.85 (0.01)\\
		&DACOMP & 0.09 (0.03)&0.63 (0.01)&0.09 (0.03)&0.59 (0.01)\\
		\hline\hline
	\end{tabular}
	\caption{Comparisons of different differential abundance tests when the number of differential components $s$ is different. Standard error is reported in parenthesis.}
	\label{tb:sparsity}
\end{table}

{\it Sample Size.} In this simulation experiment, we compare the effect of sample size on different differential abundance tests. In particular, we consider two cases: $m_1=m_2=20$ and $m_1=100, m_2=50$. The sample size is unbalanced in the second case. Like the last simulation, we also adopt Poisson-Gamma distributions with both setting 1 and setting 2 in effect size and set $s=20$, $d=200$, and $\beta=1$. We aim to control FWER at the $10\%$ level in this simulation experiment and repeat each case 100 times. In Table~\ref{tb:samplesize}, we summarize the results, which show that the FWER is a bit inflated, and the power is reduced when the sample size is small. 

\begin{table}[h!]
	\centering
	\begin{tabular}{cccccc}
		\hline\hline
		&& \multicolumn{2}{c}{$m=20$} & \multicolumn{2}{c}{$m=100$}  \\ 
		&& FWER & Power &  FWER & Power\\ 
		\hline
		\multirow{7}{*}{Setting 1}&RDB &0.11 (0.03)&0.85 (0.01)&0.04 (0.02)&0.92 (0.01)\\
		&ANCOM.BC & 0.30 (0.05)&0.88 (0.01)&0.25 (0.04)&0.94 (0.01)\\
		&ANCOM & 0.00 (0.00)&0.71 (0.01)&0.00 (0.00)&0.88 (0.01)\\
		&DESeq2 & 0.81 (0.04)&0.86 (0.01)&0.98 (0.01)&0.92 (0.01)\\
		&Wilcoxon & 0.08 (0.03)&0.54 (0.02)&0.31 (0.05)&0.81 (0.01)\\
		&Wilcoxon.TSS & 0.91 (0.03)&0.81 (0.01)&1.00 (0.00)&0.88 (0.01)\\
		&DACOMP & 0.06 (0.02)&0.70 (0.01)&0.13 (0.03)&0.85 (0.01)\\
		\hline
		\multirow{7}{*}{Setting 2}&RDB &0.11 (0.03)&0.69 (0.01)&0.07 (0.03)&0.83 (0.01)\\
		&ANCOM.BC & 0.24 (0.04)&0.74 (0.01)&0.20 (0.04)&0.84 (0.01)\\
		&ANCOM &0.00 (0.00)&0.50 (0.01)&0.00 (0.00)&0.76 (0.01)\\
		&DESeq2 &0.05 (0.02)&0.74 (0.01)&0.18 (0.04)&0.85 (0.01)\\
		&Wilcoxon & 0.03 (0.02)&0.46 (0.01)&0.12 (0.03)&0.76 (0.01)\\
		&Wilcoxon.TSS &0.78 (0.04)&0.72 (0.01)&0.86 (0.03)&0.88 (0.01)\\
		&DACOMP & 0.06 (0.02)&0.47 (0.01)&0.12 (0.03)&0.67 (0.01)\\
		\hline\hline
	\end{tabular}
	\caption{Comparisons of different differential abundance tests when sample size is different. Standard error is reported in parenthesis.}
	\label{tb:samplesize}
\end{table}

{\it Correlation between Components.} In this simulation experiment, we study the performance of different tests when the components are correlated with each other. To simulate the correlated data, we adopt log-normal distributions in this simulation experiment. We choose $\rho=0.4, 0.6$ and $0.8$ for different levels of correlation and set $m_1=m_2=50$, $s=20$, $d=200$, and $\beta=1$. We repeat each case 100 times and aim to control FWER at the $10\%$ level. The results are summarized in Table~\ref{tb:correlatedcase}. From Table~\ref{tb:correlatedcase}, we can observe that the RDB test can control FWER very well while other tests inflate the FWER. In the correlated case, the RDB test is not the most powerful compared with other tests.

{\it Computation Time.} This simulation experiment compares these differential abundance tests' computation time in the same setting as Table~\ref{tb:numtaxta}. We record the computation time from 10  replications with $d=100$, $200$, and $500$, and $m_1=m_2=50$, and all these methods are evaluated on the same desktop  (Intel Core i7 @3.8 GHz/16GB). The experiment results are summarized in Table~\ref{tb:time}. From Table~\ref{tb:time}, we can conclude that the new robust differential abundance (RDB) test is very computationally efficient and can report the results in less than 1 second. 

{\it Unbalanced Sequence Depth.} In this simulation experiment, we study the effect of unbalanced sequencing depth on the RDB test. We set $\beta=1,2,$ and $4$ to vary the level of unbalance. Poisson-Gamma distribution is used in this simulation experiment and $s=20$, $d=200$, $m_1=m_2=50$. Like previous simulation experiments, FWER targets at the $10\%$ level ,and each case repeats 100 times. The results are summarized in Table~\ref{tb:unbsdep}. The results in Table~\ref{tb:unbsdep} suggest that the FWER is inflated a bit, and power is reduced when the sequencing depth becomes unbalanced. Overall the performance is acceptable. This problem could be alleviated if the rank-based test replaces the $t$ test or we adopt sequence rarefication.

\begin{table}[h!]
	\small
	\centering
	\begin{tabular}{cccccccc}
		\hline\hline
		&& \multicolumn{2}{c}{$\beta=1$} & \multicolumn{2}{c}{$\beta=2$}  & \multicolumn{2}{c}{$\beta=4$}  \\ 
		&& FWER & Power &  FWER & Power & FWER & Power \\ 
		\hline
		\multirow{7}{*}{S1}&RDB & 0.04 (0.02)&0.91 (0.01)&0.06 (0.02)&0.88 (0.01)&0.14 (0.03)&0.84 (0.01)\\
		&ANCOM.BC &0.19 (0.04)&0.93 (0.01)&0.99 (0.01)&0.89 (0.01)&1.00 (0.00)&0.82 (0.01)\\
		&ANCOM & 0.00 (0.00)&0.86 (0.01)&0.03 (0.02)&0.79 (0.01)&0.97 (0.02)&0.66 (0.01)\\
		&DESeq2 & 1.00 (0.00)&0.92 (0.01)&1.00 (0.00)&0.91 (0.01)&1.00 (0.00)&0.90 (0.01)\\
		&Wilcoxon &0.27 (0.04)&0.77 (0.01)&1.00 (0.00)&0.44 (0.01)&1.00 (0.00)&0.47 (0.02)\\
		&Wilcoxon.TSS &0.99 (0.01)&0.87 (0.01)&0.99 (0.01)&0.84 (0.01)&0.99 (0.01)&0.79 (0.01)\\
		&DACOMP & 0.13 (0.03)&0.83 (0.01)&0.62 (0.05)&0.73 (0.01)&0.99 (0.01)&0.56 (0.02)\\
		\hline
		\multirow{7}{*}{S2}&RDB & 0.08 (0.03)&0.79 (0.01)&0.12 (0.03)&0.74 (0.01)&0.09 (0.03)&0.71 (0.01)\\
		&ANCOM.BC & 0.14 (0.03)&0.81 (0.01)&0.98 (0.01)&0.75 (0.01)&1.00 (0.00)&0.71 (0.01)\\
		&ANCOM & 0.00 (0.00)&0.70 (0.01)&0.00 (0.00)&0.57 (0.01)&0.90 (0.03)&0.49 (0.01)\\
		&DESeq2 & 0.13 (0.03)&0.83 (0.01)&0.99 (0.01)&0.74 (0.01)&1.00 (0.00)&0.66 (0.01)\\
		&Wilcoxon & 0.10 (0.03)&0.71 (0.01)&1.00 (0.00)&0.74 (0.01)&1.00 (0.00)&0.71 (0.01)\\
		&Wilcoxon.TSS &0.83 (0.04)&0.84 (0.01)&0.83 (0.04)&0.80 (0.01)&0.89 (0.03)&0.79 (0.01)\\
		&DACOMP & 0.09 (0.03)&0.60 (0.01)&0.51 (0.05)&0.49 (0.01)&0.95 (0.02)&0.46 (0.01)\\
		\hline\hline
	\end{tabular}
	\caption{Performance of RDB test when sequencing depth is unbalanced. Standard error is reported in parenthesis. S1=Setting 1 and S2=Setting 2.}
	\label{tb:unbsdep}
\end{table}

{\it False Discovery Rate Control.} In this simulation experiment, we mainly focus on the false discovery rate control. To control the false discovery rate, RDB chooses the critical values based on the method introduced in Section~\ref{sc:cb}, and all other methods adopt the BH procedure to control FDR. We repeat the same simulation experiments in Tables~\ref{tb:samplesize}, \ref{tb:unbsdep}, and \ref{tb:correlatedcase}, but aim to control FDR at the 10\% level. The results are summarized in Tables~\ref{tb:samplesizefdr}, \ref{tb:unbsdepfdr}, and \ref{tb:correlatedcasefdr}, respectively. These results suggest that power increases in all methods, and RDB, ANCOM.BC, ANCOM, and DACOMP can well control FDR in the Poisson-Gamma case. Only RDB can control the FDR in the correlated case, but we shall be cautious about applying RDB to control FDR in practice as it does not have a theoretical guarantee of controling FDR under arbitrary correlated settings. In addition, we also design a simulation experiment to compare the performances of differential abundance tests when the number of components $d$ are relative small. Specifically, we choose $d=30$ and $60$ in Poisson-Gamma distribution with setting 1 and 2 ($m_1=m_2=50$ and $\beta=1$) and summarize the results in Table~\ref{tb:smallnumtaxta}. Since the RDB test is mainly designed for high dimensional problem, the FDR is a bit inflated. 

\begin{table}[h!]
	\centering
	\begin{tabular}{cccccc}
		\hline\hline
		&& \multicolumn{2}{c}{$m=20$} & \multicolumn{2}{c}{$m=100$}  \\ 
		&& FDR & Power &  FDR & Power\\ 
		\hline
		\multirow{7}{*}{Setting 1}&RDB &0.14 (0.01)&0.91 (0.01)&0.11 (0.01)&0.95 (0.00)\\
		&ANCOM.BC & 0.18 (0.01)&0.92 (0.01)&0.23 (0.02)&0.96 (0.00)\\
		&ANCOM & 0.02 (0.00)&0.88 (0.01)&0.03 (0.00)&0.94 (0.01)\\
		&DESeq2 & 0.49 (0.01)&0.91 (0.01)&0.60 (0.00)&0.95 (0.01)\\
		&Wilcoxon & 0.20 (0.03)&0.72 (0.01)&0.49 (0.04)&0.88 (0.01)\\
		&Wilcoxon.TSS & 0.71 (0.02)&0.88 (0.01)&0.80 (0.02)&0.94 (0.01)\\
		&DACOMP & 0.10 (0.01)&0.80 (0.01)&0.12 (0.01)&0.90 (0.01)\\
		\hline
		\multirow{7}{*}{Setting 2}&RDB &0.13 (0.01)&0.79 (0.01)&0.13 (0.01)&0.88 (0.01)\\
		&ANCOM.BC & 0.15 (0.01)&0.80 (0.01)&0.15 (0.02)&0.88 (0.01)\\
		&ANCOM &0.02 (0.00)&0.76 (0.01)&0.03 (0.00)&0.86 (0.01)\\
		&DESeq2 &0.08 (0.01)&0.80 (0.01)&0.17 (0.02)&0.90 (0.01)\\
		&Wilcoxon & 0.08 (0.02)&0.62 (0.01)&0.22 (0.03)&0.83 (0.01)\\
		&Wilcoxon.TSS &0.51 (0.03)&0.83 (0.01)&0.65 (0.02)&0.93 (0.01)\\
		&DACOMP & 0.09 (0.01)&0.60 (0.01)&0.10 (0.01)&0.74 (0.01)\\
		\hline\hline
	\end{tabular}
	\caption{Comparisons of different differential abundance tests when sample size is different and the target is to control FDR. Standard error is reported in parenthesis.}
	\label{tb:samplesizefdr}
\end{table}

\begin{table}[h!]
	\small
	\centering
	\begin{tabular}{cccccccc}
		\hline\hline
		&& \multicolumn{2}{c}{$\beta=1$} & \multicolumn{2}{c}{$\beta=2$}  & \multicolumn{2}{c}{$\beta=4$}  \\ 
		&& FDR & Power &  FDR & Power & FDR & Power \\ 
		\hline
		\multirow{7}{*}{S1}&RDB & 0.10 (0.01)&0.95 (0.01)&0.10 (0.01)&0.93 (0.01)&0.12 (0.01)&0.91 (0.01)\\
		&ANCOM.BC & 0.16 (0.02)&0.96 (0.01)&0.79 (0.01)&0.94 (0.01)&0.82 (0.00)&0.90 (0.01)\\
		&ANCOM & 0.03 (0.00)&0.94 (0.01)&0.44 (0.02)&0.89 (0.01)&0.79 (0.00)&0.86 (0.01)\\
		&DESeq2 & 0.58 (0.00)&0.95 (0.01)&0.63 (0.00)&0.95 (0.01)&0.77 (0.00)&0.96 (0.00)\\
		&Wilcoxon & 0.30 (0.04)&0.87 (0.01)&0.92 (0.00)&0.73 (0.01)&0.92 (0.00)&0.73 (0.01)\\
		&Wilcoxon.TSS & 0.76 (0.02)&0.93 (0.01)&0.77 (0.02)&0.92 (0.01)&0.77 (0.02)&0.90 (0.01)\\
		&DACOMP & 0.09 (0.01)&0.89 (0.01)&0.50 (0.03)&0.83 (0.01)&0.76 (0.01)&0.73 (0.01)\\
		\hline
		\multirow{7}{*}{S2}&RDB & 0.12 (0.01)&0.86 (0.01)&0.13 (0.01)&0.84 (0.01)&0.14 (0.01)&0.80 (0.01)\\
		&ANCOM.BC & 0.13 (0.01)&0.86 (0.01)&0.77 (0.01)&0.86 (0.01)&0.84 (0.00)&0.83 (0.01)\\
		&ANCOM & 0.03 (0.00)&0.84 (0.01)&0.27 (0.02)&0.75 (0.01)&0.78 (0.01)&0.68 (0.01)\\
		&DESeq2 & 0.13 (0.01)&0.88 (0.01)&0.60 (0.01)&0.84 (0.01)&0.80 (0.01)&0.76 (0.01)\\
		&Wilcoxon &0.15 (0.03)&0.80 (0.01)&0.91 (0.00)&0.88 (0.01)&0.92 (0.00)&0.83 (0.01)\\
		&Wilcoxon.TSS & 0.60 (0.03)&0.92 (0.01)&0.54 (0.03)&0.89 (0.01)&0.60 (0.02)&0.88 (0.01)\\
		&DACOMP & 0.09 (0.01)&0.71 (0.01)&0.37 (0.02)&0.61 (0.01)&0.74 (0.01)&0.59 (0.01)\\
		\hline\hline
	\end{tabular}
	\caption{Comparisons of different differential abundance tests when sequencing depth is unbalanced and the target is to control FDR. Standard error is reported in parenthesis. S1=Setting 1 and S2=Setting 2.}
	\label{tb:unbsdepfdr}
\end{table}

\begin{table}[h!]
	\small
	\centering
	\begin{tabular}{cccccccc}
		\hline\hline
		&& \multicolumn{2}{c}{$\rho=0.4$} & \multicolumn{2}{c}{$\rho=0.6$}  & \multicolumn{2}{c}{$\rho=0.8$}  \\ 
		&& FDR & Power &  FDR & Power & FDR & Power \\ 
		\hline
		\multirow{7}{*}{S1}&RDB & 0.06 (0.01)&0.77 (0.01)&0.06 (0.01)&0.74 (0.01)&0.06 (0.01)&0.75 (0.01)\\
		&ANCOM.BC &0.44 (0.03)&0.81 (0.01)&0.53 (0.03)&0.79 (0.01)&0.50 (0.03)&0.79 (0.01)\\
		&ANCOM & 0.27 (0.02)&0.77 (0.01)&0.35 (0.03)&0.76 (0.01)&0.32 (0.02)&0.75 (0.01)\\
		&DESeq2 & 0.61 (0.01)&0.83 (0.01)&0.63 (0.02)&0.81 (0.01)&0.63 (0.01)&0.81 (0.01)\\
		&Wilcoxon & 0.56 (0.04)&0.76 (0.01)&0.64 (0.04)&0.74 (0.01)&0.63 (0.04)&0.74 (0.01)\\
		&Wilcoxon.TSS & 0.57 (0.04)&0.77 (0.01)&0.67 (0.03)&0.76 (0.01)&0.66 (0.03)&0.75 (0.01)\\
		&DACOMP & 0.34 (0.03)&0.78 (0.01)&0.44 (0.03)&0.77 (0.01)&0.40 (0.03)&0.77 (0.01)\\
		\hline
		\multirow{7}{*}{S2}&RDB & 0.04 (0.01)&0.38 (0.01)&0.06 (0.01)&0.41 (0.01)&0.06 (0.01)&0.41 (0.01)\\
		&ANCOM.BC & 0.35 (0.03)&0.52 (0.01)&0.34 (0.03)&0.53 (0.01)&0.31 (0.03)&0.53 (0.01)\\
		&ANCOM & 0.22 (0.02)&0.46 (0.01)&0.22 (0.02)&0.46 (0.01)&0.19 (0.02)&0.46 (0.01)\\
		&DESeq2 & 0.54 (0.02)&0.55 (0.01)&0.51 (0.02)&0.56 (0.01)&0.49 (0.02)&0.56 (0.01)\\
		&Wilcoxon & 0.40 (0.04)&0.58 (0.02)&0.39 (0.04)&0.59 (0.02)&0.37 (0.04)&0.57 (0.02)\\
		&Wilcoxon.TSS & 0.43 (0.04)&0.61 (0.02)&0.41 (0.04)&0.61 (0.02)&0.36 (0.04)&0.59 (0.02)\\
		&DACOMP & 0.29 (0.03)&0.46 (0.01)&0.26 (0.03)&0.47 (0.01)&0.23 (0.03)&0.46 (0.01)\\
		\hline\hline
	\end{tabular}
	\caption{Comparisons of different differential abundance tests when there is correlation and the target is to control FDR. Standard error is reported in parenthesis. S1=Setting 1 and S2=Setting 2.}
	\label{tb:correlatedcasefdr}
\end{table}

\begin{table}[h!]
	\centering
	\begin{tabular}{cccccc}
		\hline\hline
		&& \multicolumn{2}{c}{$d=30$} & \multicolumn{2}{c}{$d=60$}  \\ 
		&& FDR & Power &  FDR & Power\\ 
		\hline
		\multirow{7}{*}{Setting 1}&RDB &0.20 (0.03)&0.96 (0.01)&0.18 (0.02)&0.97 (0.01)\\
		&ANCOM.BC & 0.13 (0.02)&0.98 (0.01)&0.11 (0.01)&0.97 (0.01)\\
		&ANCOM & 0.06 (0.01)&0.96 (0.01)&0.04 (0.01)&0.97 (0.01)\\
		&DESeq2 & 0.55 (0.02)&0.97 (0.01)&0.64 (0.01)&0.97 (0.01)\\
		&Wilcoxon & 0.20 (0.04)&0.88 (0.02)&0.24 (0.04)&0.89 (0.01)\\
		&Wilcoxon.TSS & 0.60 (0.03)&0.97 (0.01)&0.64 (0.02)&0.96 (0.01)\\
		&DACOMP & 0.08 (0.02)&0.95 (0.02)&0.11 (0.01)&0.94 (0.01)\\
		\hline
		\multirow{7}{*}{Setting 2}&RDB &0.14 (0.02)&0.91 (0.02)&0.16 (0.02)&0.75 (0.01)\\
		&ANCOM.BC & 0.12 (0.02)&0.94 (0.02)&0.10 (0.01)&0.75 (0.01)\\
		&ANCOM &0.05 (0.01)&0.91 (0.02)&0.05 (0.01)&0.72 (0.01)\\
		&DESeq2 &0.11 (0.02)&0.95 (0.02)&0.12 (0.01)&0.75 (0.01)\\
		&Wilcoxon & 0.13 (0.03)&0.78 (0.03)&0.11 (0.03)&0.65 (0.02)\\
		&Wilcoxon.TSS &0.39 (0.04)&0.94 (0.02)&0.45 (0.03)&0.79 (0.01)\\
		&DACOMP & 0.08 (0.02)&0.88 (0.02)&0.11 (0.02)&0.69 (0.01)\\
		\hline\hline
	\end{tabular}
	\caption{Comparisons of different differential abundance tests when number of components $d$ is small and the target is to control FDR. Standard error is reported in parenthesis.}
	\label{tb:smallnumtaxta}
\end{table}

{\it Covariate Balancing.} In this simulation study, we compare different methods' performance when some confounding variables are observed. Different from several previous simulation experiments, we adopt log-normal distributions with covariates here. We consider both setting 1 and 2, and set $s=20$, $m_1=m_2=50$, $d=200$ and $\beta=1$. We compare four methods: RDB test without any covariate adjustment, RDB test equipped with empirical balancing calibration weighting (CAL) by \cite{chan2016globally} (R package \texttt{ATE}), ANCOM, and ANCOM.BC. We focus on both FWER and FDR control in these simulation experiments. The results based on $100$ times simulation experiments are summarized in Table~\ref{tb:cvb}. From Table~\ref{tb:cvb}, the RDB test controls the FWER and FDR well when it works with weighting methods. 

\begin{table}[h!]
	\centering
	\begin{tabular}{ccccccc}
		\hline\hline
		&&& \multicolumn{2}{c}{Setting 1}  & \multicolumn{2}{c}{Setting 2} \\ 
		&&& FWER & Power  & FWER & Power  \\ 
		\hline
		\multirow{4}{*}{FWER Control}&RDB && 0.57 (0.05)&0.10 (0.02)&0.35 (0.05)&0.09 (0.01)\\
		&RDB-CAL && 0.06 (0.02)&0.63 (0.01)&0.11 (0.03)&0.44 (0.01)\\
		&ANCOM.BC && 0.50 (0.05)&0.79 (0.01)&0.52 (0.05)&0.66 (0.01)\\
		&ANCOM && 0.96 (0.02)&0.57 (0.01)&1.00 (0.00)&0.47 (0.01)\\
		\hline
		&&& \multicolumn{2}{c}{Setting 1}  & \multicolumn{2}{c}{Setting 2} \\ 
		&&& FDR & Power  & FDR & Power\\
		\hline
		\multirow{4}{*}{FDR Control}&RDB && 0.58 (0.03)&0.57 (0.03)&0.50 (0.03)&0.34 (0.03)\\
		&RDB-CAL && 0.16 (0.02)&0.79 (0.01)&0.14 (0.01)&0.60 (0.02)\\
		&ANCOM.BC && 0.31 (0.02)&0.85 (0.01)&0.27 (0.02)&0.76 (0.01)\\
		&ANCOM && 0.73 (0.01)&0.77 (0.01)&0.74 (0.01)&0.70 (0.01)\\
		\hline\hline
	\end{tabular}
	\caption{Comparisons of different differential abundance tests when covariates are observed. Standard error is reported in parenthesis.}
	\label{tb:cvb}
\end{table}

{\it Continous Outcome.} In this simulation experiment, we extend the binary outcome case to the continuous outcome. We still adopt Poisson-Gamma distribution but focus on the continuous outcome case. The setting of this simulation experiment is $m=40,80,160$, $s=20$, $d=200$ and $\beta=1$. Here, we compare the RDB test with ANCOM.BC, DESeq2, and DACOMP and aim to control FWER at the 10\% level. Table~\ref{tb:contoutcome} summarizes the simulation results of 100 simulation experiments. Through comparisons, we can conclude RDB test can control FWER very well. 

\begin{table}[h!]
	\small
	\centering
	\begin{tabular}{cccccccc}
		\hline\hline
		&& \multicolumn{2}{c}{$m=40$} & \multicolumn{2}{c}{$m=80$}  & \multicolumn{2}{c}{$m=160$}  \\ 
		&& FWER & Power &  FWER & Power & FWER & Power \\ 
		\hline
		\multirow{4}{*}{S1}&RDB &0.11 (0.03)&0.72 (0.01)&0.03 (0.02)&0.80 (0.01)&0.07 (0.03)&0.87 (0.01)\\
		&ANCOM.BC & 0.35 (0.05)&0.77 (0.01)&0.24 (0.04)&0.84 (0.01)&0.22 (0.04)&0.89 (0.01)\\
		&DESeq2 & 0.53 (0.05)&0.76 (0.01)&0.89 (0.03)&0.82 (0.01)&1.00 (0.00)&0.88 (0.01)\\
		&DACOMP & 0.13 (0.03)&0.43 (0.01)&0.09 (0.03)&0.61 (0.01)&0.19 (0.04)&0.74 (0.01)\\
		\hline
		\multirow{4}{*}{S2}&RDB & 0.12 (0.03)&0.49 (0.01)&0.08 (0.03)&0.61 (0.01)&0.05 (0.02)&0.71 (0.01)\\
		&ANCOM.BC & 0.41 (0.05)&0.54 (0.01)&0.29 (0.05)&0.66 (0.01)&0.16 (0.04)&0.74 (0.01)\\
		&DESeq2 & 0.05 (0.02)&0.54 (0.01)&0.15 (0.04)&0.69 (0.01)&0.17 (0.04)&0.77 (0.01)\\
		&DACOMP & 0.12 (0.03)&0.28 (0.01)&0.09 (0.03)&0.40 (0.01)&0.13 (0.03)&0.49 (0.01)\\
		\hline\hline
	\end{tabular}
	\caption{Comparisons of different differential abundance tests when the outcome of interest is a continuous variable. Standard error is reported in parenthesis. S1=Setting 1 and S2=Setting 2.}
	\label{tb:contoutcome}
\end{table}

{\it Distribution from Real Data.} In this simulation experiment, we investigate the performance of different tests when the data is drawn from some real dataset. In particular, we adopt the real data shuffling introduced in the previous section and set $m_1=m_2=50$. We compare RDB, ANCOM.BC,  DESeq2, Wilcoxon, Wilcoxon.TSS, and DACOMP in this simulation experiment. Both FWER and FDR control are studied and compared. We summarize the results based on 100 repeats in Table~\ref{tb:realdatashuffle}. The results suggest that most differential abundance tests can control false discoveries very well when we shuffle the real data. The idea of shuffling a real dataset can also be used to validate the choice of tuning parameters in the RDB test. 

\begin{table}[h!]
	\centering
	\begin{tabular}{ccccccc}
		\hline\hline
		& RDB & ANCOM.BC &  DESeq2 & Wilcoxon & Wilcoxon.TSS & DACOMP \\ 
		\hline
		FWER & 0.00 (0.00)&0.04 (0.02)&1.00 (0.00)&0.02 (0.01)&0.01 (0.01)&0.02 (0.01)\\
		FDR & 0.00 (0.00)&0.05 (0.02)&1.00 (0.00)&0.03 (0.02)&0.03 (0.02)&0.04 (0.02)\\
		\hline\hline
	\end{tabular}
	\caption{Comparisons of different differential abundance tests when the count data is randomly drawn from gut microbiome data. Standard error is reported in parenthesis. }
	\label{tb:realdatashuffle}
\end{table}

\subsection{Choice of Parameters in Simulation Experiments}

In this section, we report the choices of parameters for different methods.
\begin{enumerate}[label=(\alph*)]
	\item RDB: we always choose $M=\sqrt{2\log(d)/d}$, $q_\alpha=\sqrt{2\log d-2\log \alpha}$ with $\alpha=0.1$ and $r_Q=0.2$ for all experiments. When we want to control FDR, the critical value $T$ is chosen as Section~\ref{sc:fdr}.
	\item ANCOM.BC: we keep everything as default but set the level of significance as $\alpha=0.1$. We choose the Bonferroni method  to control FWER and BH method to control FDR. To handle the zero counts problem, we choose pseudo-count as 1.
	\item ANCOM: we keep everything as default but set the level of significance as $\alpha=0.1$. The cutoff proportion is chosen as 0.7 (use ``detected\_0.7") as recommended by the manual. We choose the Bonferroni method to control FWER and BH method to control FDR. To handle the zero counts problem, we choose pseudo-count as 1.
	\item DESeq2: we keep everything as default but set the significance level as $\alpha=0.1$ and fitType as ``mean". We choose the Bonferroni method to control FWER and BH method to control FDR. To handle the zero counts problem, we choose pseudo-count as 1.
	\item Wilcoxon: we choose the Bonferroni method to control FWER and BH method to control FDR.
	\item Wilcoxon.TSS: we choose the Bonferroni method to control FWER and BH method to control FDR.
	\item DACOMP: we select the reference set with function ``dacomp.select\_references" in the DACOMP package. All the settings in ``dacomp.select\_references" are default. We use ``dacomp.test" to test for differential abundance. In ``dacomp.test", the test option is chosen as ``DACOMP.TEST.NAME.WILCOXON" since it is a preferred option for 2-sample testing according to the manual. We also choose ``DACOMP.TEST.NAME.SPEARMAN" for the continuous outcome case. We set the significance level as $\alpha=0.1$ and keep other arguments in ``dacomp.test" as default. We choose the Bonferroni method to control FWER and BH method to control FDR. To handle the zero counts problem in reference selection, we choose pseudo-count as 1.
\end{enumerate}

\subsection{Identified Genera by RDB test in Real Data Example}

The genera identified by RDB test are reported in this section. 
\begin{enumerate}[label=(\alph*)]
	\item {\it MA vs. US.} Acidovorax, Akkermansia, Alistipes, Anaerotruncus, Bacteroides, Bilophila, Bulleidia, Butyrivibrio, Clostridium, Collinsella, Coprobacillus, Dehalobacterium, Dorea, Eggerthella, Eubacterium, Faecalibacterium, Herbaspirillum, Holdemania, Leptothrix, Lysobacter, Methanosphaera, Mobiluncus, Odoribacter, Oribacterium, Parabacteroides, Paucibacter, Pedobacter, Phascolarctobacterium, Prevotella, Roseburia, Ruminococcus, Sphaerotilus, Sphingobacterium, Sporanaerobacter, Succinivibrio, Tepidibacter and Turicibacter.
	\item {\it MA vs. VE.} None
	\item {\it US vs. VE.} Acetivibrio, Actinomyces, Akkermansia, Alistipes, Anaerococcus, Anaerofustis, Anaerostipes, Anaerotruncus, Anaerovorax, Bacteroides, Bifidobacterium, Bilophila, Blautia, Clostridium, Collinsella, Coprobacillus, Coprococcus, Corynebacterium, Dehalobacterium, Desulfitobacterium, Dialister, Dorea, Eggerthella, Eubacterium, Faecalibacterium, Granulicatella, Herbaspirillum, Holdemania, Lachnobacterium, Lachnospira, Leptothrix, Mobiluncus, Nitrosomonas, Odoribacter, Oribacterium, Oscillospira, Parabacteroides, Paucibacter, Phascolarctobacterium, Roseburia, Ruminococcus, Shuttleworthia, Sphaerotilus, Sphingobacterium, Sutterella, Tepidibacter and Turicibacter.
\end{enumerate}

\subsection{Comparison with Other Methods in Analysis of Gut Microbiome Data}
\label{sc:compreal}

\begin{figure}[h!]
	\centering
	\begin{subfigure}[b]{0.45\textwidth}
		\centering
		\includegraphics[width=\textwidth]{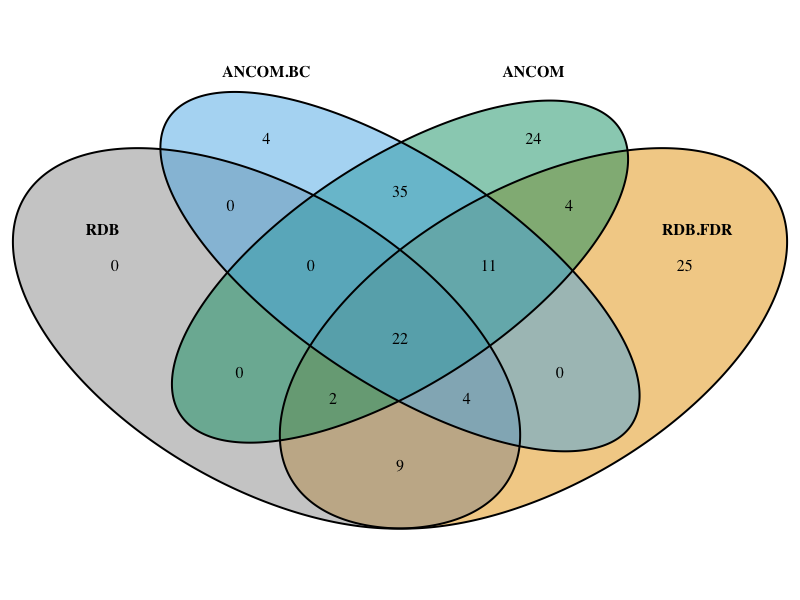}
		\caption{Malawi-USA}\label{fg:venn-MA-US}
	\end{subfigure}
	\hfill
	\begin{subfigure}[b]{0.45\textwidth}
		\centering
		\includegraphics[width=\textwidth]{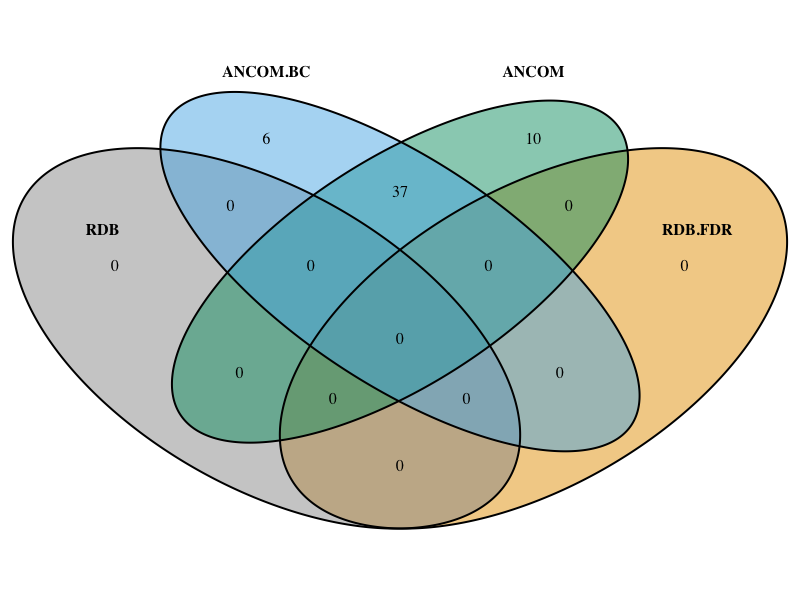}
		\caption{Malawi-Venezuela}\label{fg:venn-MA-VE}
	\end{subfigure}
	\begin{subfigure}[b]{0.45\textwidth}
		\centering
		\includegraphics[width=\textwidth]{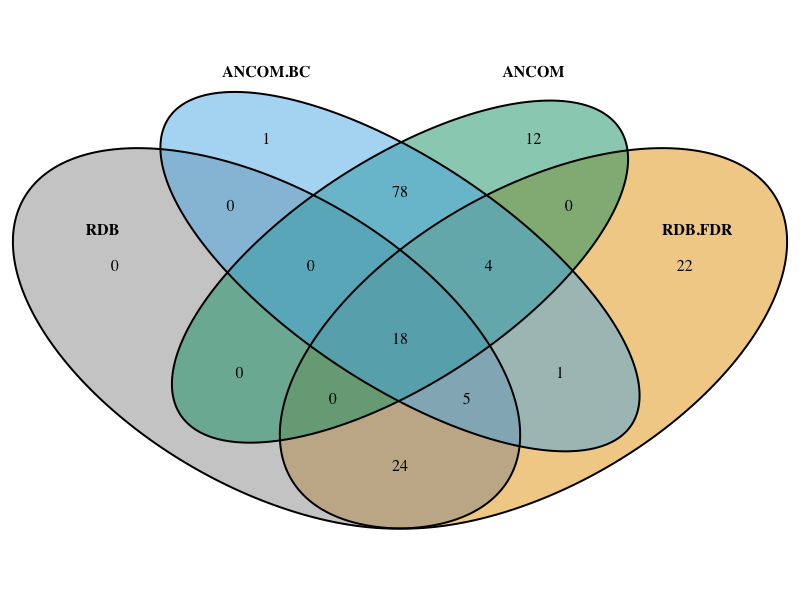}
		\caption{USA-Venezuela}\label{fg:venn-US-VE}
	\end{subfigure}
	\caption{Venn diagrams showing overlap between genera reported by different methods on the real data example.}\label{fg:overlap}
\end{figure}

In this section, we compare the RDB test with other methods in the real data example. In particular, we compare four different methods: RDB with FWER control (RDB), RDB with FDR control (RDB.FDR), ANCOM, and ANCOM.BC. The overlap between genera reported by these four different methods is shown in Figure~\ref{fg:overlap}. In Figure~\ref{fg:overlap}, RDB is relatively conservative as it aims for FWER control. The genera identified by RDB are usually reported by all other three methods or RDB.FDR alone as well. Besides the genera identified by RDB, there is a substantial difference between genera reported by RDB.FDR, ANCOM and ANCOM.BC. 

To further compare the results, we now investigate the differential and non-differential genera from the angle of reference-based hypothesis. The definition in the reference-based hypothesis suggests that we can regard all the non-differential genera identified by different methods as an estimated reference set. Ideally, given a perfect reference set, we can apply a standard two-sample test, such as a $t$-test, to identify differential genera after normalization on the reference set. So if the estimated reference set is good enough, the standard mean difference with respect to the reference set can separate the differential and non-differential genera very well. Figure~\ref{fg:realresult} shows the standard mean difference with respect to the reference set estimated by four different methods. In order to provide a consistent comparison, the genera are sorted according to the RDB's standard mean difference. Overall, these four methods' standard mean differences are highly correlated with each other. In RDB and RDB.FDR, the differential genera's standard mean differences are consistently different from the non-differential ones. However, some small standard mean differences in differential genera identified by ANCOM and ANCOM.BC suggest some of the non-differential genera might be misidentified. Therefore, the RDB test can better identify the differential genera from an angle of reference-based hypothesis. Compared with RDB, RDB.FDR can identify much more genera as it aims for FDR control. All these results are partly expected and are consistent with our simulation results.

\begin{figure}[h!]
	\centering
	\includegraphics[width=0.88\textwidth]{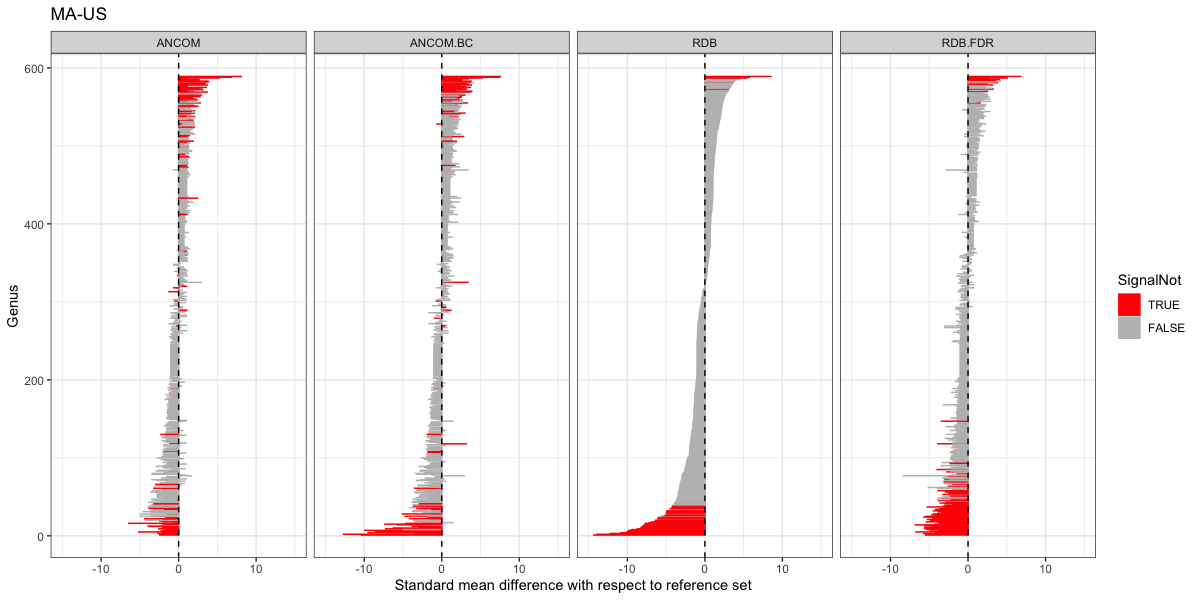}
	\includegraphics[width=0.88\textwidth]{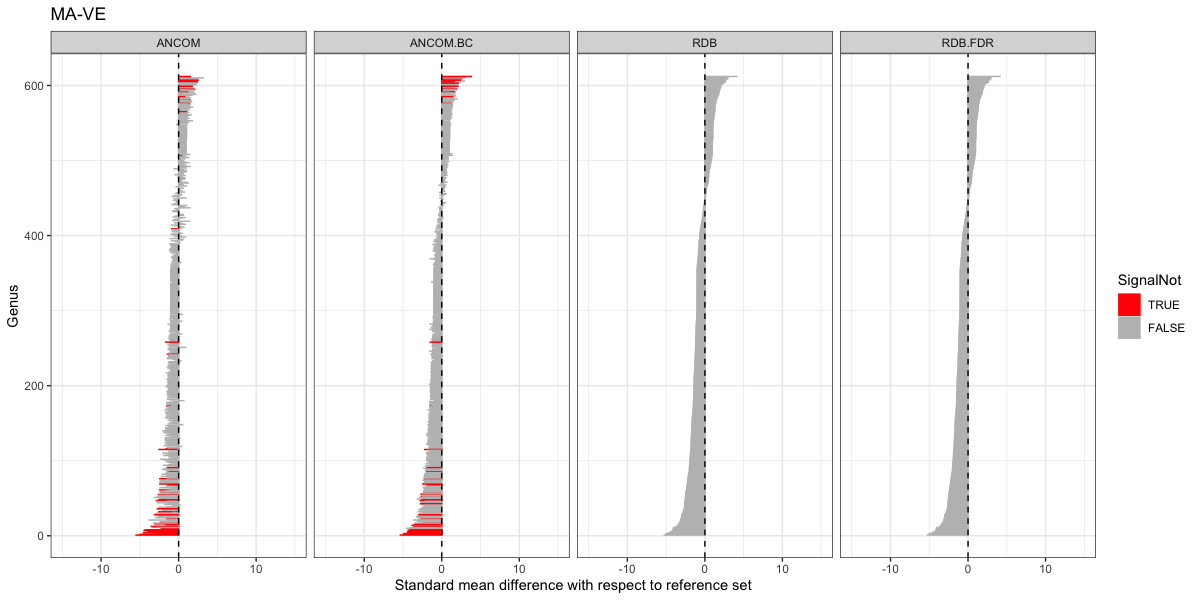}
	\includegraphics[width=0.88\textwidth]{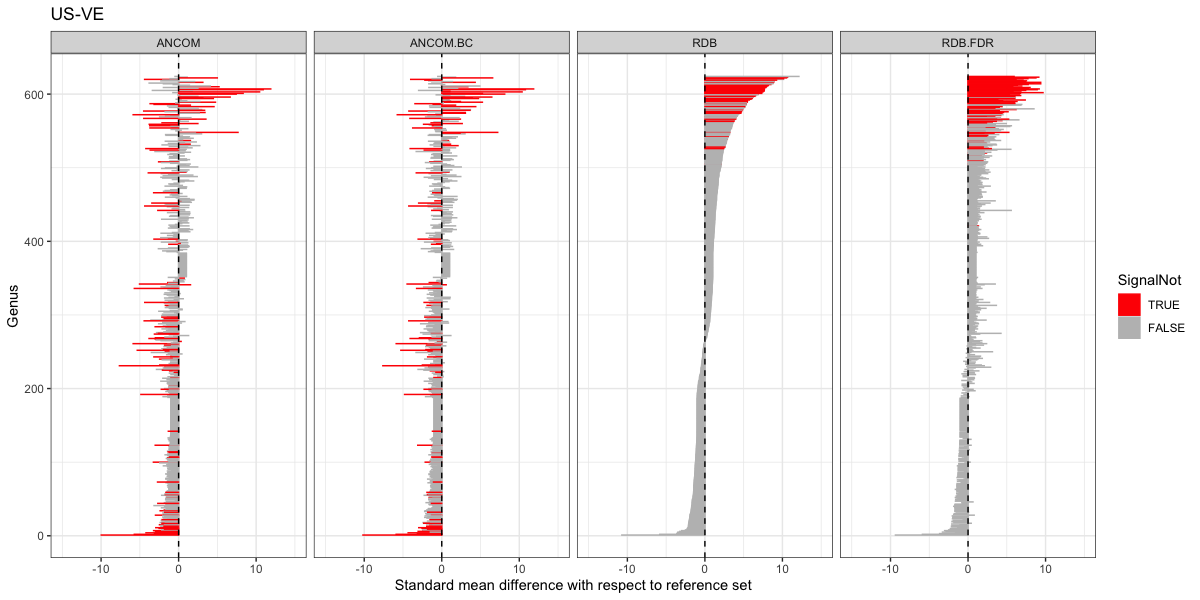}
	\caption{The standard mean difference with respect to the reference set estimated by different methods.}\label{fg:realresult}
\end{figure}

\section*{Appendix B}

In this appendix, we provide the proof for the main results, all the technical lemmas, and some technical assumptions.

\subsection{Technical Assumptions}

\begin{assumption}
	\label{assmp:center}
	If we define $\epsilon_{k,i}=\sqrt{m_k}{(\bar{P}_{k,i}-Q_{k,i})/\hat{\sigma}_{k,i}}$, then we assume the distribution of $\epsilon_{k,i}$, $ g_{k,i}$, satisfies 
	$$
	\int_{-\infty}^0 g_{k,i}(x)dx={1\over 2}.
	$$
\end{assumption}

\begin{assumption}
	\label{assmp:denupper}
	For any $\alpha$ and $i$, we assume the density of $\hat{R}_{i}(\alpha)$ is upper bounded by $U_f$.
\end{assumption}

\begin{assumption}
	\label{assmp:denlower}
	If $f_{i,\alpha}(x|s_1,s_2)$ is the conditional density function of  $r_i(\alpha)\epsilon_{1,i}-\sqrt{1-r^2_i(\alpha)}\epsilon_{2,i}$ given $\hat{\sigma}_{1,i}=s_1$ and $\hat{\sigma}_{2,i}=s_2$, then for any give $a$, we assume there is a constant $L_a>0$ such that
	$$
	f_{i,\alpha}(x|s_1,s_2)\ge L_a,\qquad -a<x<a.
	$$
\end{assumption}

\begin{assumption}
	\label{assmp:signaleven}
	We assume there exist a constant $C_r$ such that for any given $\alpha$,
	$$
	\max_{i\in I_0}{Q_{1,i}\over [\sigma_{1,i}^2/\alpha^2m_1+\sigma_{2,i}^2/(1-\alpha)^2m_2]^{1/2}}\le C_r{1\over |I_0|}\sum_{i\in I_0}{Q_{1,i}\over [\sigma_{1,i}^2/\alpha^2m_1+\sigma_{2,i}^2/(1-\alpha)^2m_2]^{1/2}}.
	$$
	In particular, when $\sigma_{1,i}=\sigma_{2,i}=\sigma_{i}$, we can just assume
	$$
	\max_{i\in I_0}{Q_{1,i}\over \sigma_{i}}\le C_r{1\over |I_0|}\sum_{i\in I_0}{Q_{1,i}\over \sigma_{i}}.
	$$
\end{assumption}

\begin{assumption}
	\label{assmp:samplesize}
	We assume $\log^3 d=o(m)$, where $m=\min(m_1,m_2)$.
\end{assumption}

\begin{assumption}
	\label{assmp:signal}
	We assume $|I_1|\le C_d\sqrt{d}$ for some constant $C_d$.
\end{assumption}

\subsection{Proof of Theorem~\ref{thm:nonoise}}

First, we show we always have $I_0\subset V_{(t)}$ and $U_{(t)}\subset I_1$ for each iteration $t$. Clearly, $I_{0}\subset V_{(0)}$ and $U_{(0)}\subset I_1$, since  $V_{(0)}=[d]$ and $U_{(0)}=\emptyset$. Now we assume $I_{0}\subset V_{(t)}$ and $U_{(t)}\subset I_1$. At iteration $t$, the definition of reference set suggests 
$$
{Q_{1,i} \over \sum_{i\in V_{(t)}}Q_{1,i}}=b(V_{(t)}){Q_{2,i} \over \sum_{i\in V_{(t)}}Q_{2,i}},\qquad i\in I_0.
$$
So $R_i(V_{(t)})$ for $i\in I_0$ have the same sign with $M(V_{(t)})$. This implies $W_{(t)}\cap I_0=\emptyset$, which immediately suggests that $I_{0}\subset V_{(t+1)}$ and $U_{(t+1)}\subset I_1$.

Next, we show that $U_{(T)}=I_1$ and $V_{(T)}=I_0$ when the loop stops. Suppose this is not the case, i.e. $V_{(T)}=I_0\cup I'_1$, where $I'_1\subset I_1$ and $I'_1\ne \emptyset$. Since $W_{(T-1)}$ is empty, we can know that $V_{(T-1)}=I_0\cup I'_1$. If $b(V_{(T-1)})>1$, there has to be at least one component $i\in I_1'$ such that 
$$
{Q_{1,i} \over \sum_{i\in V_{(T-1)}}Q_{1,i}}<{Q_{2,i} \over \sum_{i\in V_{(T-1)}}Q_{2,i}},
$$
because 
$$
\sum_{i\in V_{(T-1)}}{Q_{1,i} \over \sum_{i\in V_{(T-1)}}Q_{1,i}}=\sum_{i\in V_{(T-1)}} {Q_{2,i} \over \sum_{i\in V_{(T-1)}}Q_{2,i}}=1.
$$
So $W^-(V_{(T-1)})$ is not empty. Since $b(V_{(T-1)})>1$ implies $M(V_{(T-1)})>0$, we can know $W_{(T-1)}\ne \emptyset$. We can get the same conclusion when $b(V_{(T-1)})<1$. If $b(V_{(T-1)})=1$, then $W^+(V_{(T-1)})\cup W^-(V_{(T-1)})=I'_1$ and $W_{(T-1)}\ne \emptyset$. No matter what the value of $b(V_{(T-1)})$ is, we can know that $W_{(T-1)}\ne \emptyset$, which is contradicted with the stop condition $W_{(T-1)}= \emptyset$. Therefore, we show that $U_{(T)}=I_1$ and $V_{(T)}=I_0$. 

Lastly, as $W_{(t)}\ne \emptyset$ when $t=0,\ldots, T-2$, we can know that $T-1\le |I_1|$. The proof is complete.

\subsection{Proof of Theorem~\ref{thm:fwer}}
Throughout the proof, we write $m=\min(m_1,m_2)$. To account for the effect of renormalization, we define the following notation
$$
\hat{R}_i(\alpha)={\bar{P}_{1,i}/\alpha-\bar{P}_{2,i}/\sqrt{1-\alpha^2}\over [\hat{\sigma}_{1,i}^2/\alpha^2m_1+\hat{\sigma}_{2,i}^2/(1-\alpha)^2m_2]^{1/2}}
$$
For each subset $I$, there always exists an $\delta<\alpha_I<1-\delta$ such that $R_i(I)=R_i(\alpha_I)$. Here, $\hat{R}_i(\alpha)$ can also be rewritten as 
\begin{align*}
	\hat{R}_i(\alpha)&={Q_{1,i}/\alpha-Q_{2,i}/\sqrt{1-\alpha^2}\over [\hat{\sigma}_{1,i}^2/\alpha^2m_1+\hat{\sigma}_{2,i}^2/(1-\alpha)^2m_2]^{1/2}}+{\epsilon_{1,i}\hat{\sigma}_{1,i}/\sqrt{\alpha^2 m_1}-\epsilon_{2,i}\hat{\sigma}_{2,i}/\sqrt{(1-\alpha^2)m_2}\over [\hat{\sigma}_{1,i}^2/\alpha^2m_1+\hat{\sigma}_{2,i}^2/(1-\alpha)^2m_2]^{1/2}}\\
	&=r_i(\alpha)\left(\epsilon_{1,i}+{\sqrt{m_1}Q_{1,i}\over \hat{\sigma}_{1,i}}\right)-\sqrt{1-r^2_i(\alpha)}\left(\epsilon_{2,i}+{\sqrt{m_2}Q_{2,i}\over \hat{\sigma}_{2,i}}\right)\\
	&=r_i(\alpha)\epsilon'_{1,i}-\sqrt{1-r^2_i(\alpha)}\epsilon'_{2,i}.
\end{align*}
Here, $\epsilon_{k,i}=\sqrt{m_k}(\bar{P}_{k,i}-Q_{k,i})/\hat{\sigma}_{k,i}$, $\epsilon'_{k,i}=\epsilon_{k,i}+{\sqrt{m_k}Q_{k,i}/ \hat{\sigma}_{k,i}}$ and $r_i(\alpha)=(\hat{\sigma}_{1,i}/\sqrt{\alpha^2 m_1})/[\hat{\sigma}_{1,i}^2/\alpha^2m_1+\hat{\sigma}_{2,i}^2/(1-\alpha)^2m_2]^{1/2}$. We then can define the cumulative distribution function of a mixture distribution of  $\hat{R}_i(\alpha), i\in I_0$ for a given $\alpha$
$$
F^o_{\alpha}(x)={1\over |I_0|}\sum_{i\in I_0}F_{i,\alpha}(x),
$$
where $F_{i,\alpha}(x)$ is the cumulative distribution function of $\hat{R}_i(\alpha)$. The median of $F^o_{\alpha}(x)$ is denoted by $M^o_\alpha$, i.e. $M^o_\alpha={F^o_{\alpha}}^{-1}(1/2)$. Clearly, $M^o_\alpha$ is an indicator of $\alpha$. Specifically, $M^o_\alpha$ has the same sign with $b/\sqrt{1+b^2}-\alpha$, due to Assumption~\ref{assmp:center}. We write the order statistics of $\hat{R}_i(\alpha), i\in I_0$ as $\hat{R}_{(1)}(\alpha)\le \hat{R}_{(2)}(\alpha)\le \ldots\le \hat{R}_{(|I_0|)}(\alpha)$. Based on these order statistics, we define an event 
$$
\Acal_1=\left\{ L_R(\alpha,q)\le \hat{R}_{(\lfloor |I_0|q\rfloor)}(\alpha)\le U_R(\alpha,q), \delta<\alpha<1-\delta ,  |q-1/2|<\epsilon\right\},
$$
where $L_R(\alpha,q)={F^o_{\alpha}}^{-1}(q-A\sqrt{C_p\log d/ d})$, $U_R(\alpha,q)={F^o_{\alpha}}^{-1}(q+A\sqrt{C_p\log d/ d})$ and $A$ is specified later. An application of Lemma~\ref{lm:medianbound} suggests that $\PP(\Acal_1)=1-o(1)$ for a fixed $A$. Next, we can then define two events
$$
\Acal_2=\left\{ \sup_{i\in I_0,0<\alpha<1}\left|r_i(\alpha)\epsilon_{1,i}+\sqrt{1-r^2_i(\alpha)}\epsilon_{2,i}\right|\le q_\alpha\right\}
$$
and
$$
\Acal_3=\left\{ \sup_{i\in I_0,k=1,2}\left|{\hat{\sigma}_{k,i}-\sigma_{k,i}\over \sigma_{k,i}}\right|\le C_1\sqrt{\log d\over m} \right\}.
$$
Lemma~\ref{lm:supguassian} implies that $\PP(\Acal_2)\ge 1-\alpha+o(1)$ and an application of Lemma 3 in \cite{wang2020hypothesis} suggest that there exists a constant $C_1$ such that $\PP(\Acal_3)=1-o(1)$. The rest of analysis will be conducted conditioning on the event $\Acal=\Acal_1\cap \Acal_2\cap \Acal_3$.

When $|q-1/2|=o(\sqrt{\log d/d})$ and $|\alpha-b/\sqrt{1+b^2}|=o(1)$, then we can find a small $A$ such that
$$
|L_R(\alpha,q)-M^o_\alpha|\le \sqrt{C_p\log d\over d}\qquad {\rm and}\qquad |U_R(\alpha,q)-M^o_\alpha|\le \sqrt{C_p\log d\over d}.
$$
This suggested that conditioned on event $\Acal_1$, $|I_1|\le \sqrt{d}$ implies
$$
\begin{cases}
	M^o_{\alpha_I}<0&\qquad {\rm if}\ \hat{M}(I)<-M\\
	M^o_{\alpha_I}>0&\qquad {\rm if}\ \hat{M}(I)>M\\
	-2M<M^o_{\alpha_I}<2M&\qquad {\rm if}\ -M\le \hat{M}(I)\le M\\
\end{cases}
$$
for any $I$. At step $t$, there always exists a $\alpha_{(t)}$ such that $\hat{R}_i(V_{(t)})=\hat{R}_i(\alpha_{(t)})$. If $\hat{M}(V_{(t)})\ge M$, then we can know that $M_{\alpha_{(t)}}^o\ge 0$, which leads to $\alpha_{(t)}\le b/\sqrt{1+b^2}$. This suggests that for all $i\in I_0$,
$$
{Q_{1,i}/\alpha_{(t)}-Q_{2,i}/\sqrt{1-\alpha_{(t)}^2}\over [\hat{\sigma}_{1,i}^2/\alpha_{(t)}^2m_1+\hat{\sigma}_{2,i}^2/(1-\alpha_{(t)})^2m_2]^{1/2}}\ge 0.
$$
As $|r_i(\alpha_{(t)})\epsilon_{1,i}+\sqrt{1-r^2_i(\alpha_{(t)})}\epsilon_{2,i}|\le q_\alpha$ for all $i\in I_0$, we can conclude that 
$$
\hat{R}_i(V_{(t)})\ge -D_{(t)}^-,\qquad \forall\ i\in I_0.
$$
In other words, no false discovery is reported if $\hat{M}(V_{(t)})\ge M$. We can apply the similar argument to show that no false discovery is reported when $\hat{M}(V_{(t)})\le -M$.

If $-M< \hat{M}(V_{(t)})< M$, we have $-2M<M_{\alpha_{(t)}}^o< 2M$. By Lemma~\ref{lm:meanbound}, we can know 
$$
\max_{i\in I_0}\left|\EE\left({Q_{1,i}/\alpha_{(t)}-Q_{2,i}/\sqrt{1-\alpha_{(t)}^2}\over [\hat{\sigma}_{1,i}^2/\alpha_{(t)}^2m_1+\hat{\sigma}_{2,i}^2/(1-\alpha_{(t)})^2m_2]^{1/2}}I_{\Acal_3}\right)\right|\le {4C_rU_f\over L_{1/4}}\sqrt{C_p\log d\over d}.
$$
Since the analysis is conditioned on $\Acal_3$, we can know that
$$
\max_{i\in I_0}\left|{Q_{1,i}/\alpha_{(t)}-Q_{2,i}/\sqrt{1-\alpha_{(t)}^2}\over [\hat{\sigma}_{1,i}^2/\alpha_{(t)}^2m_1+\hat{\sigma}_{2,i}^2/(1-\alpha_{(t)})^2m_2]^{1/2}}\right|\le {8C_rU_f\over L_{1/4}}\sqrt{C_p\log d\over d}.
$$
Because $|r_i(\alpha_{(t)})\epsilon_{1,i}+\sqrt{1-r^2_i(\alpha_{(t)})}\epsilon_{2,i}|\le q_\alpha$, we have 
$$
-D_{(t)}^\pm\le \hat{R}_i(V_{(t)})\le D_{(t)}^\pm, \qquad \forall \ i\in I_0.
$$
Putting all together, no false discovery is reported at each step $t$ if the analysis is conditioned on event $\Acal$. We can complete the proof by noting that $\PP(\Acal)\ge 1-\alpha+o(1)$. 

\subsection{Proof of Theorem~\ref{thm:power}}

Without loss of generality, we assume $\delta^\ast>0$ in this proof. In the first iteration, we can know that
$$
|Q_{1,i}-Q_{2,i}|\ge (2+\epsilon)\sigma_{i}\sqrt{2\log d \over m},\qquad i\in I_1.
$$
If we define a event
$$
\Acal_1=\left\{\sup_{i\in [d],0<\alpha<1}\left|r_i(\alpha)\epsilon_{1,i}+\sqrt{1-r^2_i(\alpha)}\epsilon_{2,i}\right|\le q_\alpha\right\},
$$
then Lemma~\ref{lm:supguassian} suggests $\PP\left(\Acal_1\right)\to 1$. We can also define
$$
\Acal_2=\left\{ \sup_{i\in I_0,k=1,2}\left|{\hat{\sigma}_{k,i}-\sigma_{k,i}\over \sigma_{k,i}}\right|\le C_1\sqrt{\log d\over m} \right\}.
$$
Lemma 3 in \cite{wang2020hypothesis} suggest that there exists a constant $C_1$ such that $\PP(\Acal_2)=1-o(1)$. In the rest of analysis, we conduct the analysis conditioned on $\Acal_1\cap\Acal_2$.

When $-M<\hat{M}(V_{(0)})<M$, all components in $I_1$ are reported at once because for all $i\in I_1$, we have
$$
\left|\hat{R}_i(V_{(0)})\right|\ge (1+o(1))\sqrt{m}{|Q_{1,i}-Q_{2,i}|\over \sigma_i}-q_\alpha\ge (1+\epsilon)\sqrt{2\log d}\ge q_\alpha+r_QM.
$$
Thus, $I_1\subset \hat{I}_1$ if $-M<\hat{M}(V_{(0)})<M$. If $\hat{M}(V_{(0)})>M$, then all components in $I_1^-$ are reported at the first iteration as for all $i\in I_1^-$
$$
\hat{R}_i(V_{(0)})\le (1+o(1))\sqrt{m}{Q_{1,i}-Q_{2,i}\over \sigma_i}+q_\alpha\le -q_\alpha.
$$
With the similar analysis, we can know that no components in $I_1^+$ are reported. This suggests that $V_{(1)}=I'_0\cup I_1^+$, where $I'_0\subset I_0$. Because of $V_{(1)}$, we can know that the median of $M_{\alpha_{(1)}}^o\le 0$ and thus $-M<\hat{M}(V_{(1)})<M$ or $\hat{M}(V_{(1)})<-M$ happen with a probability approaching $1$. With the similar analysis, we can know that 
$$
\hat{R}_i(V_{(1)})\ge q_\alpha+r_QM,\qquad i\in I_1^+.
$$
So all components in $I_1^+$ are reported at the second iteration and we prove that $ I_1^-\cup I_1^+\subset \hat{I}_1$. 
If $\hat{M}(V_{(0)})<-M$, then we can apply the same arguments for the case of $\hat{M}(V_{(0)})>M$.

\subsection{Proof of Proposition~\ref{prop:indetification}}

Suppose there are two reference sets $I_{0,1}$ and $I_{0,2}$ such that $|I_{0,1}|>d/2$, $|I_{0,2}|>d/2$,
$$
Q_{1,i}=b_1Q_{2,i},\quad i\in I_{0,1}\quad {\rm and}\quad Q_{1,i}=b_2Q_{2,i},\quad i\in I_{0,2}
$$
for some positive number $b_1,b_2>0$. Since $|I_{0,1}|>d/2$ and $|I_{0,2}|>d/2$, we can know that $I_{0,1}\cap I_{0,2}\ne \emptyset$, which leads to $b_1=b_2$. Therefore, the null and alternative hypotheses based on $I_{0,1}$ and $I_{0,2}$ are the same. 

If we assume $|I_{0,1}|\le d/2$ and $|I_{0,2}|\le d/2$, then we can construct a case such that $I_{0,1}\cap I_{0,2}=\emptyset$,
$$
Q_{1,i}=b_1Q_{2,i},\quad i\in I_{0,1}\quad {\rm and}\quad Q_{1,i}=b_2Q_{2,i},\quad i\in I_{0,2}
$$
for some positive number $b_1\ne b_2$. If we use $I_{0,1}$ as reference set, then all components in $I_{0,1}$ belong to null hypothesis and all components in $I_{0,2}$ belong to alternative hypothesis. On the other hand, if we use $I_{0,2}$ as reference set, then all components in $I_{0,2}$ belong to null hypothesis and all components in $I_{0,1}$ belong to alternative hypothesis. Clearly, the null hypotheses defined by $I_{0,1}$ and $I_{0,2}$ are different. 

\subsection{Technical Lemmas}

\begin{lemma}
	\label{lm:meanbound}
	Suppose Assumption~\ref{assmp:center}-\ref{assmp:signaleven} are satisfied. There exists a small enough constant $\delta$ such that
	$$
	\max_{i\in I_0}\left|\EE\left({Q_{1,i}/\alpha-Q_{2,i}/\sqrt{1-\alpha^2}\over [\hat{\sigma}_{1,i}^2/\alpha^2m_1+\hat{\sigma}_{2,i}^2/(1-\alpha)^2m_2]^{1/2}}I_{\Acal_3}\right)\right|\le {2C_rU_f\over L_{1/4}}|M_\alpha^o|.
	$$
	for $|\alpha-b/\sqrt{1+b^2}|<\delta$. 
\end{lemma}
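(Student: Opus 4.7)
The plan is to reduce the random quantity $\EE[S_i(\alpha)\,I_{\Acal_3}]$, where $S_i(\alpha):=(Q_{1,i}/\alpha-Q_{2,i}/\sqrt{1-\alpha^2})/\sigma'_i(\alpha)$ with $\sigma'_i(\alpha)^2:=\hat\sigma_{1,i}^2/(\alpha^2m_1)+\hat\sigma_{2,i}^2/((1-\alpha)^2m_2)$, to a deterministic surrogate $S_i^*(\alpha)$ built from population variances; then use Assumption~\ref{assmp:signaleven} to pass from the maximum to the average over $i\in I_0$; then tie that average to $|M_\alpha^o|$ using the density bounds of Assumptions~\ref{assmp:denupper}--\ref{assmp:denlower}.

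First I would exploit the reference-set structure. Since $Q_{1,i}=bQ_{2,i}$ for $i\in I_0$, a direct algebraic manipulation gives
\[
\frac{Q_{1,i}}{\alpha}-\frac{Q_{2,i}}{\sqrt{1-\alpha^2}}=Q_{1,i}\,c(\alpha,b),\qquad c(\alpha,b):=\frac{b\sqrt{1-\alpha^2}-\alpha}{\alpha b\sqrt{1-\alpha^2}},
\]
so $S_i(\alpha)=c(\alpha,b)\,Q_{1,i}/\sigma'_i(\alpha)$. The factor $c(\alpha,b)$ is independent of $i$, vanishes exactly at $\alpha=b/\sqrt{1+b^2}$, and satisfies $\sign(c(\alpha,b))=\sign(b/\sqrt{1+b^2}-\alpha)=\sign(M_\alpha^o)$. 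Hence every $S_i(\alpha)$, $i\in I_0$, carries a common sign, matching that of $M_\alpha^o$.

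Next I would transfer from $\hat\sigma$ to $\sigma$ using $\Acal_3$. On $\Acal_3$ one has $\hat\sigma_{k,i}=\sigma_{k,i}(1+O(\sqrt{\log d/m}))$, so $\sigma'_i(\alpha)=\sigma^{*'}_i(\alpha)(1+o(1))$ with $\sigma^{*'}_i(\alpha)^2:=\sigma_{1,i}^2/(\alpha^2m_1)+\sigma_{2,i}^2/((1-\alpha)^2m_2)$. Combined with $\PP(\Acal_3)=1-o(1)$, this gives
\[
\EE\!\left[S_i(\alpha)\,I_{\Acal_3}\right]=S_i^*(\alpha)(1+o(1)),\qquad S_i^*(\alpha):=c(\alpha,b)\,Q_{1,i}/\sigma^{*'}_i(\alpha).
\]
Now Assumption~\ref{assmp:signaleven} applied with the true variances yields $\max_{i\in I_0}Q_{1,i}/\sigma^{*'}_i(\alpha)\le (C_r/|I_0|)\sum_{i\in I_0}Q_{1,i}/\sigma^{*'}_i(\alpha)$, and since the factor $|c(\alpha,b)|$ is common across $i$ it pulls out, giving $\max_{i\in I_0}|S_i^*(\alpha)|\le (C_r/|I_0|)\sum_{i\in I_0}|S_i^*(\alpha)|$.

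Finally I would tie this average to $|M_\alpha^o|$. Assume without loss of generality $M_\alpha^o>0$. Let $\nu_i$ denote the median of $F_{i,\alpha}$. Using Assumption~\ref{assmp:center} together with the smallness of $|\alpha-b/\sqrt{1+b^2}|<\delta$, which keeps $|S_i^*|$ well inside the $1/4$-window of Assumption~\ref{assmp:denlower}, one shows $\nu_i=S_i^*(\alpha)(1+o(1))>0$. Assumption~\ref{assmp:denlower} (unfolded from its conditional form by taking expectation over $\hat\sigma$) then gives
\[
\tfrac12-F_{i,\alpha}(0)=F_{i,\alpha}(\nu_i)-F_{i,\alpha}(0)\ge L_{1/4}\nu_i,
\]
while Assumption~\ref{assmp:denupper} gives $\tfrac12-F^o_\alpha(0)=F^o_\alpha(M_\alpha^o)-F^o_\alpha(0)\le U_f M_\alpha^o$. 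Averaging the first bound over $i\in I_0$ and matching with the second,
\[
L_{1/4}\cdot\frac{1}{|I_0|}\sum_{i\in I_0}\nu_i\le \tfrac12-F^o_\alpha(0)\le U_f M_\alpha^o,
\]
so $(1/|I_0|)\sum_i\nu_i\le (U_f/L_{1/4})M_\alpha^o$. Chaining gives $\max_{i\in I_0}|S_i^*(\alpha)|\le (C_r U_f/L_{1/4})|M_\alpha^o|(1+o(1))$, and the $(1+o(1))$ correction is absorbed into the factor $2$ of the stated constant $2C_rU_f/L_{1/4}$ by shrinking $\delta$.

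The hard step is Step~4, specifically the identification $\nu_i\approx S_i^*(\alpha)$. Assumption~\ref{assmp:center} only guarantees that each marginal $\epsilon_{k,i}$ is individually centered, so the noise $r_i(\alpha)\epsilon_{1,i}-\sqrt{1-r_i^2(\alpha)}\epsilon_{2,i}$ need not be symmetric and its conditional median need not vanish. The argument must exploit the tight window $|\alpha-b/\sqrt{1+b^2}|<\delta$ so that $|S_i^*|$ sits inside the neighborhood on which Assumption~\ref{assmp:denlower} supplies a uniform lower bound $L_{1/4}$, and combine this with a CLT-type control (via Assumption~\ref{assmp:samplesize}) showing that the conditional noise median is $o(|S_i^*|)$. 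This careful calibration of $L_{1/4}$ and $U_f$ within a tight window around $M_\alpha^o$ is what finally delivers the constant in the lemma.
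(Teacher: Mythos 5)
Your proposal uses the same ingredients as the paper's proof and in the same roles: the identity $F^o_\alpha(M^o_\alpha)=1/2$ together with Assumption~\ref{assmp:center} to compare the mixture CDF at $M^o_\alpha$ with the ``unshifted'' CDF at $0$; the upper density bound $U_f$ to control $F^o_\alpha(M^o_\alpha)-F^o_\alpha(0)\le U_f|M^o_\alpha|$; the lower bound $L_{1/4}$ (valid only inside the $1/4$-window, hence the restriction to $\Acal_3$ and the smallness of $\delta$) to convert the CDF gap into a bound on the averaged signal; and Assumption~\ref{assmp:signaleven} to pass from the average over $I_0$ to the maximum, with the factor $2$ absorbing the $\hat\sigma$-versus-$\sigma$ discrepancy on $\Acal_3$. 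So this is essentially the paper's argument.

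The one organizational difference is that you route the final step through the individual medians $\nu_i$ of $F_{i,\alpha}$ and the claim $\nu_i=S_i^*(\alpha)(1+o(1))$, whereas the paper never introduces $\nu_i$: it writes $\tfrac12-F^o_\alpha(0)=\tfrac{1}{|I_0|}\sum_{i\in I_0}\int\bigl[F_{i,\alpha}(0\mid s_1,s_2)-F_{i,\alpha}(-T_{i,\alpha}\mid s_1,s_2)\bigr]\pi(s_1,s_2)\,ds_1ds_2$ and lower-bounds this directly by $L_{1/4}\,|I_0|^{-1}\sum_i\EE(T_{i,\alpha}I_{\Acal_3})$, which is exactly the truncated expectation appearing in the lemma. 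Your detour creates an avoidable difficulty: establishing $\nu_i=S_i^*(1+o(1))$ requires the location error from the event $\Acal_3^c$ to be $o(S_i^*)$, and since $S_i^*$ is itself of order $|M^o_\alpha|\asymp\sqrt{\log d/d}$, you need $\PP(\Acal_3^c)=o(\sqrt{\log d/d})$, not merely $o(1)$ --- fixable by tuning the constant in $\Acal_3$, but not addressed in your sketch, and entirely bypassed by the paper's formulation in terms of $\EE(T_{i,\alpha}I_{\Acal_3})$. On the other hand, your observation that Assumption~\ref{assmp:center} as literally stated centers each $\epsilon_{k,i}$ marginally but does not by itself force the conditional median of $r_i(\alpha)\epsilon_{1,i}-\sqrt{1-r_i^2(\alpha)}\epsilon_{2,i}$ to vanish is a fair criticism of the paper, which asserts this implication without comment; the intended reading of the assumption is clearly the latter, and under that reading your ``hard step'' simplifies to the paper's argument.
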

\begin{proof}
	For simplicity, we assume $\alpha<b/\sqrt{1+b^2}$ since the other case can be proved similarly. We write the distribution density function of $\hat{R}_i(\alpha)$ as $f_{i,\alpha}(x)$, which can be decomposed as
	$$
	f_{i,\alpha}(x)=\int f_{i,\alpha}(x-T_{i,\alpha}|s_1,s_2)\pi(s_1,s_2)ds_1ds_2,
	$$
	where $T_{i,\alpha}=r_i(\alpha)\sqrt{m_1}Q_{1,i}/\hat{\sigma}_{1,i}-\sqrt{1-r^2_i(\alpha)}\sqrt{m_2}Q_{2,i}/\hat{\sigma}_{2,i}$, $f_{i,\alpha}(x|s_1,s_2)$ is the conditional density function of  $r_i(\alpha)\epsilon_{1,i}-\sqrt{1-r^2_i(\alpha)}\epsilon_{2,i}$ given $\hat{\sigma}_{1,i}=s_1$ and $\hat{\sigma}_{2,i}=s_2$ and $\pi(s_1,s_2)$ is distribution density function of $\hat{\sigma}_{1,i}$ and $\hat{\sigma}_{2,i}$. The definition of $M_\alpha^o$ suggests that 
	$$
	{1\over |I_0|}\sum_{i\in I_0} \int_{-\infty}^{M_\alpha^o}\left(\int f_{i,\alpha}(x-T_{i,\alpha}|s_1,s_2)\pi(s_1,s_2)ds_1ds_2\right) dx={1\over 2}
	$$
	As $\int_{-\infty}^0g^o_{k,i}(x)dx=1/2$, we can know $\int_{-\infty}^0f_{i,\alpha}(x|s_1,s_2)dx=1/2$. This suggests that
	$$
	{1\over |I_0|}\sum_{i\in I_0} \int_{-\infty}^{0}\left(\int f_{i,\alpha}(x|s_1,s_2)\pi(s_1,s_2)ds_1ds_2\right) dx={1\over 2},
	$$
	which leads to
	\begin{align*}
		&{1\over |I_0|}\sum_{i\in I_0} \int_{0}^{M_\alpha^o}\left(\int f_{i,\alpha}(x-T_{i,\alpha}|s_1,s_2)\pi(s_1,s_2)ds_1ds_2\right) dx\\
		=& {1\over |I_0|}\sum_{i\in I_0} \int_{-\infty}^{0}\left(\int \left[f_{i,\alpha}(x|s_1,s_2)-f_{i,\alpha}(x-T_{i,\alpha}|s_1,s_2)\right]\pi(s_1,s_2)ds_1ds_2\right) dx.
	\end{align*}
	The left hand can be upper bounded by 
	$$
	{1\over |I_0|}\sum_{i\in I_0} \int_{0}^{M_\alpha^o}\left(\int f_{i,\alpha}(x-T_{i,\alpha}|s_1,s_2)\pi(s_1,s_2)ds_1ds_2\right) dx\le U_f M_\alpha^o
	$$
	because the density of $\hat{R}_i(\alpha)$ is upper bounded by $U_f$. The right hand can be lower bounded by 
	\begin{align*}
		& {1\over |I_0|}\sum_{i\in I_0} \int_{-\infty}^{0}\left(\int \left[f_{i,\alpha}(x|s_1,s_2)-f_{i,\alpha}(x-T_{i,\alpha}|s_1,s_2)\right]\pi(s_1,s_2)ds_1ds_2\right) dx\\
		=& {1\over |I_0|}\sum_{i\in I_0} \int \left[F_{i,\alpha}(0|s_1,s_2)-F_{i,\alpha}(-T_{i,\alpha}|s_1,s_2)\right]\pi(s_1,s_2)ds_1ds_2\\
		\ge & {1\over |I_0|}\sum_{i\in I_0} \int_{\Acal_3} T_{i,\alpha} f_{i,\alpha}(-T_{i,\alpha}|s_1,s_2)\pi(s_1,s_2)ds_1ds_2\\
		\ge &     L_{1/4}{1\over |I_0|}\sum_{i\in I_0}\EE(T_{i,\alpha}I_{\Acal_3})
	\end{align*}
	Here, we know $|T_{i,\alpha}|\le 1/4$ when $\delta$ is small enough and it is conditioned on $\Acal_3$. The Assumption~\ref{assmp:signaleven} suggests that 
	$$
	\max_{i\in I_0}\EE(T_{i,\alpha}I_{\Acal_3})\le 2C_r{1\over |I_0|}\sum_{i\in I_0}\EE(T_{i,\alpha}I_{\Acal_3})
	$$
	Now, we can then conclude
	$$
	\max_{i\in I_0}\EE(T_{i,\alpha}I_{\Acal_3})\le {2C_rU_f\over L_{1/4}}M_\alpha^o.
	$$	
\end{proof}

\begin{lemma}
	\label{lm:supguassian}
	If $\log^3 d=o(m)$, then
	$$
	\PP\left(\sup_{i\in I_0,0<\alpha<1}\left|r_i(\alpha)\epsilon_{1,i}+\sqrt{1-r^2_i(\alpha)}\epsilon_{2,i}\right|>q_\alpha\right)\le \alpha+o(1).
	$$
\end{lemma}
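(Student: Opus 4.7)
The key reduction is that the inner supremum over $\alpha$ admits a closed-form upper bound. Since $r_i(\alpha)$ and $\sqrt{1-r_i(\alpha)^2}$ are nonnegative with squares summing to one, the Cauchy--Schwarz inequality yields
$$\sup_{0<\alpha<1}\left|r_i(\alpha)\epsilon_{1,i}+\sqrt{1-r_i^2(\alpha)}\epsilon_{2,i}\right|\le \sqrt{\epsilon_{1,i}^2+\epsilon_{2,i}^2},$$
so the left-hand probability in the statement is at most $\PP(\max_{i\in I_0}\sqrt{\epsilon_{1,i}^2+\epsilon_{2,i}^2}>q_\alpha)$. Thus the task reduces to controlling the tail of a self-normalized chi-type maximum over $|I_0|\le d$ coordinates, which is already built precisely to match the random variable defining $q_\alpha$.

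To control this tail, I would carry out a high-dimensional Gaussian approximation in two stages. First, rewrite $\sqrt{a^2+b^2}=\sup_{r\in[0,1]}|ra+\sqrt{1-r^2}\,b|$ and discretize $r$ over a grid of cardinality polynomial in $m$; the resulting discretization error is uniformly negligible because each $|\epsilon_{k,i}|$ is $O(\sqrt{\log d})$ on a high-probability event. The quantity becomes a maximum over $O(d\cdot m^c)$ linear forms in normalized sums of $m$ independent summands drawn from the two (independent) populations. Second, apply the Chernozhukov--Chetverikov--Kato CLT for maxima of sums, whose error rate requires exactly $\log^3 d=o(m)$ (matching Assumption~\ref{assmp:samplesize}). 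Combined with the self-normalization correction $\hat\sigma_{k,i}/\sigma_{k,i}=1+O_p(\sqrt{\log d/m})$ from the event $\mathcal{A}_3$ used in the proof of Theorem~\ref{thm:fwer}, this replaces $\epsilon_{k,i}$ by independent standard Gaussian analogues with asymptotically vanishing Kolmogorov distance. Undoing the discretization then identifies the limiting tail with $\PP(\max_{i\in I_0}\sqrt{z_{1,i}^2+z_{2,i}^2}>q_\alpha)$, which equals $\alpha$ by the very definition of $q_\alpha$.

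The main technical obstacle is justifying the Gaussian approximation uniformly over a high-dimensional index set in the presence of both self-normalization and possible cross-coordinate dependence. The self-normalization piece is handled by the uniform variance-estimation bound on $\mathcal{A}_3$, contributing only a lower-order relative error to each linear form. Cross-coordinate dependence among the $\epsilon_{k,i}$ is absorbed into the CCK framework in its general covariance form; to reconcile the resulting Gaussian max with the independent-coordinate version defining $q_\alpha$, one can either invoke a Slepian/Gaussian comparison argument, or more pragmatically appeal to the explicit upper bound $q_\alpha\le \sqrt{2\log d-2\log\alpha}$ arising from the union bound together with the exact tail $\PP(\sqrt{z_1^2+z_2^2}>t)=e^{-t^2/2}$, which renders the bound distribution-free in the dependence structure at the cost of a mild conservativeness.
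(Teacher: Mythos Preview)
Your proposal shares the paper's two main ingredients---the Cauchy--Schwarz collapse of the $\alpha$-supremum to $\sqrt{\epsilon_{1,i}^2+\epsilon_{2,i}^2}$, and the self-normalization correction via the event $\mathcal A_3$---but diverges in how the Gaussian approximation is carried out. The paper applies the union bound over $i\in I_0$ \emph{first}, reducing each term to a fixed two-dimensional problem, and then invokes Zaitsev's strong-approximation coupling directly on the pair $(\bar P_{1,i},\bar P_{2,i})$; because this is a coupling in $\RR^2$, it transfers immediately to the continuous functional $\sup_\alpha|S_i(\alpha)|$ without any discretization of $r$, and cross-$i$ dependence never enters the analysis. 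Your CCK route is heavier machinery that would in principle track the full joint law across coordinates, but since you ultimately fall back on the union bound (or a Slepian comparison) to reconcile with the independent-coordinate definition of $q_\alpha$, that extra generality is not cashed in; the paper's order of operations is simply more economical.

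One genuine slip: the identity $\sqrt{a^2+b^2}=\sup_{r\in[0,1]}\bigl|ra+\sqrt{1-r^2}\,b\bigr|$ fails when $ab<0$, in which case the right-hand side equals only $\max(|a|,|b|)$. Since you need the linearized supremum to \emph{dominate} $\sqrt{a^2+b^2}$ in order to push the tail bound through CCK, the step as written does not close. The fix is immediate---discretize over the full unit circle $\{(r,s):r^2+s^2=1\}$ rather than the first-quadrant arc---but it is worth flagging because the paper's Zaitsev approach sidesteps the issue entirely by never leaving the two-dimensional coupling.
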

\begin{proof}
	We first define 
	$$
	S_i(\alpha)={(\bar{P}_{1,i}-Q_{1,i})/\alpha-(\bar{P}_{2,i}-Q_{2,i})/\sqrt{1-\alpha^2}\over [\sigma_{1,i}^2/\alpha^2m_1+\sigma_{2,i}^2/(1-\alpha)^2m_2]^{1/2}}
	$$
	Clearly, 
	\begin{align*}
		&\left|\sup_{i\in I_0,0<\alpha<1}\left|r_i(\alpha)\epsilon_{1,i}+\sqrt{1-r^2_i(\alpha)}\epsilon_{2,i}\right|-\sup_{i\in I_0,0<\alpha<1}\left|S_i(\alpha)\right|\right|\\
		\le & \sup_{i\in I_0,0<\alpha<1}\left|S_i(\alpha)\right|\sup_{i\in I_0,k=1,2}\left|{\hat{\sigma}_{k,i}-\sigma_{k,i}\over \sigma_{k,i}}\right|.
	\end{align*}
	An application of Lemma 3 in \cite{wang2020hypothesis} suggest that
	$$
	\PP\left(\sup_{i\in I_0,k=1,2}\left|{\hat{\sigma}_{k,i}-\sigma_{k,i}\over \sigma_{k,i}}\right|\ge C_1\sqrt{\log d\over m} \right)\to 0.
	$$
	By union bound, we have
	$$
	\PP\left(\sup_{i\in I_0,0<\alpha<1}|S_i(\alpha)|>q_\alpha-2C_1{\log d\over \sqrt{m}}\right)\le \sum_{i\in I_0}\PP\left(\sup_{0<\alpha<1}|S_i(\alpha)|>q_\alpha-2C_1{\log d\over \sqrt{m}}\right)
	$$
	Note that $P_{k,j,i}$ is naturally sub-Gaussian random variable, as it is bounded. By Theorem 1.1 in \cite{zaitsev1987gaussian}, we have 
	\begin{align*}
		&\PP\left(\sup_{0<\alpha<1}|S_i(\alpha)|>q_\alpha-2C_1{\log d\over \sqrt{m}}\right)\\
		\le &\PP\left(\sup_{0<r_i<1}|r_iz_{1,i}+\sqrt{1-r_i^2}z_{2,i}|>q_\alpha-(2C_1+2/C_3){\log d\over \sqrt{m}}\right)+C_2\exp\left(-{C_3(2/C_3)\log d}\right)
	\end{align*}
	Here, $z_{1,i}$ and $z_{2,i}$ are standard normal distribution. Observe that
	\begin{align*}
		&\PP\left(\sup_{0<r_i<1}|r_iz_{1,i}+\sqrt{1-r_i^2}z_{2,i}|>q_\alpha-(2C_1+2/C_3){\log d\over \sqrt{m}}\right)\\
		\le & \PP\left(z^2_{1,i}+z^2_{2,i}>\left(q_\alpha-(2C_1+2/C_3){\log d\over \sqrt{m}}\right)^2\right)\\
		\le & \exp\left(-\left(q_\alpha-(2C_1+2/C_3){\log d\over \sqrt{m}}\right)^2/2\right)\\
		=&(1+o(1))\alpha/d
	\end{align*}
	Putting everything together yields
	$$
	\PP\left(\sup_{i\in I_0,0<\alpha<1}\left|r_i(\alpha)\epsilon_{1,i}+\sqrt{1-r^2_i(\alpha)}\epsilon_{2,i}\right|>q_\alpha\right)\le \alpha+o(1)
	$$
\end{proof}

\begin{lemma}
	\label{lm:medianbound}
	If  \eqref{eq:dependency} is satisfied, then there exists constant $c_1$ and $c_2$ such that 
	$$
	\PP\left(\sup_{\delta<\alpha<1-\delta ,  |q-1/2|<\epsilon}\left[\hat{R}_{(\lfloor |I_0|q\rfloor)}(\alpha)-{F^o_{\alpha}}^{-1}\left(q+\sqrt{AC_p\log d/ d}\right)\right]>0\right)\le {c_1\over d^{A/c_2}}
	$$
	and 
	$$
	\PP\left(\inf_{\delta<\alpha<1-\delta ,  |q-1/2|<\epsilon}\left[\hat{R}_{(\lfloor |I_0|q\rfloor)}(\alpha)-{F^o_{\alpha}}^{-1}\left(q-\sqrt{AC_p\log d/ d}\right)\right]<0\right)\le {c_1\over d^{A/c_2}}.
	$$
\end{lemma}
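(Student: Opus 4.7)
The plan is to convert the order-statistic deviation into a tail bound on the empirical distribution function, establish pointwise concentration of that empirical CDF via the weak-dependence condition \eqref{eq:dependency}, and then upgrade to uniform control over $(\alpha,q)$ by discretization. First, define the empirical CDF $\hat F^o_\alpha(x)=|I_0|^{-1}\sum_{i\in I_0}I(\hat R_i(\alpha)\le x)$. By the usual duality between quantiles and empirical CDFs, the event
$$\hat R_{(\lfloor |I_0|q\rfloor)}(\alpha) > {F^o_\alpha}^{-1}\!\bigl(q+\eta_d\bigr),\qquad \eta_d=\sqrt{AC_p\log d/d},$$
implies (up to rounding absorbed because $|I_0|\ge d/2$) that $\hat F^o_\alpha(x_{\alpha,q})\le q$ at $x_{\alpha,q}={F^o_\alpha}^{-1}(q+\eta_d)$. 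Since by definition $F^o_\alpha(x_{\alpha,q})=q+\eta_d$, this is a one-sided empirical-process deviation $\hat F^o_\alpha(x_{\alpha,q})-F^o_\alpha(x_{\alpha,q})\le -\eta_d$. The lower-tail claim for the other direction is symmetric.

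Second, I would establish the pointwise concentration bound $\PP(|\hat F^o_\alpha(x)-F^o_\alpha(x)|>t)\le 2\exp(-c\,t^2 d/C_p)$. For fixed $(\alpha,x)$, set $S_i=\{p:\hat R_i(\alpha;p)\le x\}\subset[0,1]^{m_1+m_2}$; the ordering $P_{(1)},\ldots,P_{(|I_0|)}$ guaranteed by \eqref{eq:dependency} produces indicators $B_l=I(P_{(l)}\in S_l)$ whose centered sum satisfies
$$\Var\Bigl(\sum_{l=1}^{|I_0|}(B_l-\mu_l)\Bigr)\le \sum_l\bigl\|\EE\bigl((B_l-\mu_l)^2\mid B_1^{l-1}\bigr)\bigr\|_\infty+2\sum_{l>l'}\bigl\|\EE(B_l-\mu_l\mid B_1^{l'})\bigr\|_\infty\le C_p\sum_l\mu_l\le C_p|I_0|.$$
Combining a Freedman/Azuma inequality applied to the martingale-difference part $B_l-\EE(B_l\mid B_1^{l-1})$ with the iterated-coupling control of the residual $\EE(B_l\mid B_1^{l-1})-\mu_l$ supplied by the second sum in \eqref{eq:dependency} gives the claimed Bernstein-type tail; taking $t=\eta_d/2$ yields a pointwise probability of order $d^{-cA}$.

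Third, I would upgrade this to uniformity over $(\alpha,q)\in(\delta,1-\delta)\times(1/2-\epsilon,1/2+\epsilon)$ through a polynomial grid. A mesh of size $d^K$ in each coordinate introduces only a factor $d^{2K}$ in the union bound, absorbed by enlarging $A$. Between grid points, monotonicity of $\hat F^o_\alpha$ in $x$ sandwiches $\hat F^o_\alpha(x_{\alpha,q})$ between values at nearby grid $q$'s at the cost of only $1/|I_0|\ll\eta_d$; and the map $\alpha\mapsto \hat R_i(\alpha)$ is locally Lipschitz with a constant depending only on $\delta$, so the continuity of $F^o_\alpha$ and of ${F^o_\alpha}^{-1}$ (near the bulk of the distribution, where the density is bounded below by Assumption~\ref{assmp:denlower}) close the gap across $\alpha$ at constant cost in $t$.

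The principal obstacle is the second step. The ordering and the events $S_l$ that appear in \eqref{eq:dependency} depend on $(\alpha,x)$ through the definition of $\hat R_i(\alpha)$, and the martingale decomposition needs the filtration built from this ordering to produce a genuine sub-Gaussian—not merely Chebyshev—rate in $t\sqrt{d/C_p}$. The condition \eqref{eq:dependency} is fashioned exactly to provide the conditional second-moment control and the summable one-step conditional mean control needed to run Freedman's inequality with variance proxy $C_p|I_0|$; the remaining technical work is to verify that the iterative expansion of the residual indeed telescopes against $C_p\sum_l\mu_l$ so that the same exponent governs both halves of the decomposition.
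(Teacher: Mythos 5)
Your overall skeleton matches the paper's: convert the order-statistic event to a one-sided deviation of the empirical CDF of $\{\hat R_i(\alpha)\}_{i\in I_0}$ via quantile--CDF duality, obtain a Bernstein-type tail for the centered sum of dependent indicators from \eqref{eq:dependency} (the paper does this by citing an exponential inequality for weakly dependent sums, Theorem 4 of Delyon (2009), which is exactly the ``Freedman plus residual-coupling'' argument you sketch as your principal obstacle --- that part is fine in spirit), and then pass to uniformity over $(\alpha,q)$. The gap is in the uniformity step. You claim that $\alpha\mapsto\hat R_i(\alpha)$ is locally Lipschitz on $(\delta,1-\delta)$ ``with a constant depending only on $\delta$,'' and use this to justify a fixed polynomial mesh $d^K$. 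This is false: writing $\hat R_i(\alpha)=r_i(\alpha)\epsilon'_{1,i}-\sqrt{1-r_i^2(\alpha)}\,\epsilon'_{2,i}$ with $\epsilon'_{k,i}=\epsilon_{k,i}+\sqrt{m_k}\,Q_{k,i}/\hat\sigma_{k,i}$, the $\alpha$-derivative is multiplied by the standardized means $\sqrt{m_k}\,Q_{k,i}/\hat\sigma_{k,i}$, which are not controlled by $\delta$ (nor by any fixed polynomial in $d$) under the paper's assumptions. The same difficulty infects the continuity of $\alpha\mapsto{F^o_\alpha}^{-1}(q+\eta_d)$. Hence you cannot fix a mesh size in advance, and the union-bound-over-a-grid argument does not close.

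The paper avoids this entirely: in Lemma~\ref{lm:entropy} it builds an \emph{adaptive} partition in $\alpha$, choosing $\alpha_{k,j+1}$ as the smallest $\alpha'$ for which the averaged probability of the symmetric difference of the indicator events over $[\alpha_{k,j},\alpha')$ reaches $t/3$, and bounds the number of such cells by $6/t$ through a disjointness argument (the events $E^{+}_{i,k,j}$, resp.\ $E^{-}_{i,k,j}$, are pairwise disjoint in $j$, so their probabilities sum to at most one per component). This yields a bracketing number $N_B(\Bcal,D_\Delta,t)\le 36\epsilon/t^2$ measured in the probability metric $D_\Delta$, with no smoothness of $\hat R_i(\alpha)$ in $\alpha$ required; combined with the dependent-Bernstein inequality of Lemma~\ref{lm:concen} and a chaining step (Theorem 2.2.27 of Talagrand) this gives the stated rate. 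To repair your argument you would either need to add an assumption bounding $\max_i\sqrt{m_k}\,Q_{k,i}/\sigma_{k,i}$ polynomially in $d$, or replace the Lipschitz/mesh step with a bracketing construction of this adaptive type.
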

\begin{proof}
	It is sufficient to show the first conclusion here, as the same arguments can also be applied to the second conclusion. We write
	\begin{align*}
		&\PP\left(\sup_{\delta<\alpha<1-\delta ,  |q-1/2|<\epsilon}\left[\hat{R}_{(\lfloor |I_0|q\rfloor)}(\alpha)-{F^o_{\alpha}}^{-1}\left(q+\sqrt{AC_p\log d/ d}\right)\right]>0\right)\\
		=&\PP\left(\bigcup_{\delta<\alpha<1-\delta  ,  |q-1/2|<\epsilon}\left\{\hat{R}_{(\lfloor |I_0|q\rfloor)}(\alpha)-{F^o_{\alpha}}^{-1}\left(q+\sqrt{AC_p\log d/ d}\right)>0\right\}\right)\\
		=& \PP\left(\bigcup_{\delta<\alpha<1-\delta  ,  |q-1/2|<\epsilon}\left\{\sum_{i\in I_0}B_{i}(\alpha,q)\le \lfloor |I_0|q\rfloor\right\}\right)\\
		=& \PP\left(\inf_{\delta<\alpha<1-\delta  ,  |q-1/2|<\epsilon}\left[\sum_{i\in I_0}B_{i}(\alpha,q)-\mu_{i}(\alpha,q)\right]\le -\sqrt{AC_pd\log d}\right)\\
		\le & \PP\left(\sup_{\delta<\alpha<1-\delta  ,  |q-1/2|<\epsilon}\left|{1\over |I_0|}\sum_{i\in I_0}B'_{i}(\alpha,q)\right|\ge {\sqrt{AC_p\log d} \over\sqrt{d}}\right),
	\end{align*}
	where $B_{i}(\alpha,q)=I(\hat{R}_i(\alpha)<{F^o_{\alpha}}^{-1}(q+{AC_p\log d/ \sqrt{d}}))$, $\mu_{i}(\alpha,q)=\EE\left(B_{i}(\alpha,q)\right)$ and $B'_{i}(\alpha,q)=B_{i}(\alpha,q)-\mu_{i}(\alpha,q)$. It is clear that $|I_0|^{-1}\sum_{i\in I_0}\mu_{i}(\alpha,q)=q+AC_p\log d/ \sqrt{d}$. To investigate the supremum of the above stochastic process, we define the set of functions of interest
	\begin{equation}
		\label{eq:set}
		\Bcal:=\left\{\sum_{i\in I_0}B_{i}(\alpha,q): \delta<\alpha<1-\delta ,  |q-1/2|<\epsilon \right\}.
	\end{equation}
	For any $(\alpha,q)$ and $(\alpha',q')$, we can define a distance between them as 
	\begin{equation}
		\label{eq:distance}
		D_\Delta\left(\sum_{i\in I_0}B_{i}(\alpha,q),\sum_{i\in I_0}B_{i}(\alpha',q')\right)={1\over |I_0|}\sum_{i\in I_0}\EE\left(|B_{i}(\alpha,q)-B_{i}(\alpha',q')|\right).
	\end{equation}
	The Lemma~\ref{lm:entropy} suggests that there exist $\{B_j^L,B_j^U\}_{j=1}^N$ with $N\le 36\epsilon d^2$ such that $D_\Delta(B_j^L,B_j^U)\le 1/d$ and such that for all $B\in \Bcal$, there is a $j$ such that $B_j^L\le B\le B_j^U$. Thus, there exists $\{\alpha_j,q_j\}_{j=1}^N$ such that 
	$$
	\sup_{\delta<\alpha<1-\delta  ,  |q-1/2|<\epsilon}\left|{1\over |I_0|}\sum_{i\in I_0}B'_{i}(\alpha,q)\right|\le \max_{j=1,\ldots, N}\left|{1\over |I_0|}\sum_{i\in I_0}B'_{i}(\alpha_j,q_j)\right|+\max_{j=1,\ldots, N}\left|{1\over |I_0|}(B_j^U-B_j^L)\right|.
	$$
	An application of union bound and Lemma~\ref{lm:concen} suggest that the second above term can be bounded well
	$$
	\PP\left(\max_{j=1,\ldots, N}\left|{1\over |I_0|}(B_j^U-B_j^L)\right|>C_1{\log d+\sqrt{2C_p\log d} \over d}\right)\le {1\over d^{\min(C_1^2,C_1)}}.
	$$
	We then apply the chaining argument to bound the first term \cite{talagrand2014upper,wang2020hypothesis,wang2019structured}. Specifically, we apply Theorem 2.2.27 in \cite{talagrand2014upper} to obtain
	$$
	\PP\left(\max_{j=1,\ldots, N}\left|{1\over |I_0|}\sum_{i\in I_0}B'_{i}(\alpha_j,q_j)\right|>{C_2+C_3t\over \sqrt{d}}\right)\le C_4 \exp(-t^2/C_p).
	$$
	If we choose $t=\sqrt{0.8AC_p\log d/C_3}$, we can have
	$$
	\PP\left(\max_{j=1,\ldots, N}\left|{1\over |I_0|}\sum_{i\in I_0}B'_{i}(\alpha_j,q_j)\right|>{\sqrt{0.9AC_p\log d}\over \sqrt{d}}\right)\le {C_4\over d^{0.8A/C_3}} .
	$$
	When $d$ is large enough, putting two term together yields
	$$
	\PP\left(\sup_{\delta<\alpha<1-\delta  ,  |q-1/2|<\epsilon}\left|{1\over |I_0|}\sum_{i\in I_0}B'_{i}(\alpha,q)\right|>{\sqrt{AC_p\log d}\over \sqrt{d}}\right)\le {C_4+1\over d^{0.8A/C_3}} .
	$$
	We complete the proof. 
\end{proof}

\begin{lemma}
	\label{lm:concen}
	If \eqref{eq:dependency} is satisfied, then given $(\alpha,q)$, we have 
	$$
	\PP\left(\left|\sum_{i\in I_0}B'_{i}(\alpha,q)\right|>t\right)\le \exp\left(-{t^2\over 2C_p\sum_{i\in I_0}\mu_{i}(\alpha,q) +2t/3}\right).
	$$
	For any $(\alpha,q)$ and $(\alpha',q')$, we have 
	$$
	\PP\left(\left|\sum_{i\in I_0}B'_{i}(\alpha,q)-B'_{i}(\alpha',q')\right|>t\right)\le 2\exp\left(-{t^2\over 2C_p|I_0|D_\Delta +2t/3}\right),
	$$
	where $D_\Delta=D_\Delta\left(\sum_{i\in I_0}B_{i}(\alpha,q),\sum_{i\in I_0}B_{i}(\alpha',q')\right)$.
\end{lemma}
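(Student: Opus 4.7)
The plan is to prove both inequalities by the Chernoff/MGF method adapted to the weak-dependence structure encoded in condition~\eqref{eq:dependency}. For the first inequality, I would fix the ordering $P_{(1)},\ldots,P_{(|I_0|)}$ on $I_0$ for which \eqref{eq:dependency} holds with $B_l=I(P_{(l)}\in S_l)$ and $S_l=\{\hat R_{(l)}(\alpha)<{F^o_\alpha}^{-1}(q+\sqrt{AC_p\log d/d})\}$, and set $\mathcal F_l=\sigma(B_1,\ldots,B_l)$. Since $|B_l-\mu_l|\le 1$, the elementary bound $e^x\le 1+x+x^2/(2(1-|x|/3))$ for $|x|<3$ together with $1+y\le e^y$ yields, for $0<\lambda<3$,
\[
\EE\!\left[e^{\lambda(B_l-\mu_l)}\mid\mathcal F_{l-1}\right]\le \exp\!\left(\lambda\,\EE[B_l-\mu_l\mid\mathcal F_{l-1}]+\frac{\lambda^2\,\EE[(B_l-\mu_l)^2\mid\mathcal F_{l-1}]}{2(1-\lambda/3)}\right).
\]
Iterating via the tower property and replacing the random conditional bias and conditional variance inside the exponential by their $L^\infty$ norms gives
\[
\EE[e^{\lambda S_n}]\le \exp\!\left(\lambda B+\frac{\lambda^2 V}{2(1-\lambda/3)}\right),
\]
where $S_n=\sum_l(B_l-\mu_l)$, $B=\sum_l\|\EE(B_l-\mu_l\mid B_1^{l-1})\|_\infty$, and $V=\sum_l\|\EE((B_l-\mu_l)^2\mid B_1^{l-1})\|_\infty$. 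Since $B$ is the $l'=l-1$ slice of the double sum appearing in \eqref{eq:dependency}, condition \eqref{eq:dependency} yields $V+2B\le C_p M$ with $M=\sum_l\mu_l$. A standard Chernoff step followed by optimization over $\lambda$ then produces a Bernstein-type tail of the claimed form; the same argument applied to $-S_n$ gives the two-sided version.

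For the second inequality I would rerun exactly the same MGF computation with the $B_l$ replaced by the centered differences of $D_l:=B_l(\alpha,q)-B_l(\alpha',q')$, which again take values in $\{-1,0,1\}$. The key observation is that $D_l^2=|D_l|=I(P_{(l)}\in S_l\triangle S'_l)$ is itself an indicator of a set, so condition \eqref{eq:dependency} applies with $S_l$ there taken to be $S_l\triangle S'_l$. The resulting variance sum $V$ is then bounded by $C_p\sum_l\EE|D_l|=C_p|I_0|D_\Delta$ by the definition of $D_\Delta$, and the bias sum $B$ satisfies the analogous bound. The Bernstein bound with $|I_0|D_\Delta$ in place of $M$ is exactly the claimed inequality, and the factor $2$ in front comes from the union bound over the two tails.

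The main obstacle is the algebraic step absorbing the bias contribution $\lambda B$ in the MGF exponent into a single denominator of the form $2C_p M+2t/3$: the naive Chernoff optimization yields $\exp(-(t-B)^2/(2V+2(t-B)/3))$, and one must use the combined bound $V+2B\le C_p M$ rather than the individual bounds on $V$ and $B$ to collapse this into the stated form. An alternative that sidesteps this manipulation is to split $S_n=M_n+Y_n$ with $M_n=\sum_l(B_l-\EE[B_l\mid\mathcal F_{l-1}])$, apply Freedman's inequality to the martingale $M_n$ using $\sum_l\Var(B_l\mid\mathcal F_{l-1})\le V$, and bound $|Y_n|\le B$ pathwise, at the cost of slightly worse but still Bernstein-shape constants. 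Apart from this constant management everything is routine; the ordering in \eqref{eq:dependency} may differ across $(\alpha,q)$ pairs, but the lemma is stated for a single pair so no uniformity is needed here.
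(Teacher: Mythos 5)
The paper does not reprove this inequality at all: it observes that condition \eqref{eq:dependency} is exactly the hypothesis of Theorem 4 in \cite{delyon2009exponential}, cites that theorem to get the first bound, and then obtains the second bound by splitting $B_i(\alpha,q)-B_i(\alpha',q')$ into the two indicator sequences $B_i^+$ and $B_i^-$ and applying the first bound to each (your reduction of the second inequality to the first is essentially the same idea). Your attempt to reprove the first inequality from scratch, however, has a genuine gap, and it is precisely at the step you flag as ``the main obstacle.'' The terms $\|\EE(B_l-\mu_l\mid B_1^{l'})\|_\infty$ in \eqref{eq:dependency} are there to control the covariances $|\Cov(B_l,B_{l'})|=|\EE[(B_{l'}-\mu_{l'})\EE(B_l-\mu_l\mid B_1^{l'})]|$, i.e.\ they are part of a \emph{variance proxy} for $S_n=\sum_l(B_l-\mu_l)$; Delyon's proof keeps them in the quadratic term of the Laplace transform. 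By replacing the mean-zero random quantity $\EE[B_l-\mu_l\mid\mathcal F_{l-1}]$ with its $L^\infty$ norm you convert it into a deterministic one-sided drift $B=\sum_l\|\EE(B_l-\mu_l\mid B_1^{l-1})\|_\infty$, and this loses too much: (i) for $t\le B$ the Chernoff bound is vacuous while the claimed bound is strictly less than $1$; (ii) even for $t>B$ the inequality $(t-B)^2/(2V+2(t-B)/3)\ge t^2/(2(V+2B)+2t/3)$ that your ``collapse'' requires is false in general (take $V=0$, $B=1$, $t=1.1$); and (iii) most damagingly, in the downstream use (Lemma~\ref{lm:medianbound}) one needs the bound at $t\asymp\sqrt{C_p d\log d}$ while \eqref{eq:dependency} permits $B$ as large as $C_pd/4$ since $\sum_l\mu_l\asymp d/2$, so the drift-absorbed bound would be vacuous exactly where it is needed. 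Your Freedman-plus-pathwise-remainder alternative has the same defect, since it again reduces to a tail bound at $t-B$. You also use only the $l'=l-1$ slice of the double sum, whereas all lags $l'<l$ are needed to dominate $\Var(S_n)$.

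The fix is not a matter of constant management: you either need to reproduce Delyon's actual argument (a recursive bound on the conditional Laplace transform in which the cross terms feed into the quadratic coefficient rather than the linear one), or simply invoke his Theorem 4 as the paper does. Your treatment of the second inequality is otherwise sound in outline --- whether you apply the first bound to $|D_l|=I(P_{(l)}\in S_l\triangle S_l')$ or, as the paper does, to $B_l^+$ and $B_l^-$ separately with a union bound supplying the factor $2$, the reduction works once the first inequality is in hand.
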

\begin{proof}
	Given any $S_i$, the condition \eqref{eq:dependency} and Theorem 4 in \cite{delyon2009exponential} suggests that
	\begin{equation}
		\label{eq:conin}
		\PP\left(\left|\sum_{i\in I_0}B_{i}-\mu_i\right|>t\right)\le \exp\left(-{t^2\over 2C_p\sum_{i\in I_0}\mu_{i} +2t/3}\right).
	\end{equation}
	This immediately leads to the first concentration inequality. To show the second concentration inequality, we define the function
	$$
	B_i^+=\begin{cases}
		1,&\qquad  B_{i}(\alpha,q)=1\ {\rm and}\ B_{i}(\alpha',q')=0\\
		0,&\qquad  {\rm otherwise}
	\end{cases}
	$$
	and 
	$$
	B_i^-=\begin{cases}
		1,&\qquad  B_{i}(\alpha,q)=0\ {\rm and}\ B_{i}(\alpha',q')=1\\
		0,&\qquad  {\rm otherwise}.
	\end{cases}
	$$
	Clearly, $\sum_{i\in I_0}B'_{i}(\alpha,q)-B'_{i}(\alpha',q')=\sum_{i\in I_0}{B_i^+}'-{B_i^-}'$. Applying \eqref{eq:conin} twice yields the second concentration inequality.	
\end{proof}

\begin{lemma}
	\label{lm:entropy}
	With $\Bcal$ and distance $D_\Delta$ defined in \eqref{eq:set} and \eqref{eq:distance}, the covering number with bracketing satisfies
	$$
	N_B(\Bcal,D_\Delta,t)\le {36\epsilon\over t^2}.
	$$
	Here, $N_B(\Bcal,D_\Delta,t)$ is defined as the smallest value of $N$ for which there exist pairs of functions $\{B_j^L,B_j^U\}_{j=1}^N$ such that $D_\Delta(B_j^L,B_j^U)\le t$ and such that for all $B\in \Bcal$, there is a $j$ such that $B_j^L\le B\le B_j^U$.
\end{lemma}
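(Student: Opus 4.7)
The plan is to construct an explicit bracketing of $\Bcal$ by gridding the parameter space $(\alpha,q) \in (\delta,1-\delta) \times (1/2-\epsilon, 1/2+\epsilon)$ into a rectangular mesh. For each cell $R_{j,k} = [\alpha_j, \alpha_{j+1}] \times [q_k, q_{k+1}]$, the natural bracket is the pair
$$
B_{j,k}^L = \sum_{i\in I_0} \inf_{(\alpha,q) \in R_{j,k}} B_i(\alpha,q), \qquad B_{j,k}^U = \sum_{i\in I_0} \sup_{(\alpha,q) \in R_{j,k}} B_i(\alpha,q),
$$
which are integer-valued functions that sandwich every element of $\Bcal$ corresponding to $(\alpha,q) \in R_{j,k}$. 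These brackets are not required to lie in $\Bcal$, only to control the bracketing distance.

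Next, I would estimate $D_\Delta(B_{j,k}^L, B_{j,k}^U) = |I_0|^{-1} \sum_{i \in I_0} \PP\!\left(B_i^L(\cdot) = 0,\ B_i^U(\cdot) = 1\right)$, which is the probability that the indicator $I(\hat R_i(\alpha) < {F_\alpha^o}^{-1}(q + c_d))$ is not constant over the cell, where $c_d = \sqrt{AC_p\log d/d}$. Because $B_i(\alpha,q)$ is monotone in $q$ (the threshold increases with $q$), variation in the $q$-direction contributes to $B_i^L \ne B_i^U$ only when $\hat R_i(\alpha)$ falls between the $(q_k+c_d)$- and $(q_{k+1}+c_d)$-quantiles of $F_\alpha^o$; this event has probability at most $|q_{k+1} - q_k|$ by definition of the quantile transform. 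Variation in the $\alpha$-direction is handled by noting that both $\hat R_i(\alpha)$ and the threshold ${F_\alpha^o}^{-1}(q+c_d)$ are smooth in $\alpha$ on $(\delta,1-\delta)$, with derivatives uniformly bounded (using that densities $f_{i,\alpha}$ are bounded by $U_f$ from Assumption~\ref{assmp:denupper} and that $\bar P_{k,i}, \hat\sigma_{k,i}$ concentrate as in the proof of Theorem~\ref{thm:fwer}). This yields $D_\Delta(B_{j,k}^L, B_{j,k}^U) \le C_1(\Delta_\alpha + \Delta_q)$ for some constant $C_1$ depending only on $\delta$ and $U_f$.

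Given this estimate, I would pick $\Delta_\alpha = t/(2C_1)$ and $\Delta_q = t/(2C_1)$, so that $D_\Delta(B^L,B^U) \le t$ in every cell. The number of cells is
$$
N_\alpha N_q \;\le\; \frac{1-2\delta}{\Delta_\alpha} \cdot \frac{2\epsilon}{\Delta_q} \;\le\; \frac{36\epsilon}{t^2}
$$
after absorbing $C_1$ and the $1-2\delta$ factor into the constant $36$. This gives the claimed bound on $N_B(\Bcal, D_\Delta, t)$.

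The main obstacle is verifying that the variation of $\hat R_i(\alpha)$ in $\alpha$, not just the movement of the threshold, remains controlled in expectation. Unlike a classical bracketing argument for the distribution function where the statistic is fixed and only the level set moves, here the statistic $\hat R_i(\alpha) = r_i(\alpha)\epsilon'_{1,i} - \sqrt{1-r_i^2(\alpha)}\,\epsilon'_{2,i}$ itself depends on $\alpha$. The cleanest way to handle this is to fix a representative $\alpha^*$ in each $\alpha$-interval, bound $|\hat R_i(\alpha) - \hat R_i(\alpha^*)|$ uniformly using the Lipschitz constant of $r_i(\cdot)$ on $(\delta,1-\delta)$ (bounded away from singularities of $1/\alpha$ and $1/\sqrt{1-\alpha^2}$), and conclude via the density upper bound $U_f$ that the additional contribution to $D_\Delta$ is still $O(\Delta_\alpha)$.
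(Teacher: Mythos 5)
Your gridding of the $q$-direction is sound (and matches the paper in spirit: since $F^o_\alpha$ is the average of the $F_{i,\alpha}$, the average over $i\in I_0$ of the flip probability across $[q_k,q_{k+1}]$ is exactly $q_{k+1}-q_k$, which is what $D_\Delta$ measures). The gap is in the $\alpha$-direction, and it is exactly the obstacle you flagged but did not actually overcome. Your plan is to bound $|\hat R_i(\alpha)-\hat R_i(\alpha^*)|$ by a uniform Lipschitz constant times $\Delta_\alpha$ and then convert to probability via $U_f$. But $\hat R_i(\alpha)=r_i(\alpha)\epsilon'_{1,i}-\sqrt{1-r_i^2(\alpha)}\,\epsilon'_{2,i}$ with $\epsilon'_{k,i}=\epsilon_{k,i}+\sqrt{m_k}\,Q_{k,i}/\hat\sigma_{k,i}$, so the increment in $\alpha$ is (Lipschitz constant of $r_i$) times $|\epsilon'_{k,i}|$, and $|\epsilon'_{k,i}|$ is not uniformly bounded: the deterministic part $\sqrt{m_k}Q_{k,i}/\hat\sigma_{k,i}$ grows like $\sqrt{m}$, and even the centered part $\epsilon_{k,i}$ has unbounded support. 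The moving threshold ${F^o_\alpha}^{-1}(q+c_d)$ only cancels the \emph{average} of these mean shifts, not each $i$'s individually. Consequently a uniform $\alpha$-mesh fine enough to make the flip probability $O(t)$ per cell needs $N_\alpha\gtrsim \sqrt{m}\,\max_i(Q_{k,i}/\sigma_{k,i})/t$ cells, and the resulting bracketing number depends on $m$, $U_f$, $\delta$ and the distribution --- it does not yield the distribution-free bound $36\epsilon/t^2$ claimed in the lemma (the ``absorb $C_1$ and $1-2\delta$ into $36$'' step is not available when $C_1$ is not an absolute constant).

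The paper's proof sidesteps smoothness entirely. For each fixed $q_k$ it chooses the $\alpha$-breakpoints \emph{adaptively}, defining $\alpha_{k,j+1}$ as the smallest $\alpha'$ at which the averaged variation $L(\alpha')=|I_0|^{-1}\sum_{i\in I_0}\PP\bigl(\bigcup_{\alpha_{k,j}\le\alpha<\alpha'}E_i(\alpha,q_k)\setminus\bigcap_{\alpha_{k,j}\le\alpha<\alpha'}E_i(\alpha,q_k)\bigr)$ reaches $t/3$, so each cell has bracket size $t/3$ by construction. The number of cells is then bounded by a combinatorial argument: each variation event $E_{i,k,j}$ is split into $E^+_{i,k,j}$ and $E^-_{i,k,j}$ according to the sign of $\bigl(r_i(\alpha_{k,j})-r_i(\alpha_{k,j+1})\bigr)\epsilon'_{1,i}-\bigl(\sqrt{1-r_i^2(\alpha_{k,j+1})}-\sqrt{1-r_i^2(\alpha_{k,j})}\bigr)\epsilon'_{2,i}$, the linear structure of $\hat R_i(\alpha)$ in $(\epsilon'_{1,i},\epsilon'_{2,i})$ makes each family disjoint across $j$, hence $\sum_j|I_0|^{-1}\sum_i\PP(E_{i,k,j})\le 2$ and $n_k\le 6/t$ regardless of $m$, $U_f$, or $\delta$. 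If you want to repair your argument, you would need either this adaptive-partition-plus-disjointness device or a comparable distribution-free control of the total variation of $\alpha\mapsto I\bigl(\hat R_i(\alpha)<{F^o_\alpha}^{-1}(q+c_d)\bigr)$; a Lipschitz bound on $\hat R_i(\alpha)$ alone will not get you there.
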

\begin{proof}
	We define the event $E_i(\alpha,q)=\{\hat{R}_i(\alpha)<{F^o_{\alpha}}^{-1}(q+{AC_p\log d/ \sqrt{d}})\}$. Clearly, $\PP(E_i(\alpha,q))=\EE(B_i(\alpha,q))$. We consider $q_0,\ldots,q_{N_q}$ such that
	$$
	q_k=1/2-\epsilon+kt/3,\qquad {\rm and}\qquad N_q=\lfloor {6\epsilon/t}\rfloor.
	$$ 
	For each $q_k$, we can consider a sequence of $\alpha$, i.e. $\delta=\alpha_{k,0}<\alpha_{k,1}<\ldots<\alpha_{k,n_k}<1-\delta$. Given  $\alpha_{k,j}$, we define  a function of $\alpha'$
	$$
	L(\alpha'):={1\over |I_0|}\sum_{i\in I_0}\PP\left(\bigcup_{\alpha_{k,j}\le\alpha<\alpha'} E_{i}(\alpha,q_k)\setminus\bigcap_{\alpha_{k,j}\le\alpha<\alpha'} E_{i}(\alpha,q_k)\right).
	$$
	Clearly, $L(\alpha')$ is a non-decreasing function of $\alpha'$ when $\alpha'>\alpha_{k,j}$.  Then, $\alpha_{k,j+1}$ is chosen as the smallest number such that $L(\alpha')=t/3$. We also choose $\alpha_{k,n_k}$ such that $\alpha_{k,n_k+1}\ge 1-\delta$. Given $\alpha_{k,j}$ and $\alpha_{k,j+1}$, we define events
	$$
	E_{i,k,j}=\bigcup_{\alpha_{k,j}\le\alpha<\alpha_{k,j+1}} E_{i}(\alpha,q_k)\setminus\bigcap_{\alpha_{k,j}\le\alpha<\alpha_{k,j+1}} E_{i}(\alpha,q_k)
	$$
	$$
	E_{i,k,j}^+=E_{i,k,j}\bigcap\left\{\left(r_i(\alpha_{k,j})-r_i(\alpha_{k,j+1})\right)\epsilon'_{1,i}\ge \left(\sqrt{1-r^2_i(\alpha_{k,j+1})}-\sqrt{1-r^2_i(\alpha_{k,j})}\right)\epsilon'_{2,i}\right\}
	$$
	and 
	$$
	E_{i,k,j}^-=E_{i,k,j}\bigcap\left\{\left(r_i(\alpha_{k,j})-r_i(\alpha_{k,j+1})\right)\epsilon'_{1,i}< \left(\sqrt{1-r^2_i(\alpha_{k,j+1})}-\sqrt{1-r^2_i(\alpha_{k,j})}\right)\epsilon'_{2,i}\right\}.
	$$
	Clearly, $E_{i,k,j}^+$ are disjoint for $0\le j\le n_k$ and $E_{i,k,j}^-$ are disjoint for $0\le j\le n_k$, which leads to 
	$$
	\sum_{j=0}^{n_k}\PP(E_{i,k,j}^+)\le 1\qquad {\rm and}\qquad  \sum_{j=0}^{n_k}\PP(E_{i,k,j}^-)\le 1.
	$$
	This suggests that 
	$$
	\sum_{j=0}^{n_k}{1\over |I_0|}\sum_{i\in I_0}\PP(E_{i,k,j})\le 2
	$$
	Since the choice of $\alpha_{k,j}$ suggest that $|I_0|^{-1}\sum_{i\in I_0}\PP(E_{i,k,j})=t/3$, we can conclude that $n_k\le 6/t$.
	
	Then, we consider a partition of $\Bcal$, i.e. $\Bcal_{k,j}$ is
	$$
	\Bcal_{k,j}=\left\{\sum_{i\in I_0}B_{i}(\alpha,q): q_k\le q<q_{k+1},\alpha_{k,j}\le \alpha< \alpha_{k,j+1}\right\}.
	$$
	After we define the sets
	$$
	E^U_{i,k,j}=\bigcup_{q_k\le q<q_{k+1},\alpha_{k,j}\le \alpha< \alpha_{k,j+1}}E_i(\alpha, q)\qquad {\rm and}\qquad E^L_{i,k,j}=\bigcap_{q_k\le q<q_{k+1},\alpha_{k,j}\le \alpha< \alpha_{k,j+1}}E_i(\alpha, q),
	$$
	we then define functions
	$$
	B_{k,j}^U=\sum_{i\in I_0}I(E^U_{i,k,j})\qquad {\rm and}\qquad B_{k,j}^L=\sum_{i\in I_0}I(E^L_{i,k,j}).
	$$
	By the definition, we can know that for any $B\in \Bcal_{k,j}$, $B_{k,j}^L\le B\le B_{k,j}^U$. Furthermore, as
	$$
	E^U_{i,k,j}\setminus E^L_{i,k,j}\subset E_{i,k,j}\bigcup\left(E_i(\alpha_{k,j}, q_{k+1})\setminus E_i(\alpha_{k,j}, q_k)\right)\bigcup\left(E_i(\alpha_{k,j+1}, q_{k+1})\setminus E_i(\alpha_{k,j+1}, q_k)\right),
	$$
	we have 
	\begin{align*}
		&D_\Delta(B_{k,j}^L,B_{k,j}^U)\\
		=&{1\over |I_0|}\sum_{i\in I_0}\PP\left(E^L_{i,k,j}\setminus E^U_{i,k,j}\right)\\
		\le&{1\over |I_0|}\sum_{i\in I_0}\PP\left(E_{i,k,j}\right)+\PP(E_i(\alpha_{k,j}, q_{k+1})\setminus E_i(\alpha_{k,j}, q_k))+\PP(E_i(\alpha_{k,j+1}, q_{k+1})\setminus E_i(\alpha_{k,j+1}, q_k))\\
		\le & {t\over 3}+{t\over 3}+{t\over 3}\\
		\le & t
	\end{align*}
	Now, we can conclude that 
	$$
	N_B(\Bcal,D_\Delta,t)\le {6\epsilon\over t}\times {6\over t}={36\epsilon\over t^2}.
	$$
\end{proof}

\end{document}